%% file: main.tex
\documentclass[sigconf]{aamas} 
\usepackage{balance} 

\renewcommand\footnotetextcopyrightpermission[1]{} 
\setcopyright{ifaamas}
\acmConference[AAMAS '26]{Proc.\@ of the 25th International Conference
on Autonomous Agents and Multiagent Systems (AAMAS 2026)}{May 25 -- 29, 2026}
{Paphos, Cyprus}{C.~Amato, L.~Dennis, V.~Mascardi, J.~Thangarajah (eds.)}
\copyrightyear{2026}
\acmYear{2026}
\acmDOI{}
\acmPrice{}
\acmISBN{}

\acmSubmissionID{1680}

\title{Learning Robust Markov Models for Safe Runtime Monitoring}

\author{Antonina Skurka}
\affiliation{
  \institution{Chalmers University of Technology \\and University of Gothenburg}
  \city{Gothenburg}
  \country{Sweden}}
\email{skurka@chalmers.se}

\author{Luko van der Maas}
\affiliation{\institution{Radboud University}\city{Nijmegen}\country{the Netherlands}}
\email{luko.vandermaas@ru.nl}

\author{Sebastian Junges}
\affiliation{\institution{Radboud University}\city{Nijmegen}\country{the Netherlands}}
\email{sebastian.junges@ru.nl}

\author{Hazem Torfah}
\affiliation{
  \institution{Chalmers University of Technology \\and University of Gothenburg}
  \city{Gothenburg}
  \country{Sweden}}
\email{hazemto@chalmers.se}

\input{abstract}
\keywords{Model-based Runtime Monitoring, Decision-Making under Uncertainty, Learning Interval Hidden Markov Models}

\usepackage{tikz}
\usetikzlibrary{intersections,positioning,shapes,positioning,calc,arrows.meta,shapes.geometric,angles,automata}
\usepackage{varwidth}
\usepackage{amsmath}
\usepackage{amsthm}

\usepackage{amssymb}
\usepackage{wasysym}
\usepackage{stackengine}
\usepackage{mdframed}
\usepackage{cleveref}
\usepackage{booktabs}
\usepackage{nicefrac}
\usepackage{xspace}
\usepackage{multirow}
\usepackage{graphicx}
\usepackage{subcaption} 
\usepackage{dblfloatfix}
\usepackage{caption}
\usepackage{subcaption}
\usepackage{wrapfig}
\usepackage{enumitem}

\DeclareMathOperator*{\argmin}{arg\,min}

\newtheorem{definition}{Definition}
\newtheorem{theorem}{Theorem}
\newtheorem{lemma}{Lemma}
\newtheorem{corollary}{Corollary}

\newcommand{\mc}{{\textsf H}}
\newcommand{\imc}{\textsf{iH}}

\newcommand{\nn}{\mathbb{N}}
\newcommand{\mon}{M}
\newcommand{\suo}{\text{SuO}\xspace}

\newcommand{\variable}[1]{{V_{\text{#1}}}}
\newcommand{\vsys}{\variable{sys}}
\newcommand{\vobs}{\variable{obs}}

\newcommand{\letters}[1][]{\Sigma_{\text{#1}}}
\newcommand{\sobs}{\letters[obs]}
\newcommand{\ssys}{\letters[sys]}

\newcommand{\tsys}[1][]{\pi#1}

\newcommand{\traces}[1]{{#1^*}}
\newcommand{\tsobs}{\traces{\sobs}}
\newcommand{\tssys}{\traces{\ssys}}

\crefname{figure}{Fig.}{Figs.}

\begin{document}


\pagestyle{fancy}
\fancyhead{}

\maketitle

\input{intro}
\section{Problem statement}
\label{sec:problem}

\input{problem}

\section{Monitors for \MakeLowercase{i}HMMs}
\label{sec:inferenceihmm}
\input{monitoring}
\section{Learning Ideal Monitor Models}
\label{sec:learning}

\input{learning}

\input{experimental_evaluation}

\input{related}

\section{Conclusion}
\input{conclusion}

\section*{Acknowledgments}
\input{acknowledgements}
\balance

\bibliographystyle{ACM-Reference-Format}
\bibliography{references}  

\pagebreak

\clearpage

\appendix


\section{Extended Discussion on Refinement Framework}
\input{appendix_refinement}


\section{Additional experimental results}

\input{appendix}

\end{document}

%% file: abstract.tex
\begin{abstract}

We present a model-based approach to learning robust runtime monitors for autonomous systems. Runtime monitors play a crucial role in raising the level of assurance by observing system behavior and predicting potential safety violations. In our approach, we propose to capture a system's (stochastic) behavior using interval Hidden Markov Models (iHMMs). The monitor then uses this learned iHMM to derive risk estimates for potential safety violations. 
The paper makes three key contributions: (1) it provides a formalization of the problem of learning robust runtime monitors, (2) introduces a novel framework that uses conformance-testing-based refinement for learning robust iHMMs with convergence guarantees, 
 and (3) presents an efficient monitoring algorithm for computing risk estimates over iHMMs. 
Our empirical results demonstrate the efficacy of monitors learned using our approach, particularly when compared to model-free monitoring approaches that rely solely on collected data without access to a system model.
\end{abstract}

%% file: intro.tex
\section{Introduction}
\label{sec:introduction}

Runtime monitoring is vital for raising the assurance in cyber-physical systems (CPS) \cite{Bartocci2018,modelplex}. Runtime monitors (RMs) continuously observe the system for possibly unsafe situations. When such situations occur, RMs raise an alarm and may consequently trigger backup fail-safe plans to keep the system safe. In modern autonomous CPS, RMs become even more essential to ensure safety, in particular when integrating machine learning (ML) components~\cite{DBLP:journals/corr/abs-2405-06624}. 
In these settings, RMs must deal with various kinds of uncertainty, both in the decisions made by ML models and in the environments in which these systems operate. Often, system dynamics are probabilistic, and the environment state cannot be fully observed.  RMs must take these aspects into account when predicting safety violations, while balancing false positives (too many alarms) and false negatives (dangerous situations go unnoticed). The type of information about the uncertainty in the system available to a monitor has a great impact on achieving this balance. 








Consider an example of an autonomous plane approaching a runway with ground traffic. The plane is equipped with camera-based perception sensors. A key safety-critical routine is to observe the behavior of ground traffic and decide that either it is safe to land or that the landing shall be aborted. A safety specification expresses a minimum distance between plane and ground traffic at all times. There will always be uncertainty about the behavior of vehicles, but also in the perceived sensor data, due to induced noise. A RM must (implicitly or explicitly) estimate the risk level: the probability that the specification will be violated during the landing, and if so, by how much. The RM must then raise an alarm if this risk level exceeds some threshold. The accuracy of the monitor will highly depend on what information one can provide about the (stochastic) behavior of the on-ground vehicles and sensors. 

In this paper, we study the problem of constructing monitors that robustly estimate the risk of violating a safety specification under uncertainty.  
We specifically introduce a novel model-based approach for learning \emph{robust} RMs. In our approach, we propose  a data-driven method for learning robust uncertainty models in the form of \emph{interval Hidden Markov Models (iHMMs)}, a variant of hidden Markov models where transition probabilities are represented as intervals rather than fixed values. On top of the learned iHMMs, we develop an efficient monitoring algorithm for cautiously (or conservatively) estimating risk levels.
Our approach begins with a refinement-based learning procedure that gradually constructs an iHMM from simulation runs collected from the system under observation. The resulting model is then integrated into a monitoring framework capable of estimating risks during system execution.  The general flow of learning iHMMs and integrating them into a monitoring framework is depicted in \Cref{fig:flow}.

Our approach fills a gap in the current state of the art in learning RMs. The state of the art can be broadly categorized into two types of approaches.
The first assumes a \emph{known} model that captures the uncertainty of the system and environment using a predefined model. Typically, the models are hidden Markov models (HMM) or  variants~\cite{DBLP:conf/cav/JungesTS20,babaee2018predictive,camilli2021}. A key challenge in this setting is acquiring sufficiently accurate models.
The second category of approaches, the \emph{model-free methods},  avoids this challenge. Such methods learn classifiers directly from data, without relying on an explicit system model~\cite{cairoli2021neural,DBLP:conf/iccps/ZhaoHFDL24,torfah-rv23}. While promising, model-free approaches typically suffer from high sample complexity and often tend to have low accuracy, as they learn a monitor purely from observable data and lack deeper structural knowledge of the system. 

In our work, we advocate for \emph{learning models} and propose a method to overcome the challenge of acquiring uncertainty models that prevents the application of model-based approaches for RMs. 
Specifically, we present a data-driven strategy to learn an iHMM that serves as a feasible abstraction of the system. We then construct a RM that reasons (logically) about this learned abstraction. While the monitor still operates on sequences of observations, its decision-making is now grounded in the learned iHMM, effectively combining data-driven learning with model-based reasoning.
Our interest in model-based learning approaches is inspired by their success in other areas. One may consider monitoring as a kind of sequential decision making under uncertainty (SDM), where the model-based approaches are dedicated (classical) planners and the model-free approaches are dedicated data-driven variations of reinforcement learning~\cite{DBLP:journals/ftml/MoerlandBPJ23}.
It has been shown that model-based approaches have great impact in increasing safety \cite{10.5555/3294771.3294858}. Furthermore,  in SDM, model-based learning approaches are known to be more sample efficient. 
Indeed, in our experiments, we show that given the same amount of data, our models can reach high safety and accuracy, unmatched by the model-free approaches. 

        
        

\begin{figure}
    \centering
    \input{flow.tex}
    \caption{Learning and integrating robust monitors in CPS}
\label{fig:flow}
\Description{}{}
\end{figure}
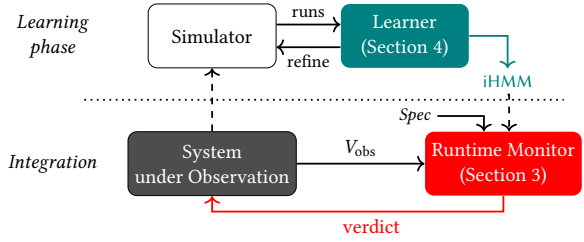

The iHMMs 
allow us to explicitly model uncertainty about the transition dynamics.
We present a refinement-based method for learning iHMMs. Here, the model is updated iteratively by linearly adapting the probability intervals~\cite{suilen2022robust}. Our refinement approach incorporates a feedback loop driven by conformance testing. This loop assesses the uncertainty in monitoring decisions over the so-far learned models and guides the learning process by refining the model in areas where it underperforms. 
We particularly prove that the refinement converges to an optimal model with increasing numbers of refinements. 
Lastly, to enable efficient monitoring, we further introduce a tractable monitoring approach over iHMMs, inspired by~\cite{DBLP:conf/cav/JungesTS20}. Our monitoring procedure runs in polynomial time with respect to the length of observation sequences. 
We evaluate our approach on a set of benchmarks from the area of autonomy. They  particularly
confirm the advantages of the model-based RM setting against model-free approaches. 

%% file: flow.tex
    \scalebox{0.85}{
\begin{tikzpicture}

    \draw[dotted,thick] (0,0) -- (7.5,0);

    \node[rotate=0, align=center](phase1) at (-0.5,1){\it Learning\\ \it phase};

    \node[rotate=0](phase1) at (-0.5,-1){\it Integration};

    \node[draw, rounded corners, minimum height=1cm, fill=black!70, text=white, align=center] (sys) at (2,-1){System \\under Observation};

     \node[ rounded corners, minimum height=1cm, minimum width = 2cm, right = 2cm of sys, align= center, fill = red, text = white] (monitor){ Runtime Monitor \\(\Cref{sec:inferenceihmm})};

     \node[above left = 0.0 and -0.2 of monitor](spec){\small \it Spec};

     \path[draw,thick,->] (spec.east) -| (monitor.121);

     \path[draw, thick, ->](sys.east)  -- node[above]{$\vobs$}(monitor.west);

     \path[draw, ->, thick, red](monitor.south) |- node[below left= 0 and 1.5cm,red]{verdict}(3,-1.75) -| (sys.south);

     \node[draw, rounded corners, minimum height=1cm, minimum width =2cm, align=center] (sim) at (2,1){Simulator};

     \path[draw, dashed, thick, ->](sys.north)  -- node[above]{}(sim.south);

     \node[rounded corners, minimum height=1cm, minimum width =2cm, align=center, fill=teal, text=white, right = 1cm of sim] (learn){Learner\\ (\Cref{sec:learning})};

     \path[draw,->,thick] (sim.10) -- node[above]{\small runs}(learn.170);

     \path[draw,->,thick] (learn.190) -- node[below]{\small refine}(sim.350);

     \node[below right= 0 and 0.1 of learn, teal](model){\small \sf iHMM};

     \path[draw,thick,->,teal] (learn.east) -| (model.north);

     \path[draw,dashed,->,thick](model.south) -| (monitor.80);
    
\end{tikzpicture}
}

%% file: problem.tex
Our goal is to construct a  monitor that raises an alarm when a safety violation is imminent. We operate in a setting where both the system dynamics and the observations made by the monitor are subject to uncertainty. 

We first give key notation and definitions.  
For a set $X$, we use $X^*$ to define the set of finite sequences over $X$, and we use $X^{\leq h}$ to denote the set of sequences up to length $h$. 
In a state-based system, 
the set of states $\ssys$ is defined in terms of valuations over a set of system variables $\vsys$.  The monitor does not necessarily observe valuations over $\vsys$, but rather over $\vobs$ - the set of observation variables (see \Cref{fig:flow} - Integration). In the remainder, we distinguish between system and observed executions, referring to sequences in $\tssys$ as \emph{paths} and sequences in $\tsobs$ as \emph{traces}. For a countable set $X$, let $\textit{Dist}(X) \subset (X \rightarrow [0,1])$ define the set of distributions over~$X$. A distribution $d\in \textit{Dist}(X)$ particularly satisfies, $\sum_{x\in X} d(x) = 1$.

\begin{definition} [System under Observation (\suo)]
\label{def:suo}
A discrete-time system is a tuple $\mathcal{S} = (\vsys,\vobs, d, \textit{obs})$, 
where $\vsys$ and $\vobs$ are sets of system and observation variables, the dynamics $d$ is a distribution over  $\tssys$, and $\mathit{obs}\colon \tssys \rightarrow \mathit{Dist}(\tsobs)$ is a mapping between paths and distributions over observation traces.
\end{definition}

\noindent A system under observation implicitly captures the interaction between a system and its environment. In the above definition, $d$ defines the probability of a path occurring, and thus captures the uncertainty in environment behavior. The observation mapping $\mathit{obs}$ captures the (noisy) behavior of the sensors, which for a given path might output different traces, depending on its stochastic behavior. 

\begin{definition}[Specification] We define safety constraints as finite linear-time properties \cite{princmc,ltlf}. A specification $\varphi$ over $\ssys$ is a set $\varphi \subseteq \tssys$. If $\pi \in \varphi$, we say $\pi$ \emph{satisfies} $\varphi$,  denoted also by $\pi \models \varphi$.
\end{definition}
\noindent Based on the latter definition of safety, we can define the probability of violating a safety specification. We call this a risk function~\cite{DBLP:conf/cav/JungesTS20}\footnote{Risk is usually defined as the probability of violation times the impact. In fact, the definition can be easily refined to also account for differences in impact.}.
\begin{definition}[Risk Function] Given $\mathcal{S} = (\vsys,\vobs, d, \mathit{obs})$ and a specification $\varphi$ over $\ssys$, a  risk function $r^h_\varphi\colon \tssys \rightarrow [0,1]$, for a horizon $h\in \mathbb N^+$, gives a probability of violating $\varphi$ within a horizon of $h$ steps for  a given initial path $\pi \in \tssys$. We set $r^h_\varphi(\tsys) = 0$ if $d(\pi) = 0$ and otherwise
\begin{align*}
r^h_\varphi(\tsys) \!= \!Pr(\{ \pi' \!\in \ssys^{\leq h} \mid \pi \!\cdot\! \pi' \!\not \models \varphi \}) \!=
\!\frac{1}{ d(\pi)} \!\!{\sum \limits_{\pi' \in \ssys^{\leq h}} \!\!\!(\pi \!\cdot \!\pi' \not \models \varphi) \!\cdot \! d(\pi \!\cdot \!\pi')}    
\end{align*}

\end{definition}
\noindent The expression on the right is derived by applying Bayes' law on the conditional probability $d(\pi' \mid \pi)$. The specification and risk function are defined on the system level and it is in general not possible to lift them to the observation level without losing preciseness. 
In terms of the landing scenario from the introduction, we can define the risk of a (near-)collision based on the trajectories of the plane and the ground vehicles, however, a monitor does not know the true locations of the ground vehicles, but only their perceived locations. That is, the challenge in monitoring now is that we cannot observe the system path directly, and thus it is impossible to directly compute the associated risk. Next, we address this challenge by relating the system state space with the observations of the monitor. 


A monitor for a \suo is simply a function $\mon\colon \tsobs \to [0,1]$, assigning to each trace in a system a value between 0 and 1, giving a risk of an observation trace. 
To relate monitors to the system-level specification, we introduce the notion of an ideal monitor.
\begin{definition}[Ideal Monitor]
\label{def:ideal-mon}
Given $\mathcal{S} = (\vsys,\vobs, d, \textit{obs})$ and a specification $\varphi$, for a horizon $h$, and a loss function $L^{\varphi, h,\mathcal{S}}$, the ideal monitor $\mon ^{*} \colon \tsobs \to [0,1]$ is one s.t. 
 \[\mon ^{*} \in \argmin_{\mon '\colon\tsobs\to [0,1]} L^{\varphi, h,\mathcal{S}}(\mon').\]
\end{definition}
\noindent The notion of the ideal monitor $\mon^{*}$, and consequently also the solution to the problem statement below, are dependent on a specific instantiation of the loss function. In this work, we consider the mean square error loss function. For some monitor $\mon \in \mathcal{M}$: 
\begin{align*}
&  L_{\sf{MSE}}^{\varphi, h,\mathcal{S} }(\mon) =  
\sum_{\tau \in \tsobs}\!\!\!Pr(\tau)\!\!\sum_{\pi \in \tssys}\!\!Pr(\pi \mid \tau) \big(\mon (\tau) - r^{h}_{\varphi}(\pi)\big)^2
\end{align*}
\noindent with
$
 Pr(\tau) \!=\! \sum_{\pi \in \tssys} \!\! d(\pi) \cdot obs(\pi)(\tau)$ and $
 Pr(\pi \!\mid \! \tau)\! = \!\frac{d(\pi) \cdot obs(\pi)(\tau)}{Pr(\tau)}
$.

Using the notion of an ideal monitor, we also define the set of \emph{cautious monitors}, i.e., monitors which overestimate the risk compared to the ideal monitor. 
\begin{definition}[Cautious Monitors]
    Given \suo $\mathcal{S}$ and ideal monitor $\mon^{*}$, the monitor $\mon$ is cautious, if $\mon^{*}(\tau) \leq \mon(\tau)$ for all $\tau\in\tsobs$. We call the set of cautious monitors $\mathcal{M}^C$.
\end{definition}

\noindent {\it Remark.} The ideal monitor may still exhibit wrong risk assessments, due to the mismatch between the system and observation world, and we only ask  a cautious monitor to have risk assessments that are at least  as conservative as the ideal monitor. 

Ideal monitors may not be concisely representable or match other constraints that are important for monitors to be deployable to a system~\cite{modd}. We assume there exists a class of admissible monitors that implement a function $\tsobs\to [0,1]$ and a distance function $\delta$ comparing two monitors.
The goal of this paper is to algorithmically find an admissible cautious monitor that is closest to an ideal monitor:


\vspace{1em}
\begin{mdframed}
\noindent \textbf{Problem Statement } Given 
an ideal monitor $\mon^{*}$, 
a set of admissible monitors $\mathcal{M}$, a set of cautious monitors $\mathcal{M}^C$, and a distance function $\delta \colon (\tsobs\to [0,1])^2  \rightarrow \mathbb{R} $,
find a monitor $\mon \in \mathcal{M} \cap \mathcal{M}^C$, s.t.: 
\[\mon \in \argmin_{\mon' \in \mathcal{M} \cap \mathcal{M}^C} \delta (\mon^{*}, \mon')\]
\end{mdframed}

The distance function $\delta$ can be instantiated in different ways, in this paper, we use the average distance over all traces.


Key to solving the above problem is the access to the ideal monitor $\mon ^{*}$. This, in turn, depends heavily on the ability to translate a system-level risk function to the observation level, i.e., having an accurate representation of the function $\mathit{obs}$ (see \Cref{def:ideal-mon}).

%% file: monitoring.tex
In this section, we study SuOs that are described by a (known) hidden Markov model (HMM). We provide the necessary background, define (ideal) monitors on HMMs and then discuss how to account for uncertainty about the transition probabilities in the HMM by defining monitors based on so-called \emph{interval} HMMs. 


\subsection{(Interval) Hidden Markov Models}
\begin{definition}[Hidden Markov Model (HMM)]\label{def:hmm} 
An HMM is a tuple $\mc = (S, \iota, P, Z, \mathcal{O})$, 
where $S$ is a finite set of states, $\iota\colon S \rightarrow [0,1]$ is an initial distribution, $P\colon S \times S \rightarrow [0,1]$ is a transition probability function, and $\mathcal{O} \colon S \rightarrow Distr(Z)$ is a observation function, where Z is a set of observations. 
The transition probability function is such that for every $s\in S$, it holds that $\sum_{s'\in S} P(s,s') = 1$.
\end{definition} 

\noindent A path in a HMM $\mc$ is a finite 
sequence $\pi_{\mc} = s_{0} s_{1}\dots s_{n}$, where for all $0 \leq i < n$, $P(s_{i},s_{i+1})>0$. Let $\Pi_{\mc}$ denote the set of paths of $\mc$. Let $\Pi_{\mc}^{n}$ denote the set of paths of $\mc$ of length $n \in \nn$. An observation trace of $\mc$ associated with some path $\pi_{\mc}$ is a sequence $\tau_{\mc} = z_{0}z_{1} \dots z_{n}$, where for all $0\leq i\leq n$, $\mathcal{O}(s_{i})(z_{i})>0$. Let $T_{\mc}$ denote the set of observation traces. Let $T_{\mc}^{n}$ denote the set of observation traces of length $n \in \nn$. Note that there exists a mapping $obs_{\mc}\colon \Pi_{\mc} \rightarrow Distr(T_{\mc})$  that assigns to each path a distribution over observation traces, induced by the labeling function $\mathcal{O}$. 
A system under observation is modeled by an HMM $\mc$, if the set of states of $\mc$ is defined over $V_{sys}$, the observation function of $\mc$ maps to the distribution over $V_{obs}$, and the mapping $obs_{\mc}$ is equivalent to $obs$ for the system. Consequently, the sets of paths and traces between the system and the model coincide. Throughout the remainder of this work, we use $\Pi$ and $T$ as the sets of paths and traces of both the system and $\mc$ that models it.

\begin{definition}[Interval Hidden Markov Model (\MakeLowercase{i}HMM)]
\label{def:ihmm} 
An Interval Hidden Markov Model is a tuple $\imc = (S, \iota_{l},\iota_{u}, P_{l},P_{u}, Z, \mathcal{O})$, where $S,Z$, and $\mathcal{O}$ are like in $\mc$, $\iota_{l},\iota_{u}:S \rightarrow [0,1]$, with $\iota_{l}(s) \leq \iota_{u}(s)$ for all $s\in S$, are lower and upper bounds on the initial distribution and $ P_{l},P_{u}\colon S \times S \rightarrow [0,1]$, with $P_{l}(s,s') \leq P_{u}(s,s')$ for all $s,s'\in S$, are lower and upper bounds on transition probabilities. We lift notions like paths and traces from HMMs to iHMMs. 
\end{definition}

An iHMM $\imc$ is \emph{refined} by an HMM $\mc$ if the have the same set of states and  set of observations and  $\iota_{l}(s) \leq \iota(s) \leq \iota_{u}(s)$ for all and $s \in S$ and $P_{l}(s, t) \leq P(s, t) \leq P_{u}(s, t)$ for all pairs of states $s, t \in S$. Let $\mathcal{H}_\imc$ be the set of all HMMs that refine a given iHMM $\imc$.

\subsection{Monitors for HMMs}
\label{sec:HMMmonitoring}
A monitor over an HMM $\mc$ is a function $\mon_{\mc}\colon T_\mc \to [0,1]$. Forward filtering~\cite{DBLP:conf/rv/StollerBSGHSZ11,DBLP:conf/rv/WilcoxW10} is an implementation of such monitor, which computes the expected value for a risk function $r^h_\varphi$, conditioned on a trace: 
$\mathrm{R}_{\mc}(\tau) = \mathbb{E}_\pi\left[r^h_\varphi(\pi)\mid\tau\right].$
If we choose our loss function to be the MSE,
then the expected value is the optimal estimator for our chosen loss function \cite{pishro2014introduction}, and  thus the optimal monitor for $\mc$. 

When $\mc$ models a \suo, $\mathrm{R}_{\mc}$ corresponds to the ideal monitor for $L_{MSE}^{\varphi,h,\mc}$. We denote this monitor by $\mon^{*}_\mc$.


\begin{lemma} \label{lem:idealmonitorhmm}
    Given an HMM $\mc$, modeling a \suo $\mathcal S$, a specification $\varphi$, and horizon $h$, the ideal monitor for $L^{\varphi,h,\mc}$ is $\mon^{*}_\mc$.
\end{lemma}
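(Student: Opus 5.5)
The plan is to show that the MSE loss decomposes tracewise, and that on each trace the minimiser is the conditional expectation $\mathrm{R}_\mc(\tau)$. Since $\mc$ models $\mathcal S$, the paths and traces of system and model coincide, and $obs_\mc = obs$. So $Pr(\tau)$ and $Pr(\pi\mid\tau)$ in $L_{\sf MSE}^{\varphi,h,\mc}$ are exactly the quantities induced by $\mc$, namely $d(\pi)$ as the path probability in $\mc$ and $obs_\mc(\pi)(\tau)$ as the observation likelihood. With this,
\[
L_{\sf MSE}^{\varphi,h,\mc}(\mon) = \sum_{\tau\in T} Pr(\tau)\, \ell_\tau(\mon(\tau)), \qquad \ell_\tau(x) = \sum_{\pi\in\Pi} Pr(\pi\mid\tau)\big(x - r^h_\varphi(\pi)\big)^2 .
\]
Because a monitor $\mon\colon T\to[0,1]$ may choose each value $\mon(\tau)$ independently, minimising the sum amounts to minimising each $\ell_\tau$ separately over $x\in[0,1]$ whenever $Pr(\tau)>0$. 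Traces with $Pr(\tau)=0$ do not contribute, and any value is optimal there; in particular the value of $\mon^*_\mc$ is.

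For a fixed $\tau$ with $Pr(\tau)>0$, I expand $\ell_\tau(x) = x^2 - 2x\,\mathbb E_\pi[r^h_\varphi(\pi)\mid\tau] + \mathbb E_\pi[r^h_\varphi(\pi)^2\mid\tau]$, using $\sum_\pi Pr(\pi\mid\tau)=1$. Completing the square gives
\[
\ell_\tau(x) = \big(x-\mathrm R_\mc(\tau)\big)^2 + \mathrm{Var}\big(r^h_\varphi(\pi)\mid\tau\big),
\]
which is minimised exactly at $x=\mathrm R_\mc(\tau)$. This point lies in $[0,1]$ because $r^h_\varphi$ takes values in $[0,1]$, so the constraint on the codomain of monitors is not binding. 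Hence $\mon^*_\mc=\mathrm R_\mc$ lies in the $\argmin$ of \Cref{def:ideal-mon}. It is moreover the unique minimiser on every trace of positive probability.

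The main obstacle is the bookkeeping that makes the sums well defined. $T$ and $\Pi$ are countably infinite, so I need the rearrangement into per-trace terms to be justified. All terms are nonnegative, so Tonelli permits it, and the loss is finite since it is bounded by $1$. I also need $Pr(\pi\mid\tau)$ to be the posterior actually used by forward filtering in $\mc$. This holds by construction, since $\mathrm R_\mc(\tau)=\sum_\pi Pr(\pi\mid\tau)\, r^h_\varphi(\pi)$ with the same Bayes posterior. Finally, $r^h_\varphi$ is set to $0$ on paths with $d(\pi)=0$, but such paths receive zero posterior weight, so they affect neither side.
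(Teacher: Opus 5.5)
Your proposal is correct and follows the paper's reasoning. The paper gives no separate proof; it relies on the standard fact, cited from a probability textbook, that the conditional expectation is the optimal estimator under mean squared error. Your tracewise decomposition and completing-the-square argument is that fact proved explicitly, and you also handle the zero-probability traces and the summability of the countable sums.
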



\subsection{Monitors for iHMMs}

\label{sec:iHMMmonitoring}
Monitoring iHMMs follows the same general principles as monitoring HMMs, but with an additional challenge: an iHMM $\imc$ corresponds to a set of possible HMMs, $\mathcal{H}_\imc$. When computing the risk of a trace, we must therefore decide which $\mc \in \mathcal{H}_\imc$ to use. To remain cautious, we  overestimate the risk of a trace. 

\begin{definition}[iHMM Monitor] \label{def:ihmmmonitor}
    Given an iHMM $\imc$, specification $\varphi$, horizon $h$, observation trace $\tau$, the iHMM monitor is defined as  $\mon_\imc(\tau)=\max_{\mc \in \mathcal{H}}\mathrm{R}_\mc(\tau)$.
\end{definition}

\paragraph{Relation to cautious monitors}
A safety specification $\varphi$, horizon $h$, and a \suo $\mathcal{S}$, which is modeled by a HMM $\mc$, induce a risk function $r_\varphi^h$ on states, an ideal HMM monitor $\mon^{*}_{\mc}$, and a set of cautious monitors $\mathcal{M}^C$. 
Any iHMM monitor $\mon_\imc$, where  $\imc$ is refined by our system model $\mc$, is a cautious monitor.

\begin{lemma} \label{lem:safeihmmmonitor}
    Given a SuO which is modeled by a HMM $\mc$, any iHMM $\imc$ that is refined by $\mc$ induces a cautious monitor $\mon_{\imc} \in \mathcal{M}^C$.
\end{lemma}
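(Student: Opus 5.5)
The argument is a direct chain of inequalities, using \Cref{lem:idealmonitorhmm} to identify the ideal monitor and \Cref{def:ihmmmonitor} to bound it. Fix a SuO $\mathcal{S}$ modeled by the HMM $\mc$, a specification $\varphi$, and a horizon $h$. By \Cref{lem:idealmonitorhmm}, the ideal monitor for $L^{\varphi,h,\mc}_{\sf MSE}$ is $\mon^{*}_\mc = \mathrm{R}_\mc$. Here $\mathrm{R}_\mc(\tau)=\mathbb{E}_\pi[r^h_\varphi(\pi)\mid\tau]$ is the forward-filtering estimate computed in $\mc$. Since $\mc$ models $\mathcal S$, the paths, traces, $obs$ and $d$ of $\mc$ coincide with those of the system. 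Hence the MSE loss and its minimiser $\mon^{*}$ from \Cref{def:ideal-mon} coincide with $\mon^{*}_\mc$ pointwise on $T$.

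Next, let $\imc$ be any iHMM refined by $\mc$. By the definition of refinement, $\mc \in \mathcal{H}_\imc$. Then for every trace $\tau\in T$,
\[
\mon^{*}(\tau) = \mathrm{R}_\mc(\tau) \le \max_{\mc' \in \mathcal{H}_\imc} \mathrm{R}_{\mc'}(\tau) = \mon_\imc(\tau),
\]
using \Cref{def:ihmmmonitor}. This is exactly the defining condition for $\mon_\imc \in \mathcal{M}^C$.

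Two points need care.

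\textbf{Well-definedness of the maximum.} This is the main technical point. The set $\mathcal{H}_\imc$ is a compact polytope of transition and initial parameters, since it is cut out by finitely many closed interval and simplex constraints. It is nonempty because it contains $\mc$. The map $\mc'\mapsto \mathrm{R}_{\mc'}(\tau)$ is a ratio of polynomials in these parameters, continuous wherever $Pr_{\mc'}(\tau)>0$. Traces of probability zero must be handled by the convention $r=0$ or $\mathrm{R}=0$. If the maximum is not attained, the inequality still holds with $\sup$, because $\mc$ is a member of the set. So I would argue with the supremum, which makes attainment irrelevant for the lemma.

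\textbf{The risk function inside each $\mathrm{R}_{\mc'}$.} I take $\mathrm{R}_{\mc'}(\tau)$ to mean the expectation under $\mc'$ of $\mc'$'s own risk. Even so, the chain above needs only one fact: the term indexed by $\mc'=\mc$ equals $\mon^{*}(\tau)$. This holds because $\mc$ models $\mathcal S$, so its risk function is the system's $r^h_\varphi$.

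Finally, traces outside $T_\mc$ (probability zero) are covered by checking that both sides default to $0$. With that, the inequality holds for all $\tau\in\tsobs$, which completes the plan.
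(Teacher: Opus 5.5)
Your proof is correct and is essentially the paper's argument: since $\mc\in\mathcal{H}_{\imc}$, the maximum defining $\mon_{\imc}(\tau)$ dominates the term $\mathrm{R}_{\mc}(\tau)=\mon^{*}_{\mc}(\tau)$, and your remarks on $\sup$ versus $\max$, on which risk function sits inside $\mathrm{R}_{\mc'}$, and on null traces are extra care the paper omits (it only argues for $\tau\in T_{\mc}$). One small slip: on a trace of probability zero under $\mc$ the iHMM side need not be $0$, because another refinement $\mc'\in\mathcal{H}_{\imc}$ may give that trace positive probability; the inequality still holds, since with the convention you adopt $\mon^{*}(\tau)=0\le\mon_{\imc}(\tau)$.
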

\begin{proof}
    We know that $\mc\in \mathcal{H}_\imc$. Thus, for every trace $\tau\in T_\mc$,
    \begin{align*}
        \mon_\imc(\tau) \!=\! \max_{\mc' \in \mathcal{H}} \!\!\sum_{\pi\in\Pi^{|\tau|}_{\mc'}}\!\!\!Pr_{\mc'}(\pi\!\mid\!\tau)\!\cdot\! r^{h}_{\varphi}(\pi) \geq \!\!\sum_{\pi\in\Pi^{|\tau|}_\mc}\!\!Pr_\mc(\pi\mid\tau)\cdot r^{h}_{\varphi}(\pi) \!=\! \mon^{*}_\mc
    \end{align*}
    satisfying the condition for $\mon_\imc\in\mathcal{M}^C$.
\end{proof}

We show now that the risk can be computed in polynomial time by model checking a maximum bounded reachability property on the iHMM. Using this, the iHMM monitoring verdicts are computable in polynomial time using the unrolling algorithm from \cite{DBLP:conf/cav/JungesTS20}.
\begin{theorem}\label{thm:ptime}
    The decision problem $\mon_\imc(\tau) > \lambda$ for a threshold $\lambda \geq 0$ is computable in polynomial time. 
\end{theorem}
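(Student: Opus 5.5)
Following the paper's hint, I plan to reduce $\mon_\imc(\tau)$ to a maximal bounded reachability probability in an interval Markov chain built by unrolling $\imc$ along $\tau$, in the style of \cite{DBLP:conf/cav/JungesTS20}. The difficulty is that $\mathrm{R}_\mc(\tau)$ is a \emph{conditional} expectation, $\mathrm{R}_\mc(\tau)=\frac{\sum_{\pi} Pr_\mc(\pi,\tau)\, r^h_\varphi(\pi)}{Pr_\mc(\tau)}$, so maximizing over $\mc\in\mathcal{H}_\imc$ means maximizing a ratio of two polynomials in the transition probabilities, not a plain reachability probability. The strategy is therefore: (i) decide $\mon_\imc(\tau)>\lambda$ via the sign of a linear combination, and (ii) show this sign question is a bounded reachability problem over an interval model, solvable by backward value iteration with an LP (or greedy sorting) per state and step.

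\textbf{Step 1 (unrolling).} For $\tau=z_0\dots z_n$, I will build a layered model with states $(s,i)$, $0\le i\le n$. Each transition $(s,i)\to(s',i+1)$ keeps its interval $[P_l(s,s'),P_u(s,s')]$, and entering $(s',i+1)$ is weighted by $\mathcal{O}(s')(z_{i+1})$. The emission weights are fixed numbers; only the transitions are uncertain. After layer $n$, I append the horizon-$h$ lookahead with terminal reward $r^h_\varphi$. In the standard reduction I take $\varphi$ to be a state-based safety property, so the risk is an $h$-step reachability probability of bad states and is computed by a further $h$ unrolled layers. The whole model has size $O(|S|(n+h))$, polynomial in $|\tau|$.

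\textbf{Step 2 (linearizing the ratio).} For fixed $\mc$, $\mathrm{R}_\mc(\tau)>\lambda$ iff $N_\mc-\lambda D_\mc>0$, where $N_\mc=\sum_\pi Pr_\mc(\pi,\tau)r^h_\varphi(\pi)$ and $D_\mc=Pr_\mc(\tau)$ (assuming $D_\mc>0$; traces impossible in every refinement are checked separately by a reachability test on the support). Hence $\mon_\imc(\tau)>\lambda$ iff $\max_{\mc\in\mathcal{H}_\imc}(N_\mc-\lambda D_\mc)>0$. The quantity $N_\mc-\lambda D_\mc$ is the expected total reward of the unrolled chain, with reward $r^h_\varphi(s)-\lambda$ collected at layer $n$, weighted by the emission factors along the way. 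Since the choice of distribution at each (state, layer) can be made independently, which is the usual rectangularity of interval models, this maximum is attained by backward induction:
\[
V_i(s)=\max_{P(s,\cdot)\in[P_l(s,\cdot),P_u(s,\cdot)]}\sum_{s'}P(s,s')\,\mathcal{O}(s')(z_{i+1})\,V_{i+1}(s').
\]
Each maximization is a small LP, solvable greedily after sorting successors. The initial layer is handled the same way with $[\iota_l,\iota_u]$.

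\textbf{Main obstacle.} The delicate point is justifying that maximizing over fixed (time-homogeneous) HMMs in $\mathcal{H}_\imc$ equals the layerwise maximum, where a different distribution may be chosen per layer. Backward induction naturally yields time-dependent choices, whereas Definition~\ref{def:ihmmmonitor} fixes one $\mc$. I would first try to adopt the standard "uncertainty resolved at each step" semantics, as in \cite{DBLP:conf/cav/JungesTS20} and robust MDP literature, and argue this is the intended semantics of the iHMM monitor, noting that it still upper-bounds the risk, so Lemma~\ref{lem:safeihmmmonitor} is preserved. Alternatively, I could restrict to memoryless nature, but that generally makes the problem harder, so I would present the theorem under the per-step semantics. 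With that semantics, polynomiality follows: there are $n+h$ layers and $|S|$ LPs of polynomial size per layer, and every number stays polynomially bounded in bit size because the values are sums of products of input rationals computed by LPs with rational optimal vertices.
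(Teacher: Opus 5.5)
Your proof is correct, but it takes a different route from the paper's.

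\textbf{The paper's route.} The paper also unrolls $\imc$ along $\tau$. It then routes each horizon-layer state into fresh sinks $\top,\bot$ with probabilities $r(s)$ and $1-r(s)$, where $r$ is the state-wise maximal risk. This makes $\mon_\imc(\tau)$ a maximal \emph{conditional} reachability probability $Pr^{\max}(\mathit{Reach}(\top)\mid\tau)$. The paper then passes to an MDP via Iyengar's translation, using Sen--Viswanathan--Agha to justify randomized policies. It invokes Baier et al.: deterministic policies suffice for maximal conditional reachability, and their polynomial transformation turns it into ordinary reachability. Probabilistic observations are first made exact via Chatterjee et al.

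\textbf{Your route.} You use the fact that $\lambda$ is part of the input. The equivalence $N/D>\lambda \iff N-\lambda D>0$ is harmless when $D=0$, since then $N=0$ because $r\le 1$. It turns the ratio into a signed expected reward. Rectangularity of the unrolled intervals then makes robust backward induction exact. Negative values $r-\lambda$ cause no trouble, because each inner maximization is multiplied by a nonnegative forward weight.

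\textbf{What each buys.} Your argument is self-contained and keeps emissions as fixed weights. It also yields polynomiality directly via the greedy/LP step per state and layer. The paper's route leans on a vertex-based interval-to-MDP translation, which can have exponentially many actions per state, so it needs an extra argument there. The paper's construction, in turn, computes the value $\mon_\imc(\tau)$ itself, which is what the monitor actually outputs, and reuses off-the-shelf model-checking machinery. You would need binary search or Dinkelbach-style iteration over $\lambda$ to recover the value.

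\textbf{Your main obstacle.} This is well spotted. The paper's unrolling gives every copy $s_i^j$ its own interval and maximizes $r$ state-wise. So it too silently computes the per-step (interval-MDP) semantics, not a maximum over a single time-homogeneous $\mc\in\mathcal{H}_\imc$ as Definition~\ref{def:ihmmmonitor} literally states. Both arguments establish the theorem under that semantics. Your making the assumption explicit is an improvement over the paper's sketch.
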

The proof of \Cref{thm:ptime} combines the work from \cite{DBLP:conf/cav/JungesTS20,DBLP:conf/tacas/BaierKKM14,DBLP:journals/mor/Iyengar05,DBLP:conf/tacas/SenVA06}. We  illustrate the polynomial time decision procedure in the remainder of this section on the small iHMM, shown in \Cref{fig:ihmmex}.
The iHMM has three states: the initial state $s_0$ with exactly observation {\color{blue}blue}, and states $s_1$, $s_2$ with exactly observation {\color{orange}orange}. The probabilistic observations of the HMM from \Cref{def:hmm} can be transformed in polynomial time to exact observations as used here~\cite{DBLP:conf/aaai/ChatterjeeCGK15}. Any point intervals are written as single probabilities. 
Let the specification be $\varphi=\lnot s_2$, and the horizon $h=1$. The maximizing risk function then becomes $r=[s_0\mapsto0.3, s_1\mapsto0, s_2\mapsto1]$. Suppose we observe the trace $\tau=({\color{blue}blue})({\color{orange}orange})$. To compute $\mon_\imc(\tau)$, we perform the transformation as shown in \Cref{fig:ihmmtrans}. We unroll the model over the length of the trace, $|\tau|=2$. This gives us the states $s_i^j$ with $j$ being the unrolled depth of the state. At the horizon we add transitions towards two new states $\top$, and $\bot$, with $P(s_i^h, \top) = r(s_i)$, $P(s_i^h, \bot) = 1-r(s_i)$.
In this transformed model, monitoring a trace reduces to a conditional reachability problem,
\[\mon_\imc(({\color{blue}blue})({\color{orange}orange}))=Pr_\imc^{\max}(\mathit{Reach}(\top)\mid ({\color{blue}blue})({\color{orange}orange})).\]

Now, we apply the standard transformation from iHMMs to MDPs \cite{DBLP:journals/mor/Iyengar05}. There is a one-to-one correspondence between policies in this MDP and refined HMMs of the iHMM. \cite[Proposition 2]{DBLP:conf/tacas/SenVA06} showed that the probability measures are equal given that we allow randomized policies in the MDP. Thus, we now have an MDP $\mathsf{mdp}$ with randomized policies for which $Pr_\imc^{\max}(\mathit{Reach}(\top)\mid ({\color{blue}blue})({\color{orange}orange})) = Pr_\mathsf{mdp}^{\max}(\mathit{Reach}(\top)\mid ({\color{blue}blue})({\color{orange}orange}))$. Now \cite{DBLP:conf/tacas/BaierKKM14} has shown that the policy which induces the maximum conditional probability in an MDP is deterministic, and thus we can apply their translation to get a new MDP where $Pr_\mathsf{mdp}^{\max}(\mathit{Reach}(\top)\mid ({\color{blue}blue})({\color{orange}orange}))$ can be computed using non-conditional reachability probability model checking. Thus allowing us to compute $\mon_\imc^{\max}(({\color{blue}blue})({\color{orange}orange}))$. 

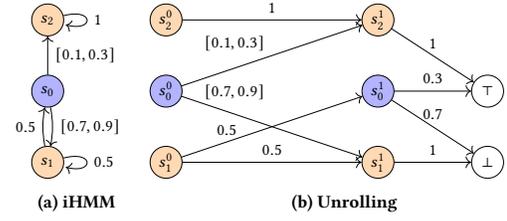
\begin{figure}
\scalebox{0.85}{
    \centering
    \begin{subfigure}{0.25\linewidth}
        \begin{tikzpicture}[font=\footnotesize]
            \tikzset{state/.append style={minimum size=5mm, inner sep=0}}

            \node[state, fill=blue!30!white]   (s0) {$s_0$};
            \node[state, fill=orange!30!white] (s1) [below=0.6 of s0] {$s_1$};
            \node[state, fill=orange!30!white] (s2) [above=0.6 of s0] {$s_2$};
            
            \path[->] (s0) edge[bend left=10] node[right] {$[0.7,0.9]$} (s1)
                      (s1) edge[bend left=10] node[left]  {$0.5$} (s0)
                      (s1) edge[loop right]   node[right] {$0.5$} (s1)
                      (s0) edge               node[right] {$[0.1,0.3]$} (s2)
                      (s2) edge[loop right]   node[right] {$1$} (s2);
        \end{tikzpicture}
        \caption{iHMM}
        \label{fig:ihmmex}
    \end{subfigure}
    \hfill
    \begin{subfigure}{0.72\linewidth}
        \begin{tikzpicture}[font=\footnotesize]
            \tikzset{state/.append style={minimum size=5mm, inner sep=0}}
            
            \node[state, fill=blue!30!white] (s00) {$s^0_0$};
            \node[state, fill=orange!30!white] (s01) [below=0.6 of s00] {$s^0_1$};
            \node[state, fill=orange!30!white] (s02) [above=0.6  of s00] {$s^0_2$};
            \node[state, fill=blue!30!white] (s10) [right=2.8 of s00] {$s^1_0$};
            \node[state, fill=orange!30!white] (s11) [below=0.6  of s10] {$s^1_1$};
            \node[state, fill=orange!30!white] (s12) [above=0.6  of s10] {$s^1_2$};
            \node[state] (bot) [right=1.2 of s10] {$\top$};
            \node[state] (top) [right=1.2 of s11] {$\bot$};
    
            \path[->] (s00) edge node [above=0.55, left] {$[0.7,0.9]$} (s11);
            \path[->] (s01) edge node [near start, above] {$0.5$} (s10);
            \path[->] (s01) edge node [above] {$0.5$} (s11);
            \path[->] (s00) edge node [above=0.2, left] {$[0.1,0.3]$} (s12);
            \path[->] (s02) edge node [above] {$1$} (s12);
            \path[->] (s10) edge node [above] {$0.3$} (bot);
            \path[->] (s10) edge node [above] {$0.7$} (top);
            \path[->] (s12) edge node [above] {$1$} (bot);
            \path[->] (s11) edge node [above] {$1$} (top);
        \end{tikzpicture}
        \caption{Unrolling}
        \label{fig:ihmmtrans}
    \end{subfigure}
    }
    \caption{Transformation for monitoring an iHMM}
    \label{fig:ihmmmonitoringex}
    \Description{}{}
\end{figure}

%% file: learning.tex
Above, we derived cautious monitors from iHMMs.
In this section, we iteratively learn and refine the iHMMs using paths sampled from the SuO. We learn iHMMs with  an adaptation Linearly Updating Intervals~\cite{suilen2022robust} and then use conformance checking to decide whether the iHMMs give sufficiently precise monitors.



\subsection{Learning interval HMMs}\label{subsec:LUI}
We describe a process of learning iHMMs from a dataset of paths sampled from the SuO. Our process  
closely follows the learning linearly updating  intervals (LUI) method, applied to learning MDPs~\cite{suilen2022robust}. 
To support learning observation distributions, we encode the uncertainty about the distributions into the transitions, motivated by the transformation~\cite{DBLP:conf/aaai/ChatterjeeCGK15}. 
As in LUI for MDPs, we assume knowledge of the state and transition space of the model. We define our state space using the relevant variables ($V_{sys}$ and further $V_{obs}$ as explained above). We use domain knowledge (e.g., Newtonian physics) to derive the existing  transitions.
In contrast to~\cite{suilen2022robust}, we don't assume a known and unique initial state.

We now outline the initialization. We initialize transition probability intervals by assigning to each transition an initial interval. For example, for a transition between states $i$ and $j$, we assign $I_{i,j} = [I_{i,j}^{\downarrow},I_{i,j}^{\uparrow}]$. We additionally introduce a strength interval $[n_{i,j}^{\downarrow}, n_{i,j}^{\uparrow}]$, which captures the minimum and maximum number of samples the current version of the probability interval is based on.

To estimate transition probabilities, we compute transition frequencies. In particular, $N_{i}$ signifies the total number of times some state $i$ appears in the dataset. $k_{i,j}$ signifies the number of times state $j$ appears after $i$ in the dataset. The empirical transition frequency of a given transition is hence $\frac{k_{i,j}}{N_{i}}$. The probability intervals are updated as follows:
\


{\small
\[{I_{i,j}^{\downarrow}} \!:= 
\begin{cases}
       \frac{{n_{i,j}^{\uparrow}} \times {I_{i,j}^{\downarrow}} + k_{i,j}}{n_{i,j}^{\uparrow} + N_{i}}  & \text{$\forall x. \frac{k_{i,x}}{N_{i}} \!\geq\! {I_{i,j}^{\downarrow}}$}\\
       \\
       \frac{{n_{i,j}^{\downarrow}} \times {I_{i,j}^{\downarrow}} + k_{i,j}}{n_{i,j}^{\downarrow} + N_{i}} & \text{$\exists x. \frac{k_{i,x}}{N_{i}} \!<\! {I_{i,j}^{\downarrow}}$}
\end{cases}
\quad
I_{i,j}^{\uparrow} \!:= 
\begin{cases}
       \frac{n_{i,j}^{\uparrow} \times I_{i,j}^{\uparrow} + k_{i,j}}{n_{i}^{\uparrow} + N_{i}}  & \text{$\forall x. \frac{k_{i,x}}{N_{i}} \! \leq \!I_{i,j}^{\uparrow}$}\\
       \\
       \frac{n_{i,j}^{\downarrow} \times I_{i,j}^{\uparrow} + k_{i,j} }{n_{i}^{\downarrow} + N} & \text{$\exists x. \frac{k_{i,x}}{N_{i}} \!>\! I_{i,j}^{\uparrow}$}
\end{cases}\]
}

We set all initial intervals to $[\epsilon, 1-\epsilon]$ in accordance to the rule $0.5 \geq \epsilon >0$, and choose strength interval values in accordance to $0 \leq n_{i,j}^{\downarrow} \leq n_{i,j}^{\uparrow}$. We use a small $\epsilon$ to make no assumption about prior knowledge and low strength values to allow for the information from the dataset to make a fast impact. 
If all the empirical frequencies from a given state fall within the prior interval, the posterior intervals are calculated with the upper bound of the strength interval. Else, the lower bound of the strength interval is chosen instead. 
Using different strength values supports a more conservative reaction when new data does not conform to probability intervals derived from previous data. After updating the probability intervals, the strength intervals are updated as follows: $[n_{i,j}^{\downarrow},n_{i,j}^{\uparrow}] := [n_{i,j}^{\downarrow} + N_{i}, n_{i,j}^{\uparrow} + N_{i}]$.

Learning the initial state probability intervals is done correspondingly, where $N_{i}$ corresponds to the number of paths in the dataset and $k_{i,j}$ expresses the number of traces starting with state ${j}$. 

The learning progresses iteratively with access to new data. The LUI method produces valid probability intervals, i.e., they always satisfy: $0 \leq I_{i,j}^{\downarrow} \leq I_{i,j}^{\uparrow} \leq 1$. Additionally, the method guarantees  convergence to the true probability. We restate the convergence result from \cite{suilen2022robust}, now in the context of our adapted method.

\begin{theorem}{}\label{thm:lui} For an iHMM with the same state space as some HMM, applying LUI to learning the transition probabilities and initial distribution of the iHMM guarantees that the intervals will converge to the exact transition probabilities when the total number of samples of HMM  paths processed tends to infinity, regardless of how many samples are processed per iteration. 
\end{theorem}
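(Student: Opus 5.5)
The plan is to reduce the statement to the convergence theorem for LUI on MDPs from \cite{suilen2022robust}. I would argue per transition row, then handle the initial distribution the same way. Fix a state $i$ and a successor $j$. The update rules for $I_{i,j}^{\downarrow}$ and $I_{i,j}^{\uparrow}$ depend only on the counts $k_{i,\cdot}$ and $N_i$ from visits to $i$. Since the HMM (after the observation-to-transition encoding of \cite{DBLP:conf/aaai/ChatterjeeCGK15}) is a Markov chain, every visit to $i$ yields an independent draw from $P(i,\cdot)$. So row $i$ is learned exactly as LUI learns a single state–action pair in an MDP: a sequence of batches of i.i.d.\ categorical samples. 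The initial distribution is the same setting, with one categorical draw per path and $N$ equal to the number of paths.

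Next I would unroll the update. After $t$ iterations with batch sizes $N_i^{(1)},\dots,N_i^{(t)}$, each bound is a convex combination
\[
I^{(t)} = \frac{n^{(t-1)} I^{(t-1)} + k^{(t)}}{n^{(t-1)} + N_i^{(t)}},
\]
where $n^{(t-1)}$ is either the upper or the lower strength, and both strengths grow by $N_i^{(t)}$ each round. By induction, $I^{(t)}$ equals a weight $w_t = n_0/(n_0+\sum_{s\le t} N_i^{(s)})$ on the prior interval endpoint plus $1-w_t$ times the cumulative empirical frequency $\sum_s k^{(s)} / \sum_s N_i^{(s)}$. Here $n_0$ is whichever initial strength was used, and the choice affects only the bounded prior term. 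Because $n_0$ is finite and the prior endpoint lies in $[\epsilon,1-\epsilon]$, we get $w_t\to 0$ whenever $\sum_s N_i^{(s)}\to\infty$. This holds no matter how the samples are split across iterations.

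Two facts then finish the argument. The strong law of large numbers gives that the cumulative frequency converges almost surely to $P(i,j)$, and the same holds for the initial distribution. Therefore both $I_{i,j}^{\downarrow}$ and $I_{i,j}^{\uparrow}$ converge to $P(i,j)$, so every refining HMM in $\mathcal{H}_\imc$ converges to the true one.

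The main obstacle is that the total number of processed paths tending to infinity does not by itself make $N_i\to\infty$ for every state $i$. States that are visited only finitely often keep a non-vanishing prior weight. I would restrict the claim to states reachable with positive probability in the true HMM. For such a state $i$, each path visits $i$ with some fixed positive probability, so Borel–Cantelli gives $N_i\to\infty$ almost surely. Unreachable rows never affect path probabilities, so their limits are irrelevant. A second point to check is that the indexing in the stated $I^{\uparrow}$ rule ($n_i^{\uparrow}$ versus $n_{i,j}^{\uparrow}$, and $N$ versus $N_i$) is read consistently. I would read it as the symmetric counterpart of the $I^{\downarrow}$ rule, as in \cite{suilen2022robust}, which is what makes the convex-combination unrolling valid.
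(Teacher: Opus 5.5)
Your reduction follows the paper's route. The paper gives no argument of its own here and simply restates the LUI convergence theorem of \cite{suilen2022robust} for the rows of the iHMM. Your caveat that rows of states visited only finitely often keep a non-vanishing prior weight is a correction the statement genuinely needs: the claim must be restricted to states that occur with positive probability at a non-final position of a sampled path.

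The gap is in your unrolling step. Write $S_t=\sum_{s\le t}N_i^{(s)}$ and $K_t=\sum_{s\le t}k^{(s)}$. The closed form $I^{(t)}=w_tI^{(0)}+(1-w_t)K_t/S_t$ holds only if the same strength branch is used in every round. LUI instead re-selects the branch each round by a data-dependent agreement test. Let $m_s\in\{n_0^{\downarrow},n_0^{\uparrow}\}$ be the initial strength of the branch used in round $s$. The update is $(m_s+S_s)I^{(s)}=(m_s+S_{s-1})I^{(s-1)}+k^{(s)}$, and summing gives
\[
S_tI^{(t)}-K_t=\sum_{s=1}^{t}m_s\bigl(I^{(s-1)}-I^{(s)}\bigr).
\]
This telescopes to $m\bigl(I^{(0)}-I^{(t)}\bigr)$ only when $m_s\equiv m$. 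Here is a concrete case. Take prior $[0.3,0.7]$, initial strengths $[0,1]$, and two batches of size $2$ with $k_{i,j}=1$ and then $k_{i,j}=0$; the first passes the agreement test and the second fails it. The lower bound then moves $0.3\to 13/30\to 13/60$. But every convex combination of the prior $0.3$ and the cumulative frequency $1/4$ lies in $[1/4,0.3]$. So ``the choice affects only the bounded prior term'' is false, and this is exactly the regime the ``regardless of how many samples are processed per iteration'' clause is about.

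The conclusion survives with one extra estimate. Split $m_s=n_0^{\downarrow}+\Delta\,\chi_s$, where $\Delta=n_0^{\uparrow}-n_0^{\downarrow}$ and $\chi_s\in\{0,1\}$ indicates the upper branch. The $n_0^{\downarrow}$ part telescopes to $n_0^{\downarrow}\bigl(I^{(0)}-I^{(t)}\bigr)$, and the rest is at most $\Delta\sum_s|I^{(s)}-I^{(s-1)}|$ in absolute value. Since $I^{(s)}-I^{(s-1)}=N_i^{(s)}\bigl(k^{(s)}/N_i^{(s)}-I^{(s-1)}\bigr)/(m_s+S_s)$, each increment is at most $N_i^{(s)}/S_s$. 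Moreover $\sum_s N_i^{(s)}/S_s\le 1+\ln S_t$, using $1-S_{s-1}/S_s\le\ln(S_s/S_{s-1})$. Hence
\[
\Bigl|I^{(t)}-\frac{K_t}{S_t}\Bigr|\le\frac{n_0^{\downarrow}+\Delta(1+\ln S_t)}{S_t}\longrightarrow 0 .
\]
This holds deterministically, for every batch schedule and every branch sequence. Only after this estimate does your strong-law step finish the proof.

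The estimate does not depend on which test selects the branch. That matters because the paper's adapted test (the $\forall x$ conditions) is not literally that of \cite{suilen2022robust}, so a bare citation is less airtight than it looks.

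Finally, make explicit that $N_i$ counts only visits to $i$ that have a successor in the path. The last state of a finite path yields no draw from $P(i,\cdot)$. Under the paper's literal definition of $N_i$, the ratios $k_{i,j}/N_i$ would converge strictly below $P(i,j)$ for states that can occur last.
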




Applying the LUI method in the limit will thus result in a model that induces the ideal monitor.
\begin{corollary}
    LUI yields a sequence of iHMMs, which induces  a sequence of monitors by the pointwise application of \Cref{def:ihmmmonitor}. That sequence converges to the ideal monitor. 
\end{corollary}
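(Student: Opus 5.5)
The plan is to pass \Cref{thm:lui} through the monitor of \Cref{def:ihmmmonitor} using a continuity argument for fixed traces. Write $\imc_k$ for the iHMM after the $k$-th LUI iteration and $\mc$ for the true HMM modelling the SuO. By \Cref{thm:lui}, every interval $[I_{i,j}^{\downarrow,k}, I_{i,j}^{\uparrow,k}]$ converges to $P(i,j)$, and likewise the initial intervals converge to $\iota$. Since $S$ is finite, the diameter $\eta_k$ of $\mathcal{H}_{\imc_k}$, measured as the maximal distance of any parameter to the true one, satisfies $\eta_k \to 0$. This means every $\mc' \in \mathcal{H}_{\imc_k}$ lies in a sup-norm ball of radius $\eta_k$ around $\mc$ in the finite-dimensional parameter space $(\iota,P)$. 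One caveat: the limit statement is only meaningful once $\mc \in \mathcal{H}_{\imc_k}$ or once the $\mathcal{H}_{\imc_k}$ are nonempty. It suffices that they are nonempty, which follows because the intervals contain valid distributions; I would note this explicitly.

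Next, fix a trace $\tau$ with $Pr_\mc(\tau)>0$. By \Cref{lem:idealmonitorhmm} the ideal monitor is $\mathrm{R}_\mc(\tau)$. This equals a ratio
\[
\mathrm{R}_{\mc'}(\tau)=\frac{\sum_{\pi} Pr_{\mc'}(\pi,\tau)\, r^h_\varphi(\pi)}{Pr_{\mc'}(\tau)}.
\]
Here the numerator and denominator are finite sums, over paths of length $|\tau|$ and extensions up to $h$, of products of entries of $\iota'$, $P'$ and the fixed $\mathcal{O}$. Both are therefore polynomials in the parameters, and the map $\mc'\mapsto \mathrm{R}_{\mc'}(\tau)$ is continuous at $\mc$ wherever the denominator is nonzero.

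One subtlety must be handled here. The risk $r^h_\varphi$ in a refined model $\mc'$ may itself depend on $\mc'$ (it is a forward probability of violation). It is still a polynomial in the parameters, so continuity is preserved. The unrolled model from the proof of \Cref{thm:ptime} makes this transparent: $\mathrm{R}_{\mc'}(\tau)$ is a conditional reachability probability, computed as a ratio of two polynomial reachability probabilities.

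Given this continuity, for any $\varepsilon>0$ there is $\delta>0$ such that $|\mathrm{R}_{\mc'}(\tau)-\mathrm{R}_\mc(\tau)|<\varepsilon$ whenever $\mc'$ lies within $\delta$ of $\mc$. For $k$ large we have $\eta_k<\delta$, so
\[
\Bigl|\max_{\mc'\in\mathcal{H}_{\imc_k}}\mathrm{R}_{\mc'}(\tau)-\mathrm{R}_\mc(\tau)\Bigr|\le\varepsilon .
\]
The maximum is attained, or else we use the supremum; either way the bound holds. This gives pointwise convergence $\mon_{\imc_k}(\tau)\to\mon^*_\mc(\tau)$.

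The main obstacle is the denominator $Pr_{\mc'}(\tau)$. For traces with $Pr_\mc(\tau)=0$, the intervals shrink towards $0$ but the lower bound stays at least $\epsilon$-style positive, so nearby $\mc'$ may give $\tau$ positive probability with arbitrary conditional risk. Continuity fails there. I would restrict the corollary to traces in $T_\mc$, i.e.\ those with $Pr_\mc(\tau)>0$; this is also where the ideal monitor is meaningfully defined, since $r^h_\varphi$ is set to $0$ on impossible paths. For these traces the denominator is bounded away from zero in a neighbourhood of $\mc$, which is exactly what the continuity step needs. Convergence is then pointwise on $T_\mc$, which is the sense of ``pointwise application'' in the statement.
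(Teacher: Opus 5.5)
Your proposal is correct. It uses the same two ingredients as the paper, \Cref{thm:lui} and \Cref{lem:idealmonitorhmm}, but combines them more carefully. The paper's only justification is the sentence that applying LUI ``in the limit'' yields the true HMM, whose monitor is the ideal one. That tacitly assumes $\imc\mapsto\mon_{\imc}(\tau)$ commutes with the limit of the interval sequence, which is a different statement from convergence of the monitor sequence. Your continuity step is what supplies it:
\begin{itemize}
\item $\mathrm{R}_{\mc'}(\tau)$ is a ratio of polynomials in $(\iota',P')$, including the model-dependent risk;
\item the denominator stays positive near $\mc$ when $Pr_{\mc}(\tau)>0$;
\item the sup-norm radius $\eta_k\to 0$ controls every refinement in $\mathcal{H}_{\imc_k}$ at once, so the maximum converges too.
\end{itemize}
Your argument also shows something the paper's one-liner hides: convergence can only be claimed for traces with $Pr_{\mc}(\tau)>0$. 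This restriction costs nothing, because the ideal monitor minimises a loss weighted by $Pr(\tau)$ and is undetermined on other traces. Since \Cref{thm:lui} is an almost-sure statement, your conclusion holds on the same almost-sure event.

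The one step you assert without proof is that $\mathcal{H}_{\imc_k}$ is nonempty. This follows neither from \Cref{thm:lui} nor from the per-transition validity $0\le I^{\downarrow}_{i,j}\le I^{\uparrow}_{i,j}\le 1$. Interval convergence is compatible with $\sum_j I^{\uparrow}_{i,j}<1$ at every stage. Concretely, a state with a single successor initialised to $[\epsilon,1-\epsilon]$ keeps its upper bound strictly below $1$ forever, so $\mon_{\imc_k}$ is undefined for all $k$.

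You therefore need either an explicit assumption or an argument from the update rule, which goes as follows. All successors of a state carry the same strength values, since they are initialised equally and each is increased by $N_i$. Each bound moves by a convex combination towards its empirical frequency. A bound violated by its own frequency always falls into the $\exists x$ case and so receives the smaller weight, coming from $n^{\downarrow}_{i,j}$. Because the empirical frequencies of a state sum to one, these facts together preserve $\sum_j I^{\downarrow}_{i,j}\le 1\le\sum_j I^{\uparrow}_{i,j}$. This invariant holds initially if every state has at least two successors and $\epsilon$ is at most the reciprocal of its number of successors. Deterministic transitions must instead be fixed to $[1,1]$, as in \Cref{fig:ihmmex}.
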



\subsection{Conformance-testing-based refinement}\label{sec:refinement}

We present a  conformance-testing-based refinement framework  to learning iHMMs that builds on the LUI procedure from last section. The framework involves an iterative procedure where transition intervals are updated until the iHMM based monitor provides outcomes that meet the stopping condition that we define later.  We particularly address some shortcomings of LUI. 
The LUI method guarantees the convergence to the true HMM, but does not necessarily efficiently converge to this HMM. Empirically, the transitions from states that are frequently visited will converge to narrow intervals very fast, and the transitions from rarely visited states will stay wide for long. 
This will lead to the situation where the outcomes of the iHMM monitors will likely differ from the outcomes of the ideal monitor. To avoid the situation where we keep sampling and learning the neighborhoods where our transition probability approximations are already accurate, we introduce a method of conformance-testing-based refinement that guides the sampling and hence the learning process to underexplored neighborhoods. 
Practically, we wish to process as few samples as possible to learn an iHMM that corresponds to an iHMM monitor that is as close as possible to the ideal monitor. This is the central motivation behind the refinement framework. 

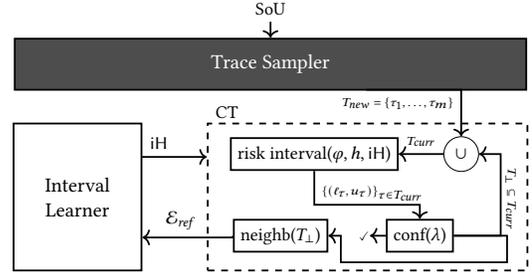
\begin{figure}[t]
\input{framework}
\caption{iHMM refinement-based learning. 
}
\vskip -0.45cm
\label{fig:framework}
\end{figure}

The framework architecture is depicted in \Cref{fig:framework}. It is composed of an \emph{Interval Learner}, which updates a current $\imc$ using LUI, a \emph{trace sampler}, which samples observation traces from SuO to be used for conformance testing, and a \emph{conformance tester} that given a monitor $\mon_\imc$  of  the current iHMM $\imc$ and a set of traces $T$, checks if a stopping condition is met. We use a  stopping condition  defined as a threshold $\theta$ on the average width of what we call the 'risk interval'. For each trace, the interval bounds are the minimum and maximum risk $(\ell_\tau, u_\tau) = (\mon^{\min}_\imc(\tau), \mon^{\max}_\imc(\tau))$, where $u_\tau$ is the outcome of the iHMM Monitor defined in \Cref{sec:iHMMmonitoring} and $\ell_\tau$ is the outcome of the corresponding risk minimizing monitor. If the average width of the risk interval for all considered traces is below the stopping condition threshold $\theta$ the iHMM learning terminates.

Following \Cref{fig:framework}, the refinement learning starts with an initial $\imc$ and a set of sampled traces $T_{new}$. The initial $\imc$ has transition and initial distribution intervals set to $[\epsilon, 1- \epsilon]$. The sampled traces are forwarded to the risk interval analysis and unless the stopping condition is met, each trace that does not meet the threshold is forwarded to the neighborhood sampler.

The neighborhood sampler samples paths from the extended neighborhood of that trace $\tau$, i.e. for a subsequence $\tau'$ of $\tau$ we sample a path $\pi$ with probability $Pr(\pi\mid\tau')$. We discuss the details of the sampling method in Appendix \ref{appendix:neighborhood}. 



All neighborhood samples $\mathcal E_\mathit{ref}$ are forwarded to the Interval Learner to perform the necessary update. The traces $T_{\bot} \subseteq T_{new}$ for which the risk interval is above the stopping condition $\theta$ are retained for the next conformance testing round along the freshly sampled new traces making up the set $T_{curr} = T_{new} \cup T_{\bot}$. The learning concludes when the stopping condition is met. 
Since the refinement does not interact with the method of learning intervals, but only impacts sampling, the following result holds. 


\begin{corollary} \label{cor:lui_with_ref}
Under the assumption that the learned iHMM has the same state as the true HMM and $\theta =0$, applying the conformance-testing based refinement procedure guarantees that the intervals will converge to the exact transition probabilities when the total number of samples processed tends to infinity, regardless of how many samples are processed per iteration.
\end{corollary}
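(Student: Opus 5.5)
The plan is to reduce \Cref{cor:lui_with_ref} to \Cref{thm:lui}. The refinement procedure changes only \emph{which} paths are fed to the Interval Learner, never \emph{how} the intervals are updated. So it suffices to show that the sample stream produced by refinement satisfies the hypotheses under which \Cref{thm:lui} guarantees convergence.

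\textbf{Step 1: the procedure never stops early.} With $\theta = 0$, the stopping condition asks that the average width of the risk interval $[\ell_\tau, u_\tau]$ be $0$. This happens only if the iHMM monitor coincides with the risk-minimizing monitor on the tested traces. Until the intervals have collapsed, some trace has positive width and stays in $T_{\bot}$. Hence the loop either runs forever, or it halts in a state where further samples would not change the limit. In the first case the total number of processed samples tends to infinity, as the corollary assumes.

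\textbf{Step 2: the LUI update is unchanged.} Every neighborhood sample $\mathcal E_\mathit{ref}$ is a genuine path of the SuO. It is drawn from $Pr(\pi \mid \tau')$ for a subsequence $\tau'$ of an observed trace, so it is a path of the true HMM $\mc$ with the common state space. The learner applies exactly the LUI update rule and the strength-interval update of \Cref{subsec:LUI}. \Cref{thm:lui} holds \emph{regardless of how many samples are processed per iteration}, so batching by refinement rounds is harmless.

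\textbf{Step 3: the main obstacle.} Sampling is no longer i.i.d.\ from the system's path distribution. Conditioning on neighborhoods biases which states are visited. I would argue state-locally. The LUI update for the outgoing intervals of state $i$ depends only on the counts $k_{i,j}$ and $N_i$. Given that the sampler is in state $i$, each next state is drawn from the true $P(i,\cdot)$, because neighborhood samples are simulated paths of the SuO. Conditioning affects only which prefixes are selected, not the transition law out of $i$. So the per-state empirical frequencies $k_{i,j}/N_i$ still converge to $P(i,j)$ whenever $N_i \to \infty$.

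This claim is cleanest for the forward part of each sampled path. If the sampler conditions on future observations, the per-step law is a smoothed posterior rather than $P(i,\cdot)$. I would therefore appeal to the construction in Appendix~\ref{appendix:neighborhood}, which presumably supplies exactly this property. That appeal is where I expect the real gap.

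It remains to ensure $N_i \to \infty$ for every reachable state $i$. This is precisely what refinement is designed to do: wide intervals at rarely visited states keep the risk width positive and trigger neighborhood sampling there. Fresh traces from $T_{new}$ also keep arriving, so unbiased samples are added too. With these facts, the convergence argument of \Cref{thm:lui} applies verbatim, and the intervals converge to the exact transition probabilities and initial distribution.
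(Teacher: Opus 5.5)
Your reduction is the paper's argument. The paper's entire justification is that refinement ``does not interact with the method of learning intervals, but only impacts sampling,'' so \Cref{thm:lui} carries over. The obstacle you flag is resolved by the construction you point to. In Appendix~\ref{appendix:neighborhood}, only the start state $s\in S_\bot$ is drawn from a model-based posterior given $\tau'$. The path in $\mathcal E_\mathit{ref}$ is then produced by simulating the SuO forward from $s$. Hence every transition out of $i$ that reaches the learner is a fresh draw from $P(i,\cdot)$. Your state-local argument therefore gives convergence of the cumulative frequency of $i\to j$ to $P(i,j)$ whenever $N_i\to\infty$, even though start states are chosen adaptively.

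The real gap is the step you treat as routine, namely $N_i\to\infty$ for every state, and both facts you cite for it are wrong. First, $T_\mathit{new}$ consists of observation traces used only for conformance testing. In \Cref{fig:framework} the Interval Learner receives nothing but $\mathcal E_\mathit{ref}$, so no unbiased paths arrive. Second, wide intervals need not produce positive risk width. Suppose $i$ has several successors, all absorbing safe states with the same observation. Then $\ell_\tau$ and $u_\tau$ do not depend on the intervals at $i$ for any $\tau$, so those intervals never put a trace into $T_\bot$. Unless $i$ is visited infinitely often by paths launched for other reasons, its intervals need not converge; they stay at $[\epsilon,1-\epsilon]$ if $i$ is never visited. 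The same example refutes your Step~1 claim that positive width persists until the intervals collapse. That step is unnecessary anyway, since the corollary presupposes a diverging sample count.

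The paper's one-line argument tacitly makes the same assumption: it applies \Cref{thm:lui} as if it held for any stream of SuO paths whose total number diverges. To make either argument sound you need an extra hypothesis, e.g.\ that every state occurs infinitely often in $\mathcal E_\mathit{ref}$, or you must restrict the conclusion to such states. Finally, drop ``and initial distribution'' from your conclusion. Paths in $\mathcal E_\mathit{ref}$ start in $S_\bot$ rather than according to $\iota$, so neither your argument nor the corollary covers the initial-distribution intervals.
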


%% file: framework.tex
\centering
\scalebox{0.85}{
\begin{tikzpicture}

\node[draw, thick, minimum width = 8cm, minimum height=0.8cm, fill=black!70] (sampler) at (0,0){\color{white}Trace Sampler};

\node[above = .25cm of sampler](system){\small SoU};
\path[draw, thick, ->](system) -- (sampler);

\coordinate (x1) at (-2.5,-0.4);
\node[draw, thick, minimum width = 2cm, minimum height = 2.3cm, below left = 0.5cm and -2.01cm of sampler, align =center](learner){Interval\\ Learner};

\node[draw, thick, dashed, minimum width = 5cm, minimum height = 2.3cm, below right = 0.5 and -5.01cm of sampler] (ct) {};
\node[above left = -0.05 and -0.6 of ct](ct-name) {\small CT};

\coordinate (x2) at (1.5,-0.4);
\node[draw, thick, minimum height = 0.5cm, below left = -2.1cm and -3.0cm of ct] (mm){\small risk interval($\varphi,h,\imc$)};
\path[draw, thick, ->](learner.32) |- node[above right ]{\small$\imc$} (ct.166);

\coordinate (x3) at (1.6,-2.2);
\node[draw, thick, minimum height = 0.5cm, below right = 0.75cm and -0.2cm of mm] (conf){\small conf($\lambda$)}; 
\path[draw, thick, ->] (mm.south) |- node[ align = center, above right = -0.05 and 0.0 of mm, text width = 1cm] {\tiny $ \{ (\ell_\tau,u_\tau)\}_{\tau \in T_\mathit{curr}}$} (x3) -| (conf.north);

\node[draw, thick, minimum height = 0.5cm, left = 0.9 cm of conf] (cex){\small neighb($T_\bot$)};
\path[draw, thick, ->] (cex.west) |- node[above left = 0 and 0.5cm]{\small $\mathcal E_{\textit{ref}}$}(learner.328);

\node[left = 0.25cm of conf, inner sep = 0](done){\tiny $\checkmark$};
\path[draw, thick, ->] (conf.west) -- (done.east);

\coordinate (x4) at (3.6, -1.6);
\coordinate (x5) at (3.7, -3.1);
\coordinate (x6) at (1.1, -3.1);
\node[draw, circle, right = 0.7cm of mm] (union){\small $\cup$};
\path[draw, thick, ->]  (x2) -|  node[below left]{\tiny $T_\mathit{new}= \{\tau_1,\dots,\tau_m\}$}(union.north);
\path[draw, thick, ->] (conf.east) -| node[above left = -0.02cm and .04cm, rotate = -90]{\tiny $T_\bot\subseteq T_\mathit{curr}$}(x4) |- (union.east);
\path[draw, thick,->](conf.east) -| (x5) -- (x6) |- (cex.east);
\path[draw, thick, ->](union.west) --node[above = 0cm,] {\tiny $T_\mathit{curr}$} (mm.east) ;

\end{tikzpicture}
}

%% file: experimental_evaluation.tex
\section{Experimental Evaluation}
\label{sec:empirical}

We address three research questions. First, does our model-based approach improve over model-free approaches? (\ref{sec:model_based_free}) Second, does using robust models enhance safety while maintaining accuracy? (\ref{sec:ihmm_vs_hmm_comp}) Third, does conformance-testing-based refinement lead to a better approximation of the Ideal Monitor? (\ref{sec:No_ref_vs_ref})

\subsection{Experimental Setup}
\label{sec:experimental_setup}

\paragraph{Terminology.} False Negative Rate (FNR), False Positive Rate (FPR), Area Under the Curve (AUC), Stopping Condition (SC).  

\paragraph{Implementation.} We provide a prototype implementation for the framework in \Cref{fig:framework} and for monitoring over iHMMs as described in \Cref{sec:iHMMmonitoring}. The implementation of the monitoring algorithm builds on Premise~\cite{DBLP:conf/cav/JungesTS20}, a model-checking-based monitoring tool, adapting it to iHMMs. 
The artifact is available in~\cite{antonina_2026_18631204}.

\paragraph{Benchmarks.} We use benchmarks adapted from \cite{DBLP:conf/cav/JungesTS20,DBLP:journals/corr/abs-2504-05963}. The benchmarks are listed in the table below. The airport benchmarks models a scenario with an autonomous airplane descending for landing with on-ground vehicles moving across the runway. 
It considers the distance of the airplane to the runway, the position of the on-ground vehicles and the observed position of these vehicles. The models airportB include a larger version of the noise model used in airportA. The evadeV benchmark models the behavior of two robots, with one robot observing the position of the other. The difference between different versions of the same benchmark (ex. airport-A-7-10-10, airport-A-7-40-20) lies in how detailed the state and observation information is. SnL is a toy benchmark modeling the snakes and ladders game and is used mainly for sanity checks. 

On each of the benchmarks, we monitor reachability properties, which in the case of airport and evade benchmarks means we are estimating the risk of a future collision and in the case of SnL-10x10 the risk of reaching the final state of the game. 

 \begin{center}
    \scalebox{0.8}{
\begin{tabular}{l l r r r r}
\toprule
& Name & States & Transitions & $|\tau|$ & $h$ \\
\midrule
& SnL-10x10$^{(*)}$ & 101 & 502 & 15 & 5 \\
& evadeV-5-3 & 1001 & 3878 & 10 & 20 \\
& evadeV-6-3$^{(*)}$ & 2 161 & 8 667 & 12 & 20 \\
& airport-A-7-10-10$^{(*)}$ & 1 170 & 5 557 & 15 & 25 \\
& airport-A-7-40-20 & 10 760 & 56 577 & 125 & 35\\
& airport-B-7-40-20$^{(*)}$ & 21 520 & 152 170 & 125 & 35 \\
\bottomrule
\end{tabular}
} 
\end{center}





The benchmarks marked with an asterisk $^{(*)}$ have been additionally tested in a 'coarse' variant, where part of the system variables are not given to the learner.  We use these variants to study the impact of a reduced state space on the accuracy of the learned monitors. Note that the convergence guarantees discussed in \Cref{thm:lui} and \Cref{cor:lui_with_ref} do not hold for the coarse models. 

\paragraph{Learning setup}
We learn iHMM using our adapted LUI method described in \Cref{subsec:LUI} with neighborhood sampling and monitor based on these models, as described in \Cref{sec:iHMMmonitoring}. 
The monitors are learned from a dataset of paths of length $|\tau| + h$. During testing we use traces of length $|\tau|$ and set the monitoring horizon to $h$. The values of  $|\tau|$ and $h$ are specified in the benchmark table. The learned monitors are tested on 500 traces. For memory reasons, we test airport-A-7-40-20 and airport-B-7-40-20 benchmarks on 200 traces. Benchmark airportB-7-40-20 did not complete within a 12 hour timeout and its results are therefore omitted. 
Each method for each benchmark is executed 10 times. Unless noted otherwise, we plot averages and (shaded) areas with ±1 standard deviation. 
All experiments were run on an Intel Xeon Gold 6338 CPU using 16 cores and an NVIDIA A100 GPU, and using a 12h timeout. Only the conformal prediction runs on the GPU, all other computations are CPU based.

\subsection{Model-based vs Model-free monitoring} \label{sec:model_based_free}

In this experiment, we compare our model-based approach to two model-free monitoring approaches. We consider two model-free approaches: (1) a Stochastic Gradient Descent (SGD) regressor learned via the sklearn library \cite{scikit-learn}, and (2) a conformal-prediction approach \cite{cairoli2021neural}, that was specifically designed for monitoring under uncertainty; this approach uses a neural approach to estimating risks and conformal prediction to validate the computed risks.

The model-free methods are trained in one shot using the highest number of samples used by the refinement learning. Our comparison is done in terms of FPR, judging conservativeness, and FNR, judging safety, across different thresholds. For all benchmarks, we choose SC on the width of risk intervals based on the size of the benchmark. Choosing a lower SC for smaller benchmarks ensures that learning requires non-trivial amounts of data.

\begin{figure*}[t]
    \centering
    \begin{subfigure}{0.33\textwidth}
        \includegraphics[width=\linewidth]{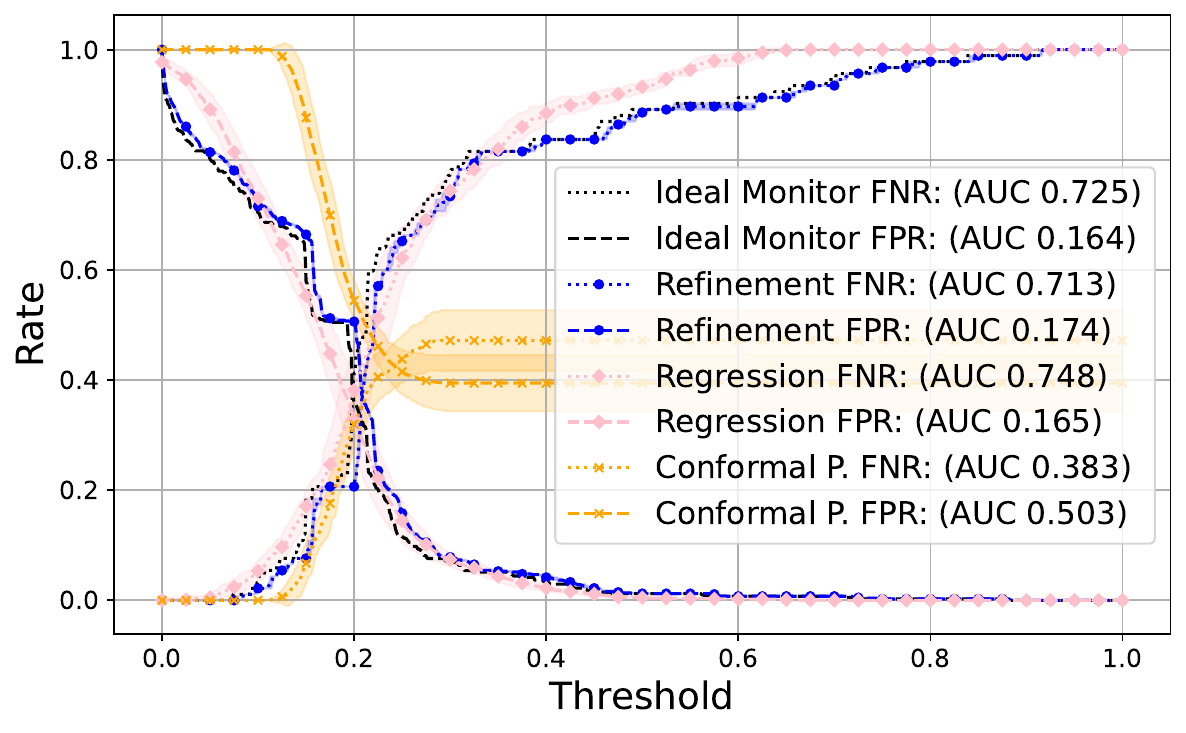}
        \caption{airportA-7-10-10, SC = 0.001}
    \label{model_free_model_based_airportA-7-10-10}
    \end{subfigure}
    \hfill
    \begin{subfigure}{0.33\textwidth}
        \includegraphics[width=\linewidth]{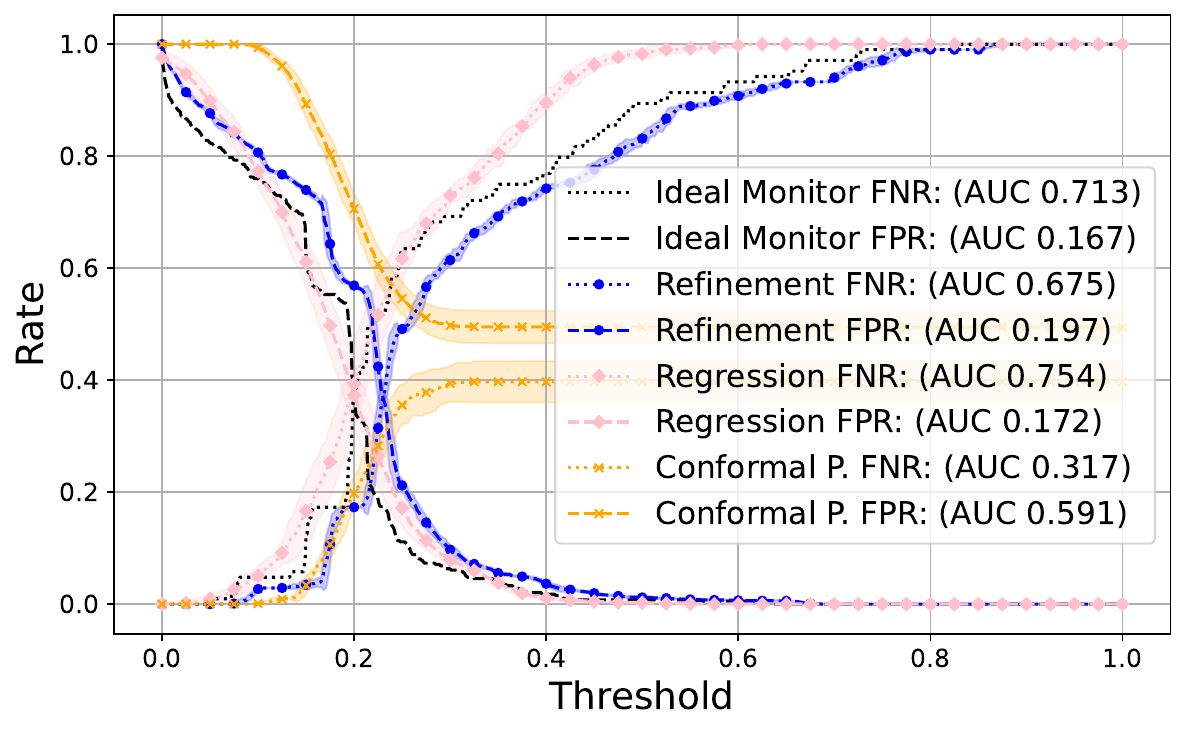}
        \caption{Coarse airportA-7-10-10, SC =  0.01}
        \label{model_free_model_based_airportA-7-10-10-coarse}
    \end{subfigure}
    \hfill
    \begin{subfigure}{0.33\textwidth}
        \includegraphics[width=\linewidth]{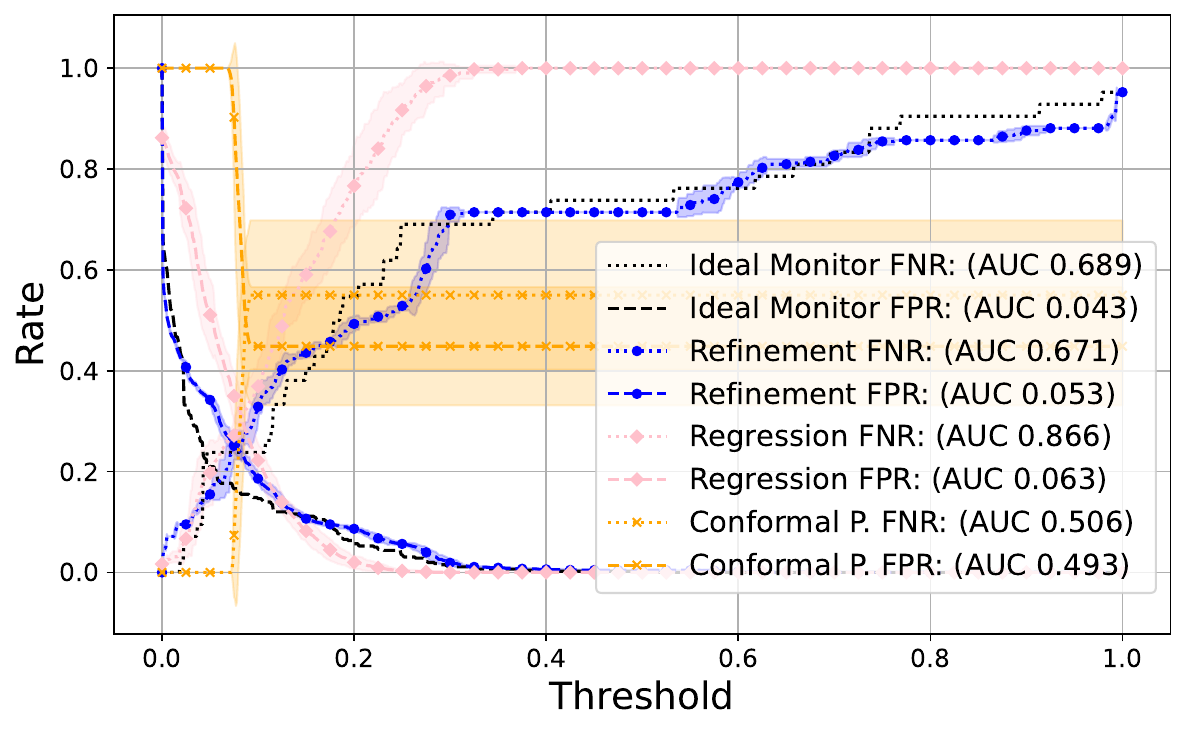}
        \caption{Coarse SnL-10x10, SC = 0.01}
        \label{main_model_free_model_based_SnL-10x10_coarse}
    \end{subfigure}
    \hfill
    \caption{Comparing FNR and FPR between model-based and model-free methods.}
    \label{fig:model_free_model_based_comp}
    \Description{}{}

\end{figure*}

We conducted this experiment over all benchmarks. In \Cref{fig:model_free_model_based_comp}, we show a representative set of our results for airportA-7-10-10 and SnL-10x10. The remaining results are found in \Cref{apndx:modelvsfree}. 
Our goal is to create a cautious monitor that is as close to the ideal monitor $\mon ^{*}$. In \Cref{fig:model_free_model_based_comp} the FNR and FPR of  $\mon ^{*}$ are depicted by the dotted and dashed black lines, respectively. We plot the rates for increasing thresholds, where a threshold of 1 means no trace is considered unsafe. We aim at monitors with a FNR curve close and consistently below $\mon ^{*}$'s FNR and hence with an FNR AUC slightly lower than that of  $\mon ^{*}$. 

\textit{We observe the advantages of iHMM-based
monitors in terms of supporting safety without being overly conservative}. The diagrams clearly show that the iHMM monitors consistently provide a lower FNR than that of $\mon ^{*}$ across all thresholds for every benchmark, aside from Coarse SnL-10x10. For some thresholds, the iHMM monitor has a higher FNR than $\mon ^{*}$ (\Cref{main_model_free_model_based_SnL-10x10_coarse}). A FNR higher than that of $\mon ^{*}$ indicates that for some traces the learned monitor underappoximated the risk leading to the trace being wrongly classified as safe. Although, the iHMM monitor having at times a higher FNR than $\mon ^{*}$ in the case of the Coarse SnL-10x10 benchmark is concerning, the alternatives perform much poorer, with regression having a far greater FNR. The FPR remains modest and closely follows the FPR of $\mon ^{*}$ (\Cref{model_free_model_based_airportA-7-10-10}). 
This is also observed for the other benchmarks (Appendix \ref{apndx:modelvsfree}). 
In general, the accuracy of the iHMM monitors is improved when considering the true state space, with \Cref{model_free_model_based_airportA-7-10-10} showing iHMM monitors that almost perfectly mimic the behavior of $\mon ^{*}$.

\textit{Model-free monitors lack accuracy and compromise safety.}
The conformal prediction monitors severely underperform in terms of accuracy. We attribute this to the fact that these monitors have been designed for systems with deterministic dynamics. Regression monitors similarly struggle with accuracy. A shortcoming of regression present in all experiments is that it fails to fit the behavior of $\mon ^{*}$. The FPR and FNR curves are overly smooth and don't fit the shape of those of $\mon ^{*}$. 


\begin{figure*}
    \centering
    \scalebox{1}[0.9]{
    \begin{subfigure}{0.23\textwidth}
        \includegraphics[width=\linewidth]{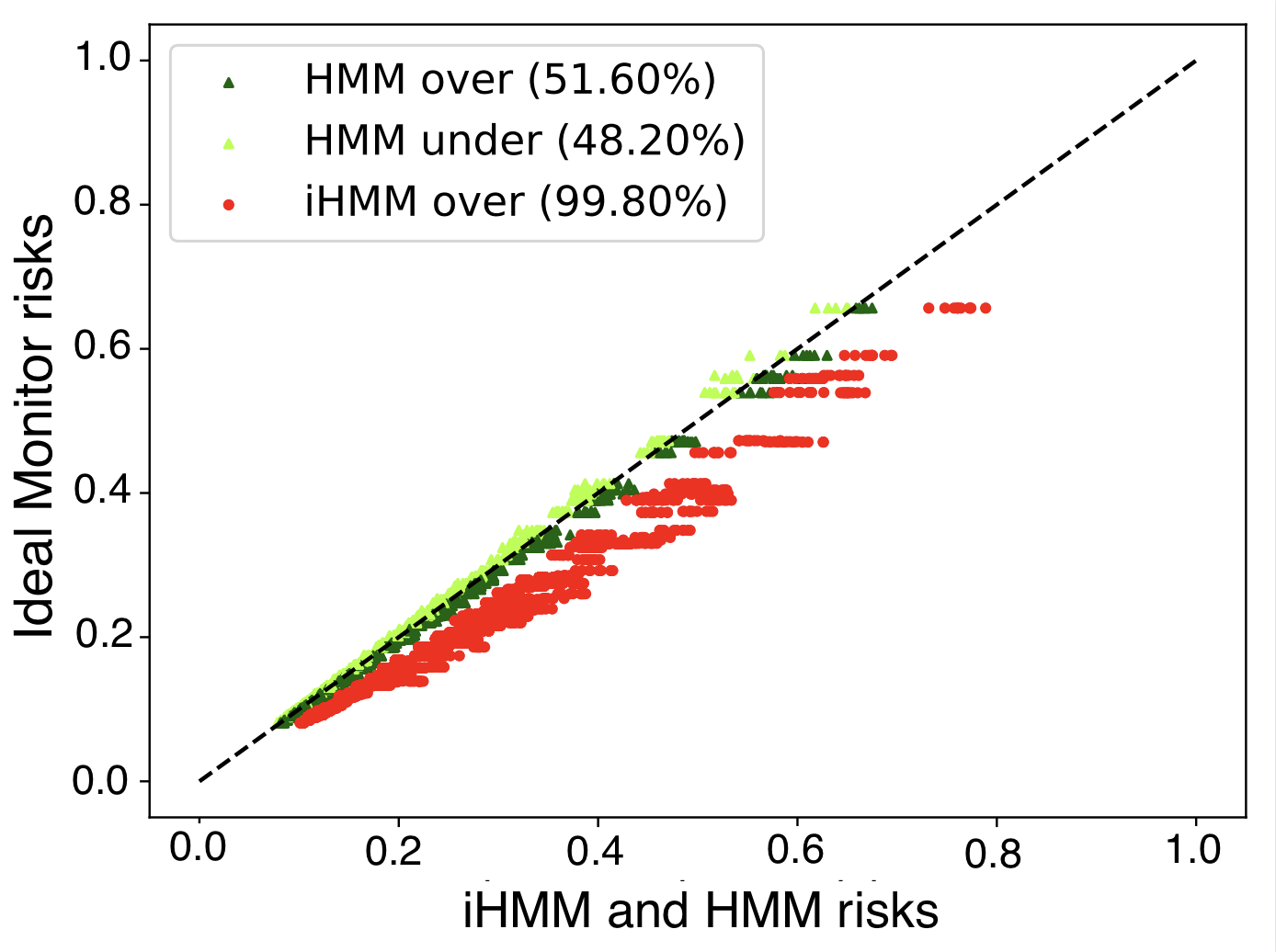}
        \caption{evadeV-5-3, SC= 0.01}
        \label{ihmm_hmm_overestimation_evadeV-5-3}
    \end{subfigure}
    }
    \hfill
    \scalebox{1}[0.9]{
    \begin{subfigure}{0.23\textwidth}
        \includegraphics[width=\linewidth]{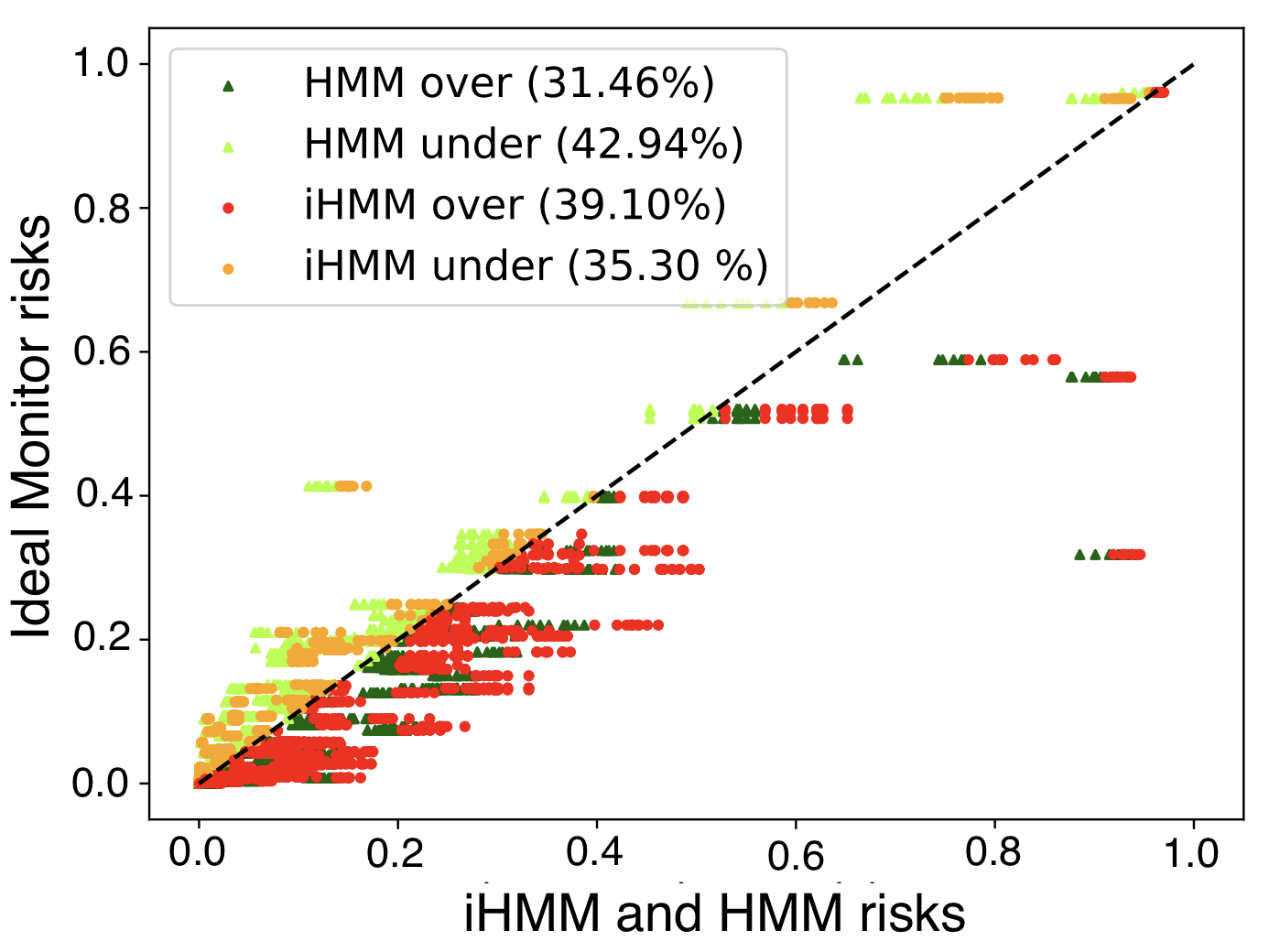}
        \caption{Coarse SnL-10x10, SC= 0.01}
        \label{ihmm_hmm_overestimation_SnL-10x10_coarse}
    \end{subfigure}
    }
    \hfill
    \scalebox{1}[0.9]{
    \begin{subfigure}{0.23\textwidth}
        \includegraphics[width=\linewidth]{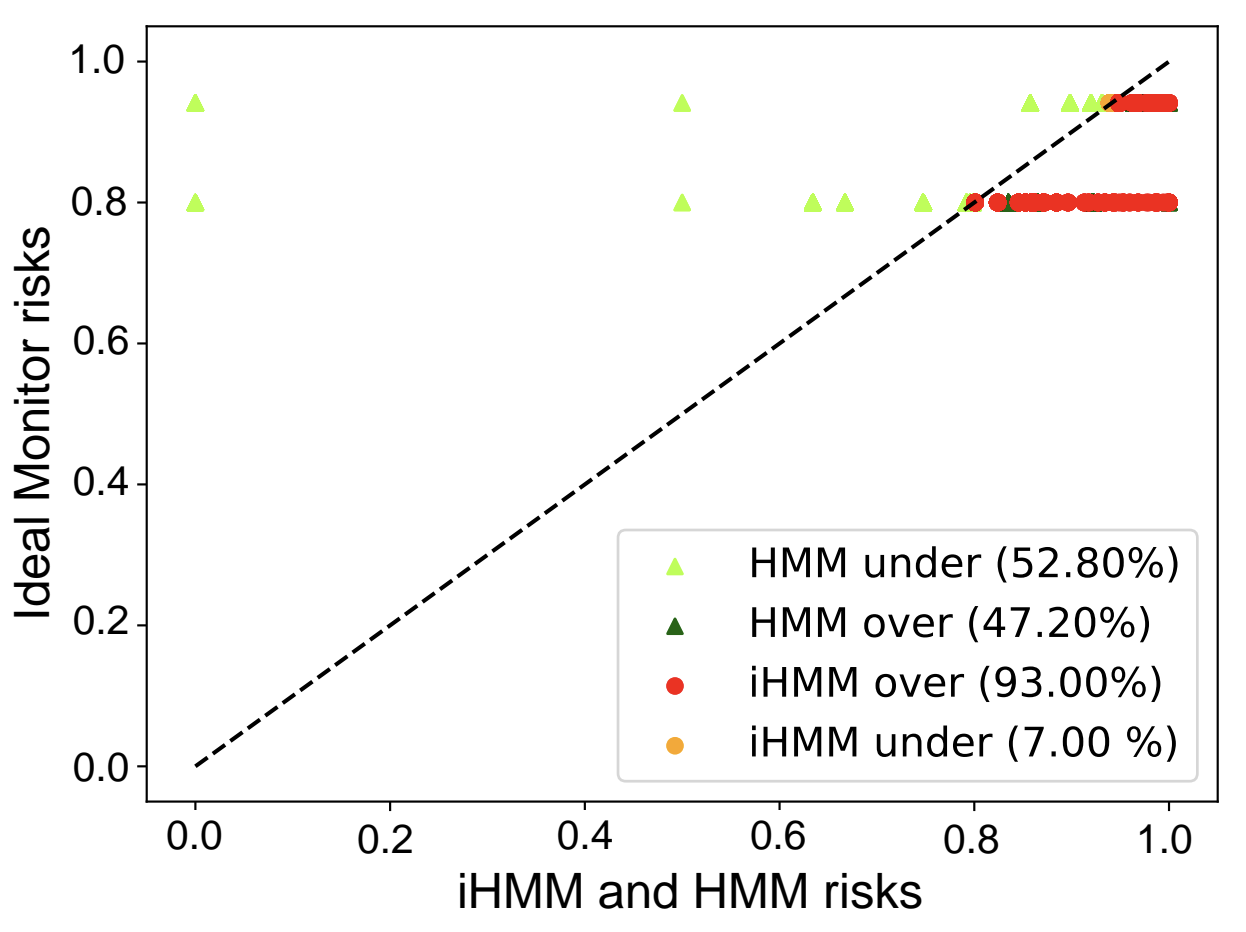}
        \caption{unlikely-15, SC= 0.01}
        \label{ihmm_hmm_overestimation_unlikely}
    \end{subfigure}
    }
    \hfill
    \scalebox{0.96}[0.9]{
    \begin{subfigure}{0.27\textwidth}
        \includegraphics[width=\linewidth]{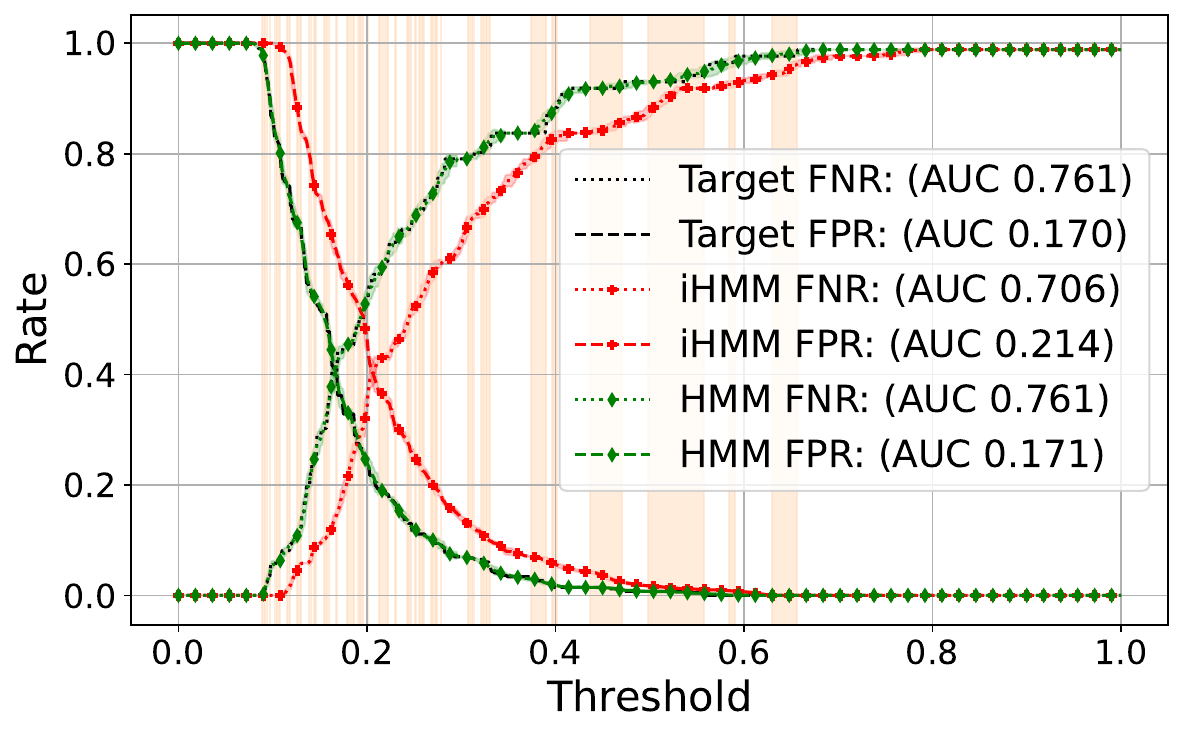}
        \caption{evadeV-5-3 SC = 0.01 }
        \label{ihmm_hmm_evadeV-5-3}
    \end{subfigure}
    }
    \caption{Comparison between iHMM and HMM-based monitors, in terms of risk estimation (a-c) and FNR and FPR (d). Scatterplots (a-c)  plot the values from all 10 runs for each method.} 
    \label{fig:ihmm_hmm_comp}
    \Description{}{}
\end{figure*}
\begin{figure*}
 \centering
    \begin{subfigure}{0.22\textwidth}
        \includegraphics[width=\linewidth]{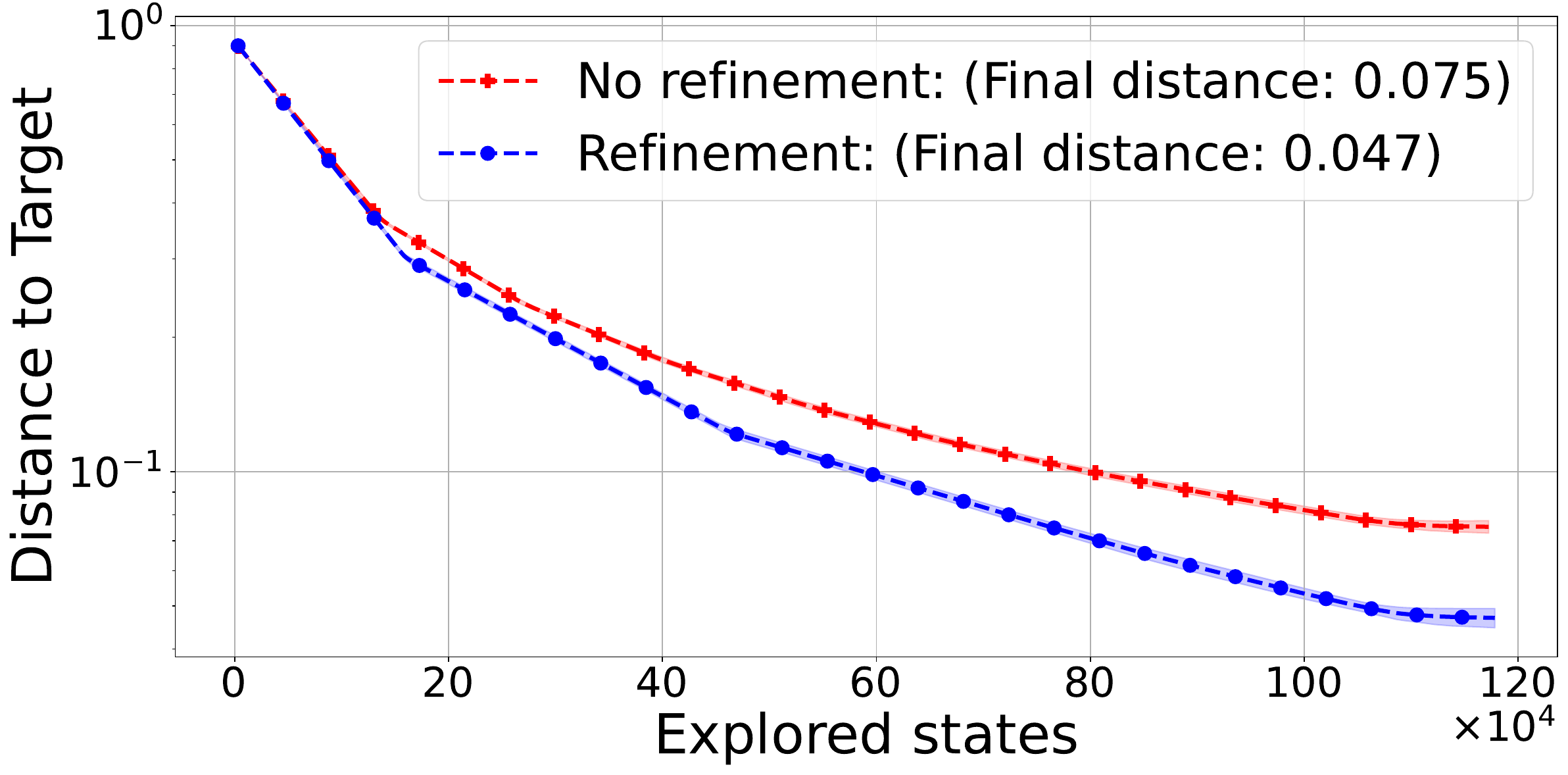}
        \caption{Coarse evadeV-6-3, SC =  0.01}
        \label{ref_no_ref_evadeV-6_coarse_distance}
    \end{subfigure}
    \hfill
    \begin{subfigure}{0.22\textwidth}
        \includegraphics[width=\linewidth]{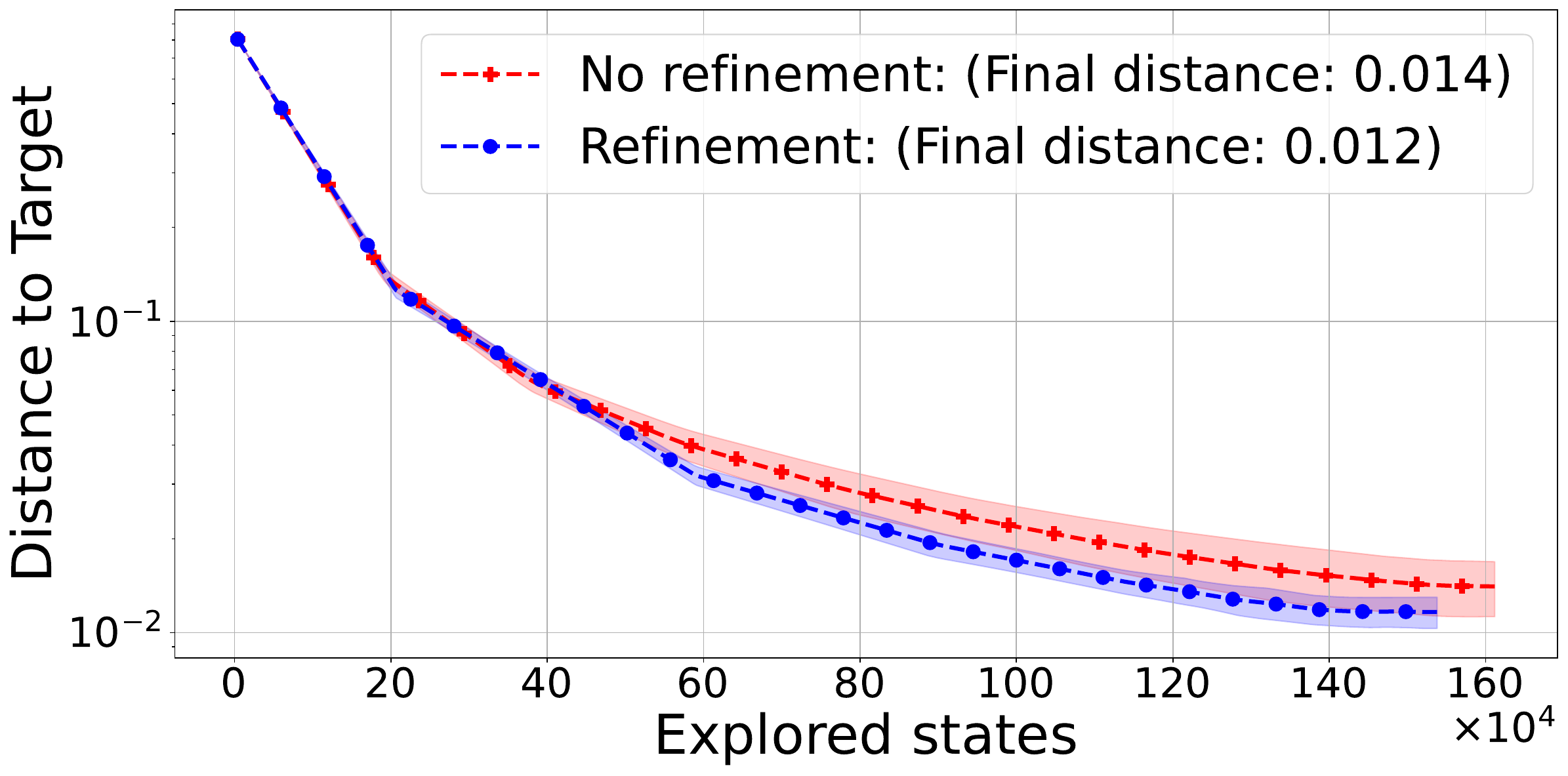}
        \caption{airportA-7-10-10, SC = 0.001}
        \label{ref_no_ref_airportA-7-10-10_distance}
    \end{subfigure}
    \hfill
    \centering
    \scalebox{1}[0.85]{
    \begin{subfigure}{0.22\textwidth}
        \includegraphics[width=\linewidth]{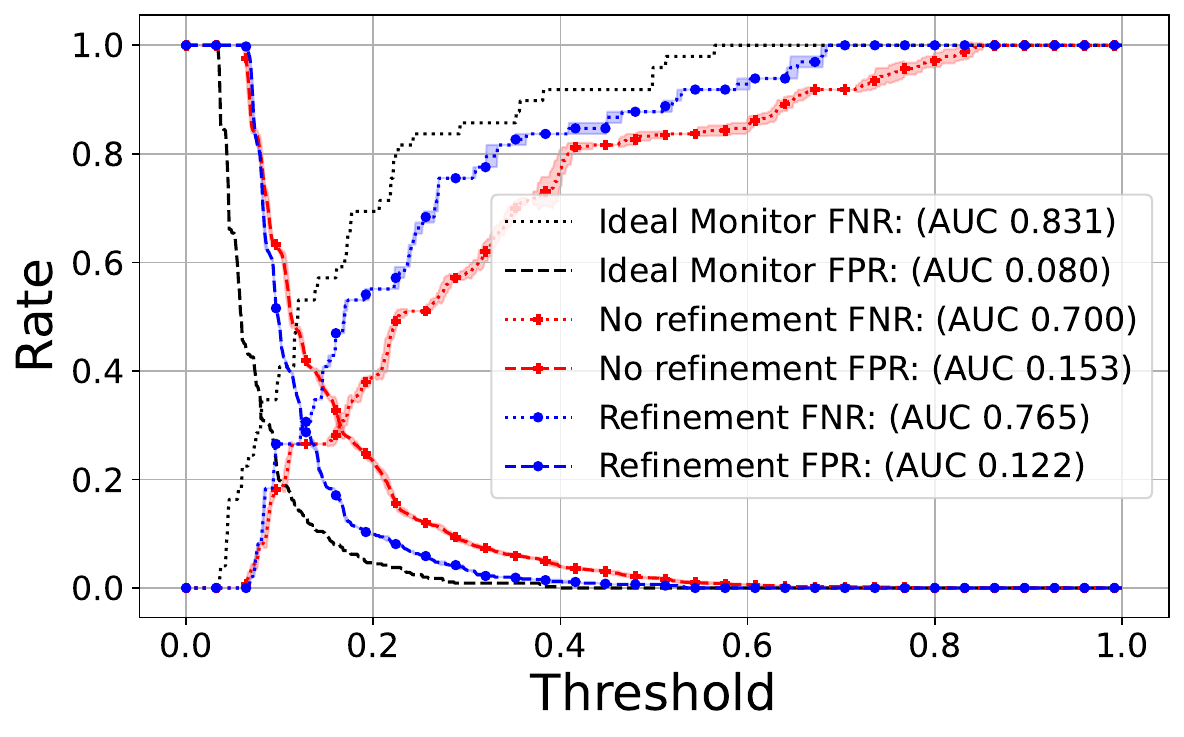}
        \caption{Coarse evadeV-6-3, SC= 0.01}
        \label{ref_no_ref_evadeV-6_fn_fp}
    \end{subfigure}
    }
    \hfill
    \scalebox{1}[0.85]{
    \begin{subfigure}{0.22\textwidth}
        \includegraphics[width=\linewidth]{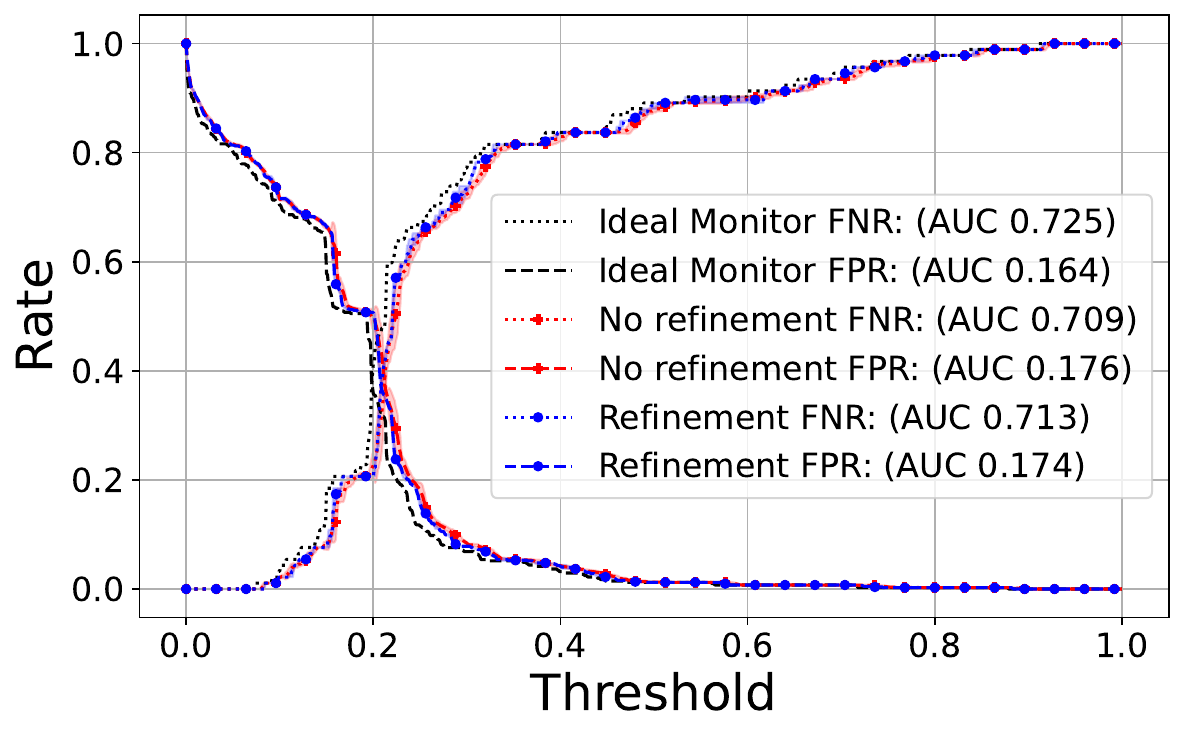}
        \caption{airportA-7-10-10, SC= 0.001}
        \label{ref_no_ref_airportA-7-10-10_fn_fp}
    \end{subfigure}
    }
     \caption{Comparing learning with and without refinement, in terms of distance to $\mon ^{*}$ (a-b), FNR and FPR (c-d)} 
     \label{fig:ref_no_ref}

\end{figure*}

\subsection{iHMM vs. HMM-based monitoring} \label{sec:ihmm_vs_hmm_comp}

In this section, we explore the impact of using robust models for safe monitoring. We compare the risk values approximated by the iHMM and HMM-based monitors to risk values from $\mon ^{*}$ and categorize them as under- and overapproximations. Furthermore, we consider the FPR and FNR to study the monitoring outcomes across different thresholds.  
We learn HMM models using the highest number of samples used by the refinement learning. We split the dataset into 10 equal batches and learn on an increasing number of batches resulting in 10 HMM models per run. The HMMs are learned using a simple frequentist approach. 

\textit{For all benchmarks (Appendix \ref{apndx:ihmm_hmm}), we observe a positive effect of iHMM-based monitors in terms of safety.} EvadeV-5-3 (\Cref{ihmm_hmm_overestimation_evadeV-5-3}) is a benchmark that shows the typical performance of iHMM and HMM monitoring. The figure plots risks approximated by HMM and iHMM monitors and compares them to Ideal Monitor risks, where a single point signifies a value of the Ideal Monitor risk and either a value of the iHMM (red) or HMM risk (green). 
We observe a significant difference between the iHMM and HMM based monitors in terms of under- and overapproximations. As seen \Cref{ihmm_hmm_overestimation_evadeV-5-3}, for almost half of the traces, the HMM-based monitor underapproximates the risk, whereas iHMM-based monitor almost exclusively overapproximates the risk, hence supporting safety. The result is existence of many thresholds for which HMM-based monitors have a higher FNR than that of $\mon ^{*}$ (orange areas in \Cref{ihmm_hmm_evadeV-5-3}). Consequently, for those thresholds, we can expect the HMM-monitor to misclassify some unsafe traces as safe. This is never the case of the iHMM-based monitor. 
The only benchmark for which we observe significant risk underapproximations produced by iHMM-based monitor is coarse SnL-10x10, \Cref{ihmm_hmm_overestimation_SnL-10x10_coarse}.  For this benchmark, however, the iHMM-based monitor underapproximates the risk for fewer traces, than the HMM-based monitor.

\textit{The HMM-based monitors can result in stark risk underapproximations.} While many traces in HMMs lead to underapproximating risk, the difference tends to be marginal. However, this is not guaranteed to always be the case. We introduce an additional benchmark, unlikely-15, which serves as an example of a HMM monitor failing to recognize high-risk scenarios. The structure of unlikely-15 is such that some of its neighborhoods are extremely unlikely to be encountered, hence the iHMM has in those neighborhoods wide probability intervals, which leads to risk overapproximations. The HMM, however, having access to very few samples, learns probabilities far from the true values, leading at times to sizable risk underapproximations. \Cref{ihmm_hmm_overestimation_unlikely} plots 100 traces with positive risk, identified through rejection sampling. It includes examples of traces, where HMM monitors misidentifies high risk traces as low risk, likely leading to violating the safety specification.  
\subsection{No refinement vs refinement learning} \label{sec:No_ref_vs_ref}

Lastly, we examine the impact of refinement-based iHMM learning on the monitoring outcomes. We consider the change in the distance to $\mon ^{*}$ and the difference between FPR and FNR. For this experiment, we further learn the iHMM models without refinement using the highest number of samples used the refinement learning. We split the dataset into 10 equal batches and learn by sequentially adding batches, resulting in 10 iHMM models per run. 

\textit{Learning with refinement consistently leads to monitors that, at the end of the learning process, are closer to $\mon ^{*}$.} 
Note that the used distance measure is $\delta = \frac{\sum_{\tau \in T_{\mc}} |\mon^{*}(\tau) - \mon_{\imc}(\tau)|}{|T_{\mc}|}$, where in the experimental evaluation we consider only a sampled subset of $T_{\mc}$. As expected, there is a relationship between the distance to $\mon ^{*}$ and the FNR and FPR values. For the benchmarks where we observe a significant impact of the refinement method in terms of distance to $\mon ^{*}$, we observe that the refinement method results in FNR and FPR curves closer to the ideal FNR and FPR curves (\Cref{ref_no_ref_evadeV-6_coarse_distance}, \ref{ref_no_ref_evadeV-6_fn_fp}). The benefit of the refinement procedure is improved sample efficiency. For example, in the case of \Cref{ref_no_ref_evadeV-6_coarse_distance}, we are able to learn as good a monitor as no refinement learning method while exploring only about two thirds of the states required by the non-refinement approach.  

\textit{The refinement procedure has a varied impact across different benchmarks.} In the case of the coarse evadeV-6-3 benchmark, the monitors corresponding to iHMMs learned with refinement get significantly closer to  $\mon ^{*}$ than with no refinement (\Cref{ref_no_ref_evadeV-6_coarse_distance}). The effect is that the refinement monitors approximate better the ideal FNR and FPR curves (\Cref{ref_no_ref_evadeV-6_fn_fp}). On the other hand, in the case of the airportA-7-10-10 benchmark, there is little difference between refinement and no refinement as  in \Cref{ref_no_ref_airportA-7-10-10_distance} and \ref{ref_no_ref_airportA-7-10-10_fn_fp}.

We additionally considered an alternative sampling method, that showed little difference during experimental evaluation. We explain the method in Appendix \ref{appendix:neighborhood} and experimentally compare the two sampling methods in Appendix \ref{sec:ihmm_vs_hmm_comp}.

\subsection{Further Remarks} 

\textit{Results on largest benchmarks.} Learning monitors for the two biggest benchmarks, airportA-7-40-20 and airport-B-7-40-20, posed challenges. For this reason, we discuss learning monitors for these benchmarks separately. Some of the previously made conclusions hold, namely the positive effect of refinement on distance to $\mon ^{*}$. We also observe that the iHMM, unlike the HMM-based monitors, makes safe risk approximations. Although promising, we refrain from drawing strong conclusions from these experiments, as all monitors from all learning methods produce outputs far removed from $\mon ^{*}$. We theorize that the reasons are too high SC, and that the violating paths are significantly rarer in these benchmarks. A detailed discussion  is given in Appendix \ref{apx:big_benchmarks}.

\textit{Parameter Sensitivity.} We note the impact of coarse state space and low stopping condition thresholds. The experimental results highlight the importance of identifying the relevant state variables and choosing a low stopping thresholds for learning monitors that closely approximate $\mon ^{*}$. 

\textit{Computational Cost.} All experiments were run with a 12h timeout, but most benchmarks required significantly less time. Taking as an example evadeV-6-3 with SC=0.1, learning an iHMM with refinement took around 110s, learning an iHMM without refinement took around 220s, learning a HMM took around 9s, training a regression model took around 10s and training a conformal prediction took around 345s, tough importantly conformal prediction is the only method trained on a GPU. The evaluated monitors differ as well in terms of execution speed. Taking again as an example evadeV-6-3 tested on traces of length 12 with a horizon set to 20, the HMM, iHMM, regression and conformal prediction monitors produce risk estimations in correspondingly: 1.8, 20, 0.003 and 1.6 milliseconds. It's worth noting that optimizing the speed of learning and executing the iHMM monitors was not the focus of this work. 


%% file: related.tex
\section{Related Work}
Monitoring systems under uncertainty have been studied across many settings, from model-based approaches \cite{babaee2018predictive, DBLP:conf/ifm/JungesST24, babaee-rv19, DBLP:conf/cav/JungesTS20,camilli2021,DBLP:conf/rv/YoonC0F019} to model-free approaches \cite{cairoli2021neural,DBLP:conf/iccps/ZhaoHFDL24,torfah-rv23}. 
Approaches based on learning Markov models include \cite{babaee2018predictive}, which introduces a monitoring algorithm for the learned HMMs that relies on selecting a single most likely hidden state during monitoring, which compared to considering a distribution over possible states, lacks the precision for accurate monitoring under uncertainty.
This is expanded on in \cite{babaee-rv19} to include importance sampling (IS) to learn rare events more efficiently, similarly to our refinement procedure. However, their IS is based on using a modified simulation in which the rare events occur with a known higher likelihood and is thus not applicable in our setting.
An example of model-free approaches can be taken from \cite{cairoli2021neural}, which uses a neural approach for state estimation and conformal prediction on top for risk estimation. A bottleneck of this approach  is that it is designed for monitoring systems with deterministic dynamics, which stands in contrast to ours. 
The literature also includes several works addressing the problem of learning Markov models from data. These include active learning method for learning Markov Decision Processes \cite{activelearning} or an approach for learning Hidden Markov Models \cite{babaee2018predictive}. Both of these are based on Baum-Welch algorithm and hence highly sensitive to the training priors used. 
Related is also the idea of shielding in reinforcement learning, which has been applied to partially observable systems \cite{shielding_nam,shielding_chatterjee}. In contrast, however,  shielding requires to (pre-)compute partial-information schedulers, and foremost the availability of models. 
Several works have been also dedicated to runtime assurance (enforcement), where the question on how to use the monitors verdict to further change the behavior of a system \cite{DBLP:journals/tcad/YalcinkayaTDS23,8809550,simplex,8104025}. Our focus was on monitoring, and our approach can be integrated  into these approaches for a complete assurance pipeline.

%% file: conclusion.tex
We presented a framework for learning robust Markov model-based monitors.  Our methods come with guarantees on convergence to an Ideal Monitor. As shown experimentally, the learned monitors outperform the alternative methods in terms of accurate and cautious monitoring, especially compared to model-free approaches.

%% file: acknowledgements.tex
This work was partly supported by the Wallenberg AI, Autonomous Systems and Software Program (WASP), funded by the Knut and Alice Wallenberg Foundation, and by the NWO grant FuRoRe (OCENW.M.22.282).
The computations were enabled by resources provided by the National Academic Infrastructure for Supercomputing in Sweden (NAISS), partially funded by the Swedish Research Council through grant agreement no. 2022-06725.

%% file: appendix_refinement.tex
\label{appendix:neighborhood}

In the following we give more details on neighborhood path sampling methods introduced in \ref{sec:refinement}. The neighborhood sampling works in two steps, the first step finds a bad neighborhood, the second step samples from that bad neighborhood. 

The first step starts with identifying  bad prefixes, $\tau \in \hat{T}_\bot$, of the bad traces, $T_\bot$. We use two different methods of identifying bad prefixes, which distinguish the two refinement methods ('refinement with no splitting' and 'refinement with splitting'). 

In the first method we chose $N_{neigh} = \{0,\ldots,10\}$ linearly increasing in size prefixes $\hat{T}_\bot = \bigcup_{\tau\in T_\bot}\{\tau^{\leq \frac{n}{N_{neigh}}|\tau|} \mid n \in N_{neigh}\}$. 

In the second method we consider the model as containing two parts which might overlap. The first part, we call the initial section, models the probability of a trace reaching a state, the second part, we call the horizon section, models the risk of a state that is reached. For each bad trace, we identify if the initial section or the horizon section is worse. We do this by comparing the minimum and maximum risk of a randomly sampled state reachable with the bad trace. If the width of this risk interval is greater than the stopping condition threshold, we add the full trace to the bad prefixes, otherwise we add the empty trace. We hance will correspondingly sample from the neighborhood that starts where the trace ends, or the neighborhood that the trace spans. 

After the bad prefixes have been identified using either of the methods, we randomly sample from the states conditioned on observing the prefix $\tau'$, i.e., we sample a state $s$ with probability $\sum_{\pi\in\Pi^{|\tau'|-1}_\imc} Pr(\pi s\mid \tau')$, we call the set of sampled states $S_\bot$. This concludes the first step. 

In the second step, we sample from the $\suo$ starting at states in $S_\bot$, resulting in the paths $\mathcal E_\mathit{ref}$. 

The goal of the first approach is to uniformly sample from the whole neighborhood for the trace. The second approach aims to identify if the root of the wide risk interval stems from inadequate risk estimation or poor estimation of distribution of paths that induce the trace, and focuses on the root issue. 

\subsection{Experimental Comparison}

Experimentally refinement with splitting performs on average marginally better than refinement with no splitting, as seen in Figures \ref{ref_no_ref_evadeV-6_coarse_distance}, \ref{ref_no_ref_airportA-7-10-10_distance}, \ref{ref_no_ref_evadeV-6_fn_fp} and \ref{ref_no_ref_airportA-7-10-10_fn_fp}. The study of the conditions under which the refinement methods are beneficial, and the development of improved refinement methods, is left for further study. The remaining results of the experimental comparison between the two refinement methods are available in Appendix \ref{apx:ref_no_ref}. The experimental evaluation in \ref{apndx:modelvsfree} included the results of both refinement methods.

\begin{figure}[H]
        \includegraphics[width=0.8\linewidth]{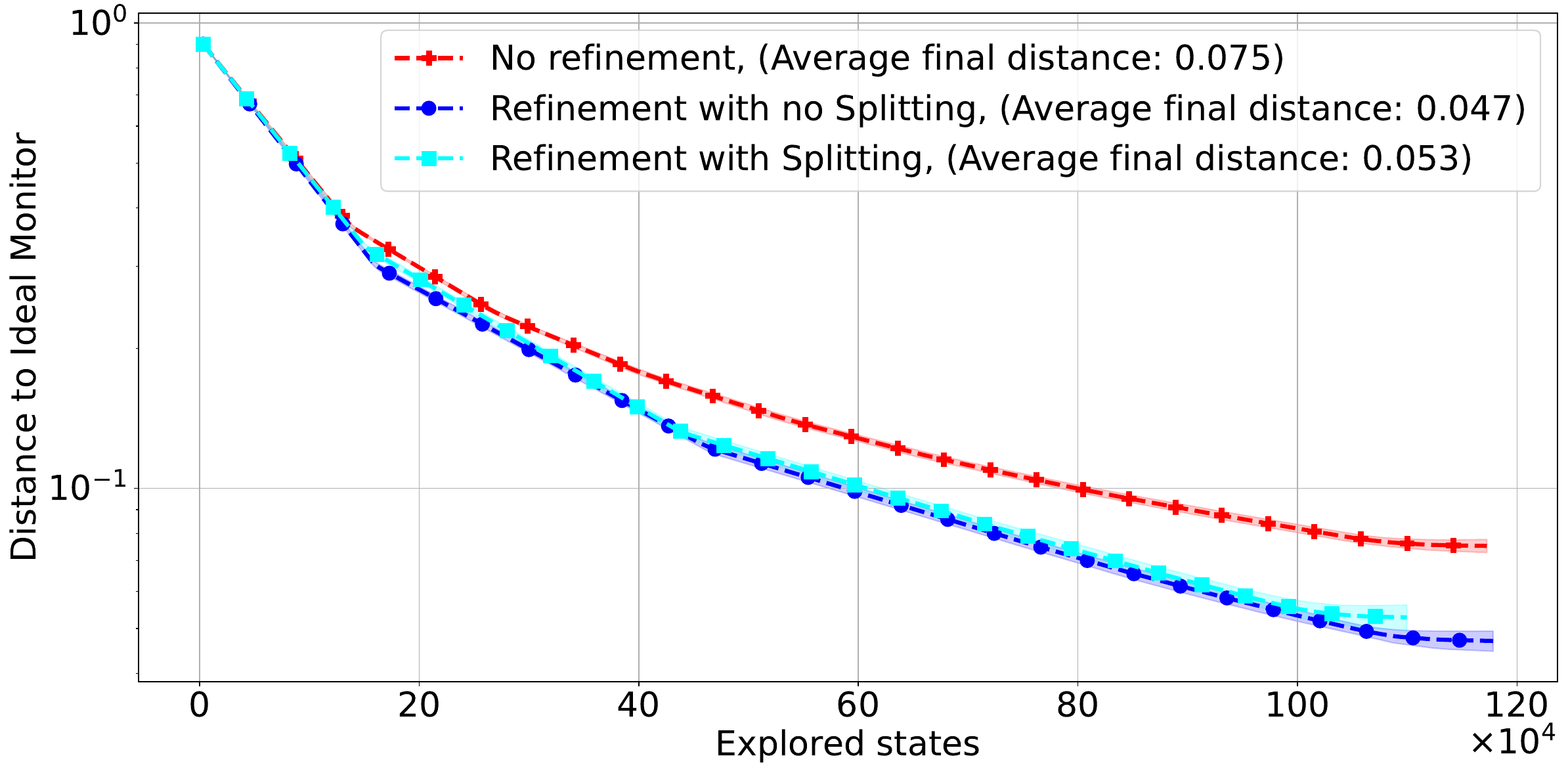}
        \caption{Comparing learning with and without refinement, in terms of distance to IM, coarse evadeV-6-3, SC =  0.01}
        \label{ref_no_ref_evadeV-6_coarse_distance}        
\end{figure}

\begin{figure}[H]
        \includegraphics[width=0.8\linewidth]{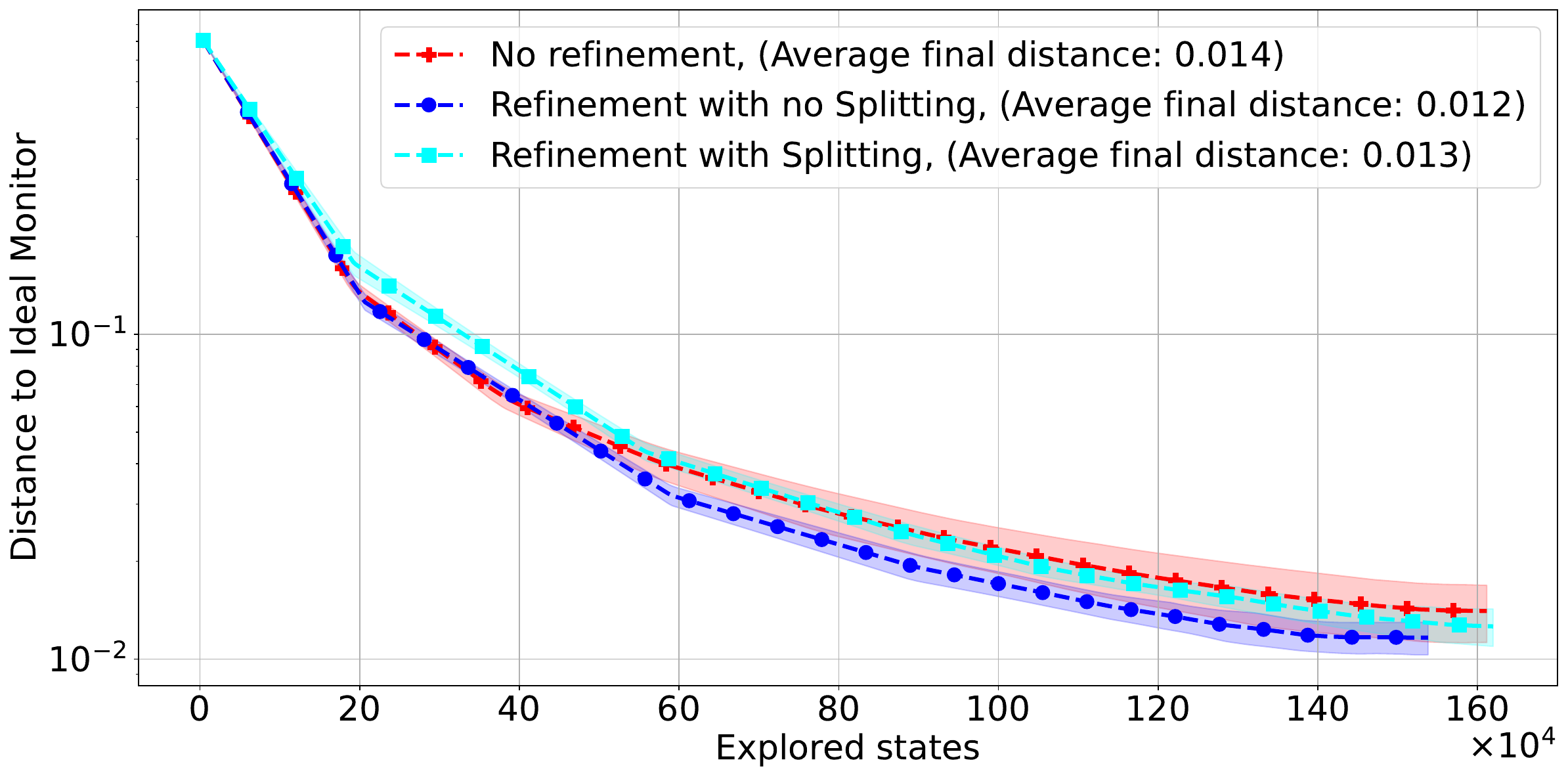}
        \caption{Comparing learning with and without refinement, in terms of distance to IM, airportA-7-10-10, SC = 0.001}
        \label{ref_no_ref_airportA-7-10-10_distance}
\end{figure}

\begin{figure}[H]
        \includegraphics[width=0.8\linewidth]{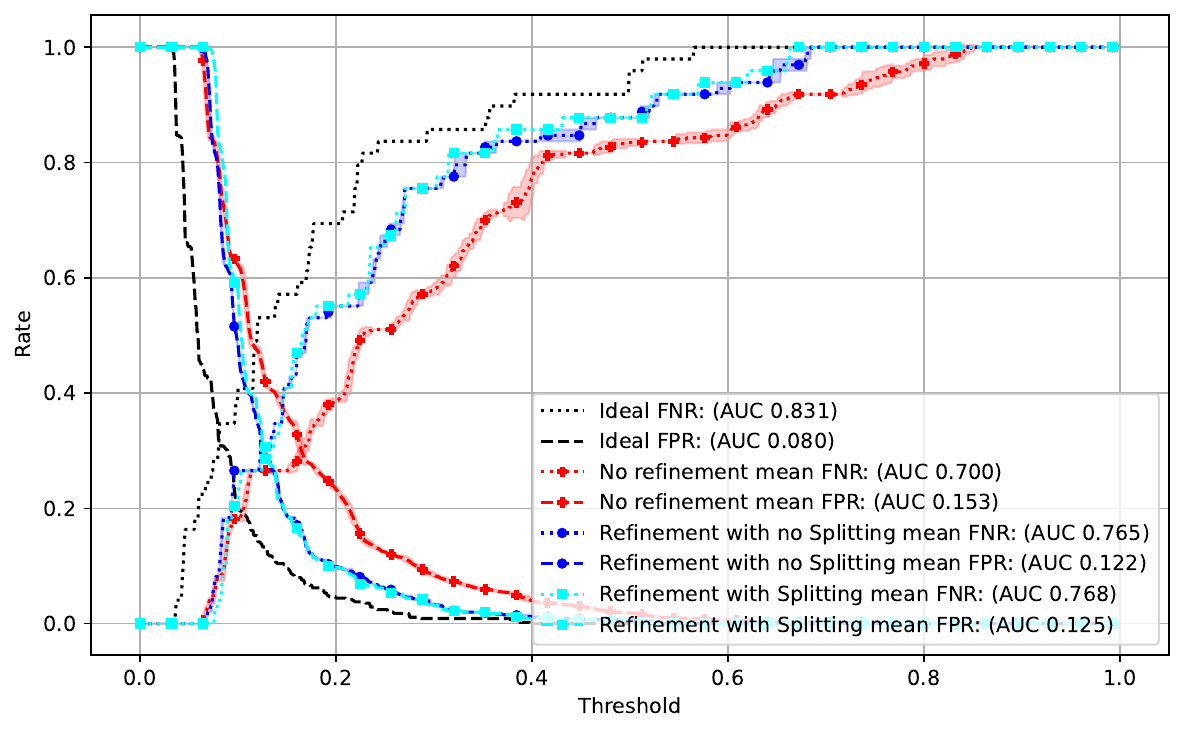}
        \caption{Comparing learning with and without refinement, in terms of FNR and FPR, coarse evadeV-6-3, SC= 0.01}
        \label{ref_no_ref_evadeV-6_fn_fp}
\end{figure}

\begin{figure}[H]
        \includegraphics[width=0.8\linewidth]{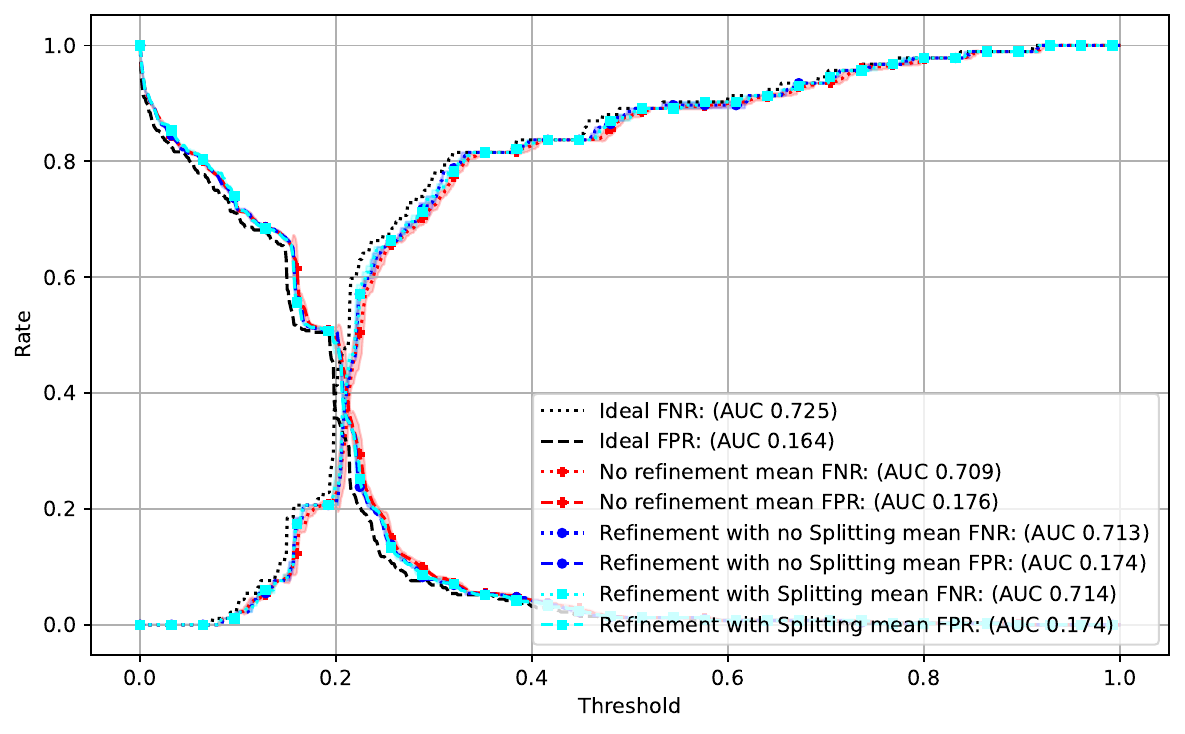}
        \caption{Comparing learning with and without refinement, in terms of FNR and FPR, airportA-7-10-10, SC= 0.001}
        \label{ref_no_ref_airportA-7-10-10_fn_fp}
\end{figure}

%% file: appendix.tex
\subsection{Model-based vs Model-free monitoring - Remaining experimental results}
\label{apndx:modelvsfree}

For each of the benchmarks not included in \Cref{sec:model_based_free}, we add corresponding analysis in terms of comparison of FNR and FPR across different thresholds between model-based and model-free monitors. The experimental evaluation of monitors for benchmarks airportA-7-40-20 and coarse airportB-7-40-20 is included in \Cref{apx:big_benchmarks}.

\begin{figure}[H]
    \centering
    \includegraphics[width=0.65\linewidth]{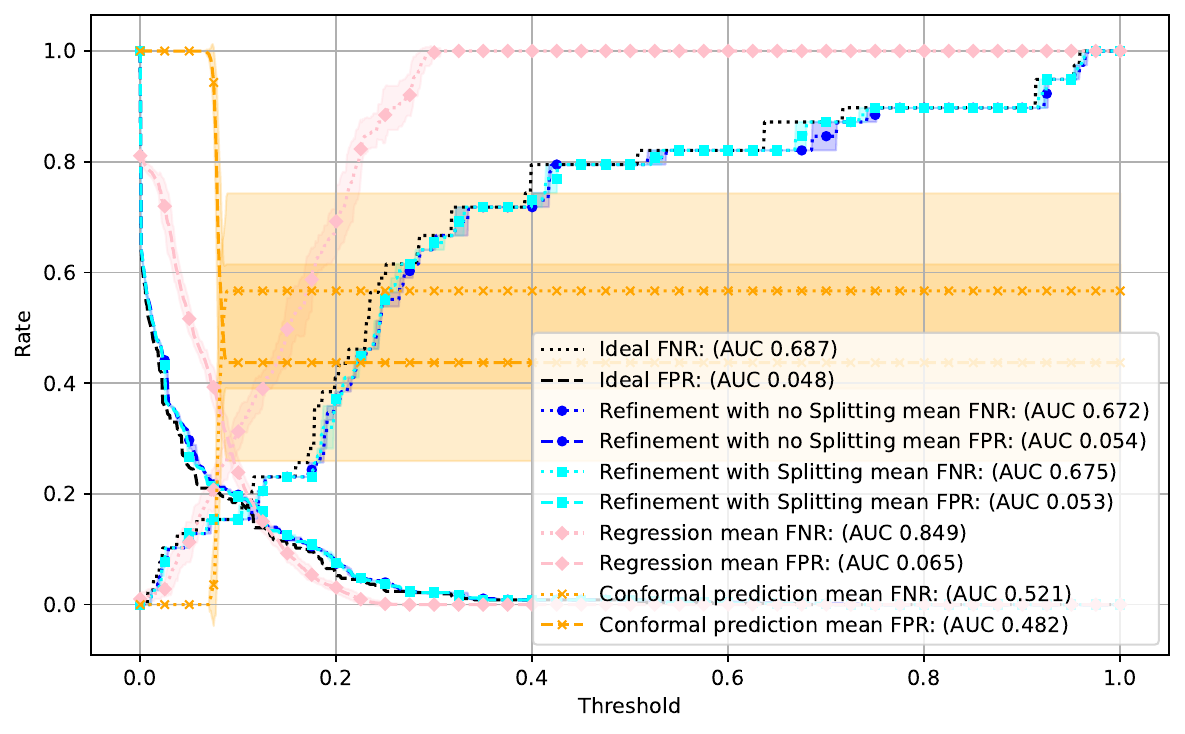}
    \caption{SnL-10x10, SC = 0.001 - FNR and FPR comparison between model-based and model-free methods}
    \label{model_free_model_based_SnL-10x10}
\end{figure}

\begin{figure}[H]
    \centering
    \includegraphics[width=0.65\linewidth]{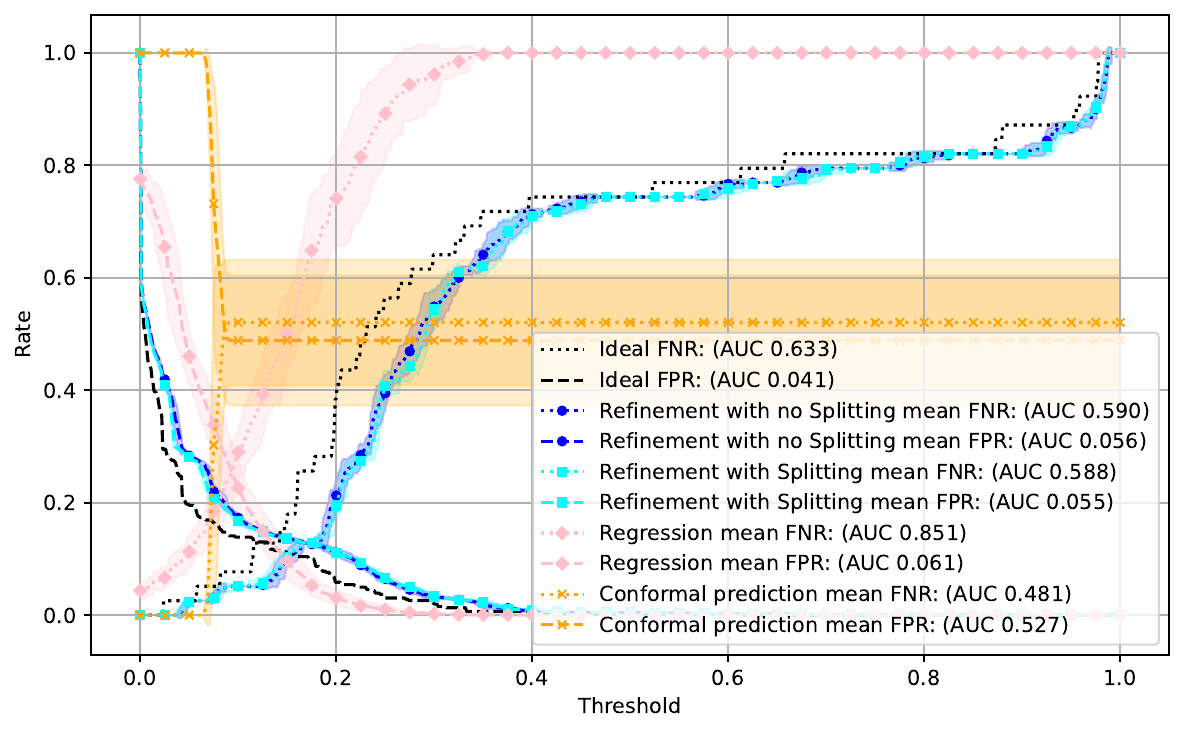}
    \caption{SnL-10x10, SC = 0.01 - FNR and FPR comparison between model-based and model-free methods}
    \label{model_free_model_based_SnL-10x10_high_st}
\end{figure}

\begin{figure}[H]
    \centering
    \includegraphics[width=0.65\linewidth]{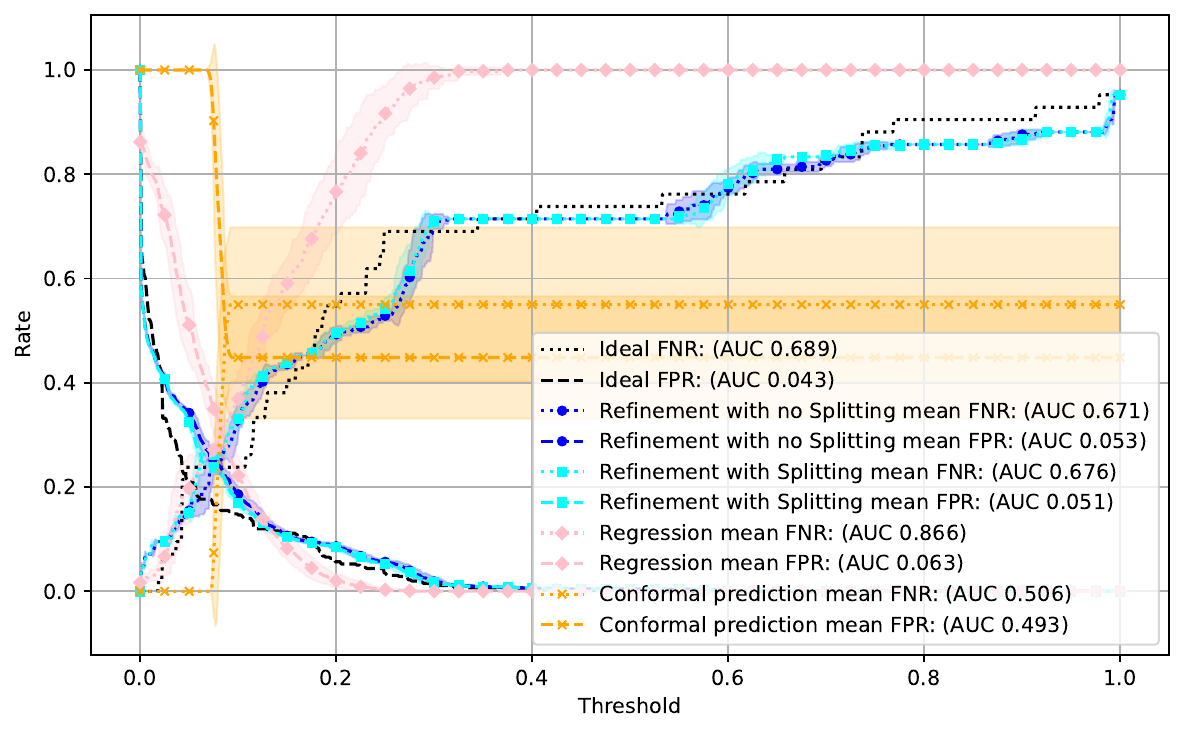}
    \caption{Coarse SnL-10x10, SC = 0.01 - FNR and FPR comparison between model-based and model-free methods}
    \label{model_free_model_based_SnL-10x10_coarse}
\end{figure}

\begin{figure}[H]
    \centering
    \includegraphics[width=0.65\linewidth]{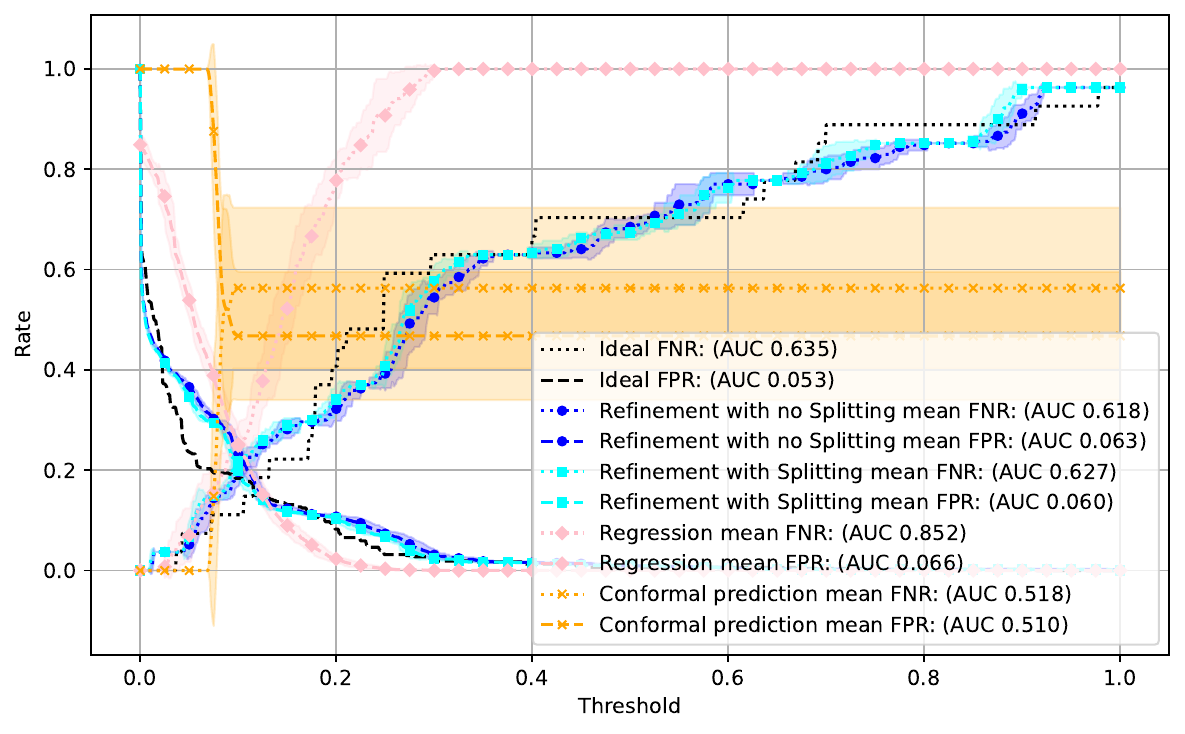}
    \caption{Coarse SnL-10x10, SC = 0.1 - FNR and FPR comparison between model-based and model-free methods}
    \label{model_free_model_based_SnL-10x10_coarse_high_st}
\end{figure}

\begin{figure}[H]
    \centering
    \includegraphics[width=0.65\linewidth]{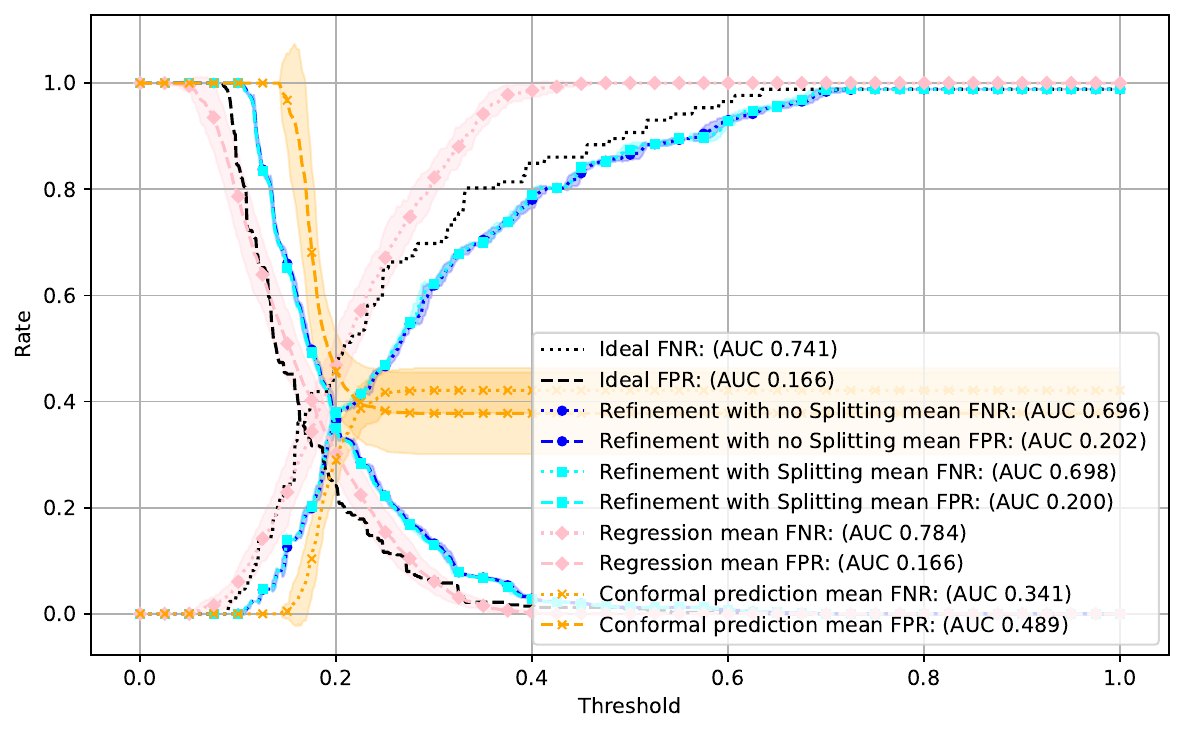}
    \caption{evadeV-5-3, SC = 0.01 - FNR and FPR comparison between model-based and model-free methods}
    \label{model_free_model_based_evadeV-5-3}
\end{figure}

\begin{figure}[H]
    \centering
    \includegraphics[width=0.65\linewidth]{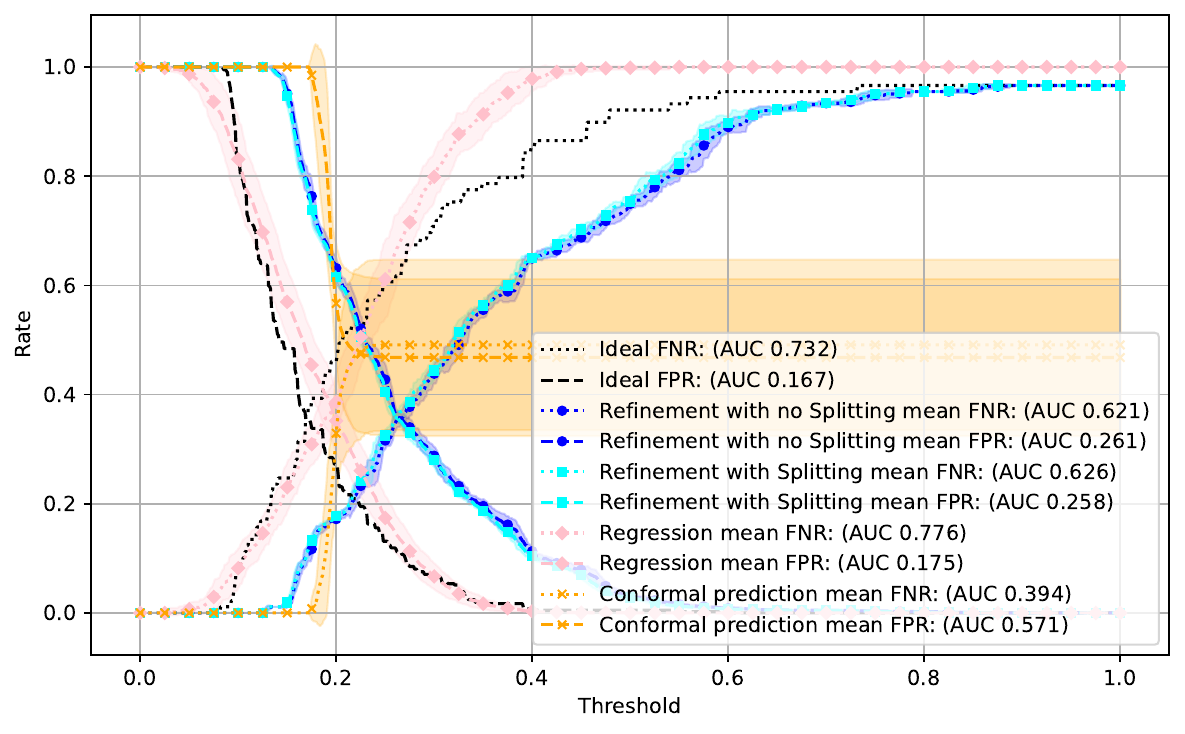}
    \caption{evadeV-5-3, SC = 0.1 - FNR and FPR comparison between model-based and model-free methods}
    \label{model_free_model_based_evadeV-5-3_high_st}
\end{figure}

\begin{figure}[H]
    \centering
    \includegraphics[width=0.65\linewidth]{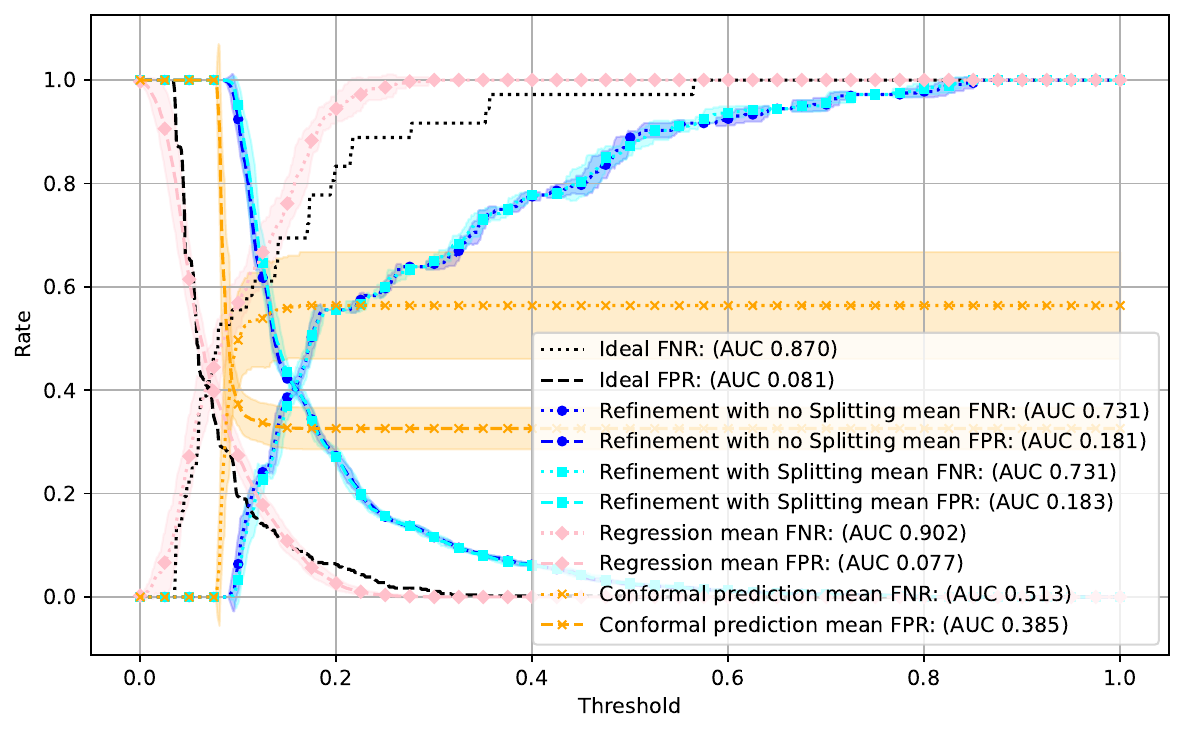}
    \caption{evadeV-6-3, SC = 0.01 - FNR and FPR comparison between model-based and model-free methods}
    \label{model_free_model_based_evadeV-6-3}
\end{figure}

\begin{figure}[H]
    \centering
    \includegraphics[width=0.65\linewidth]{figures/model_based_vs_model_free/rq_3_evadeV-6-3_high_st_FN_FP_model_based_vs_model_free.pdf}
    \caption{evadeV-6-3, SC =  0.1 - FNR and FPR comparison between model-based and model-free methods}
    \label{model_free_model_based_evadeV-6-3_high_st}
\end{figure}

\begin{figure}[H]
    \centering
    \includegraphics[width=0.65\linewidth]{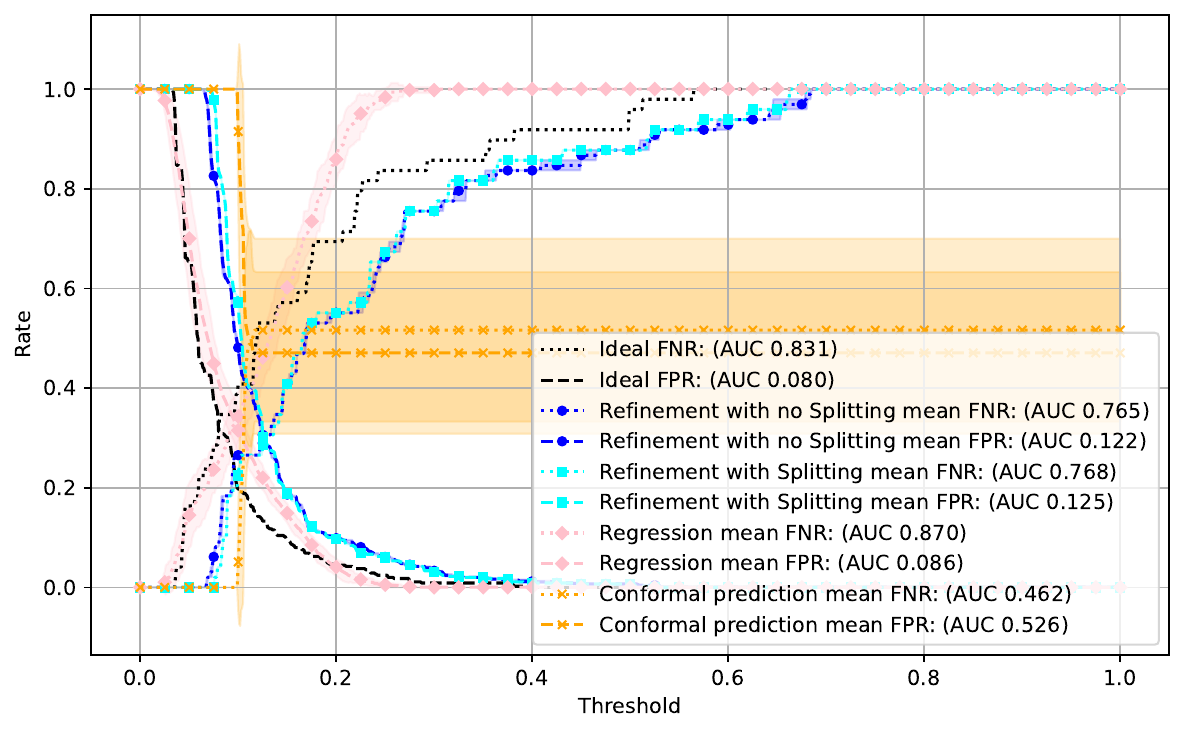}
    \caption{Coarse evadeV-6-3, SC = 0.01 - FNR and FPR comparison between model-based and model-free methods}
    \label{model_free_model_based_evadeV-6-3_coarse}
\end{figure}

\begin{figure}[H]
    \centering
    \includegraphics[width=0.65\linewidth]{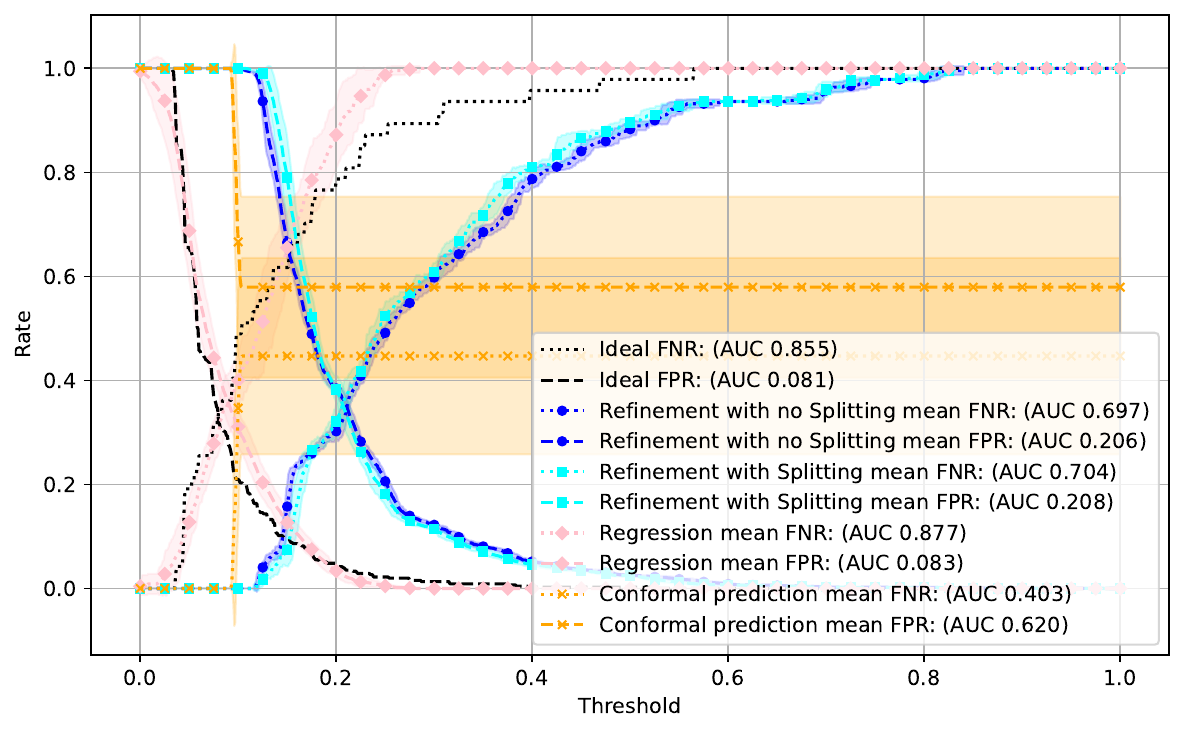}
    \caption{Coarse evadeV-6-3, SC = 0.1 - FNR and FPR comparison between model-based and model-free methods}
    \label{model_free_model_based_evadeV-6-3_coarse_high_st}
\end{figure}

\begin{figure}[H]
    \centering
    \includegraphics[width=0.65\linewidth]{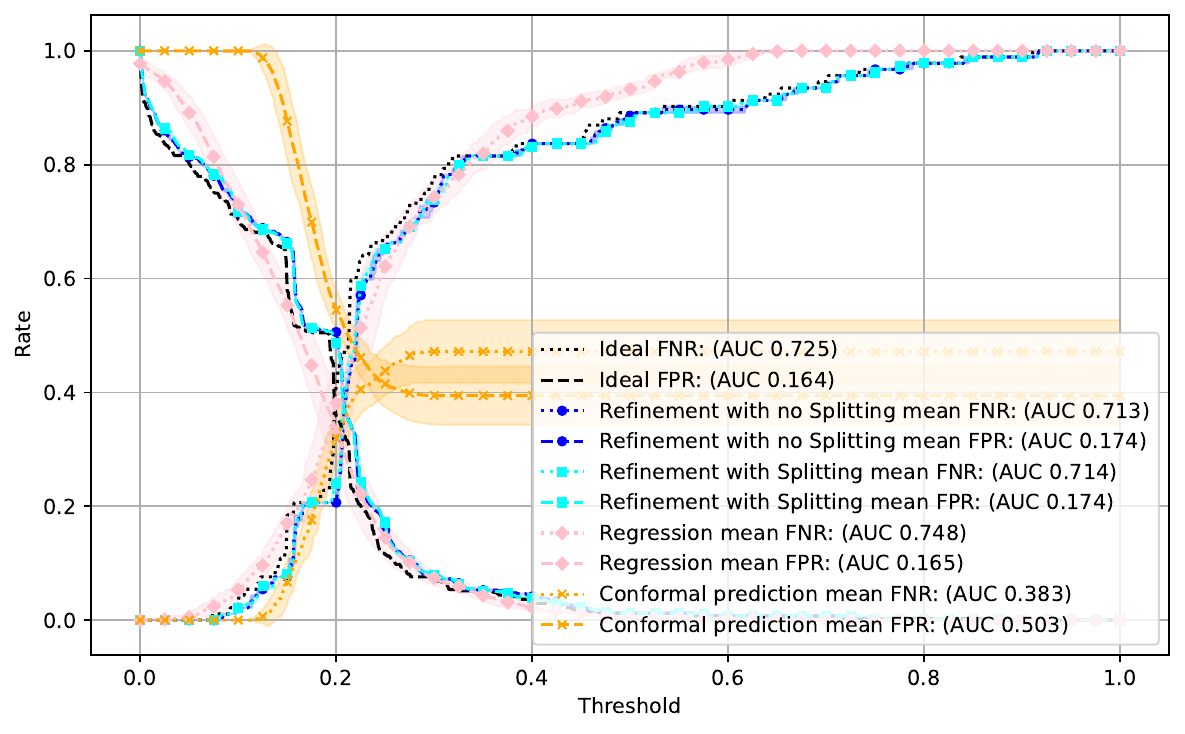}
    \caption{airportA-7-10-10, SC = 0.001 - FNR and FPR comparison between model-based and model-free methods}
\end{figure}

\begin{figure}[H]
    \centering
    \includegraphics[width=0.65\linewidth]{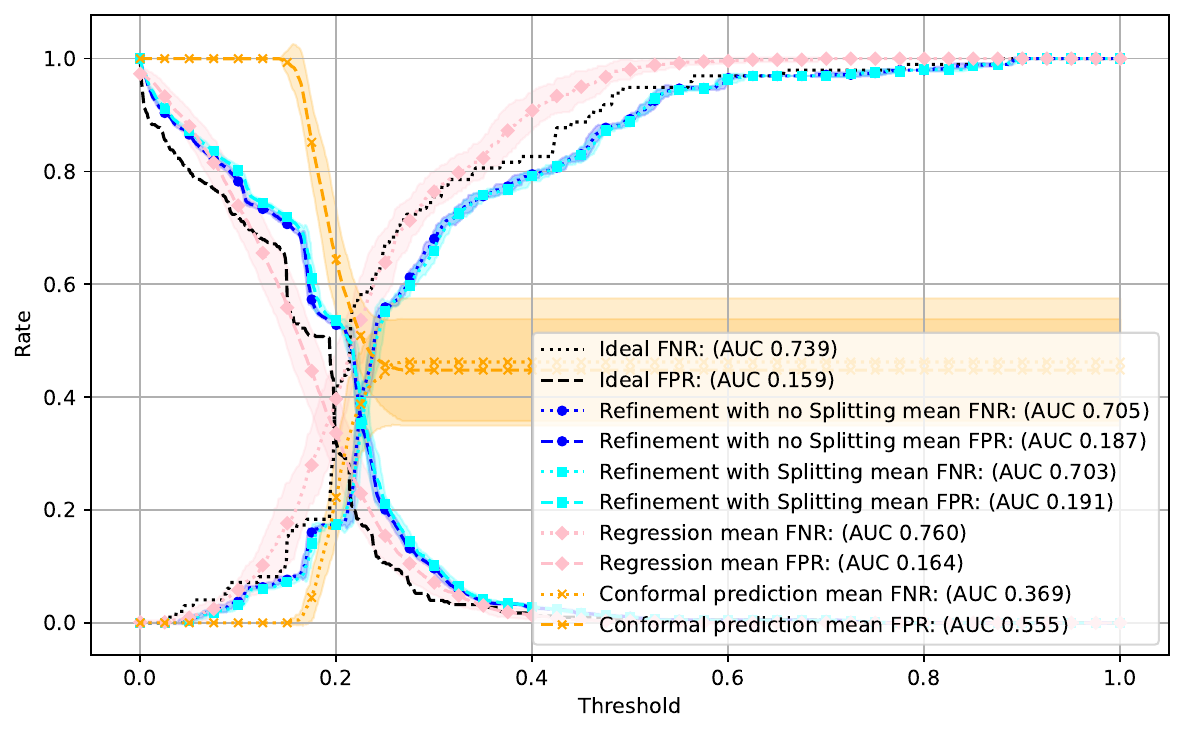}
    \caption{airportA-7-10-10, SC = 0.01 - FNR and FPR comparison between model-based and model-free methods}
\end{figure}

\begin{figure}[H]
    \centering
    \includegraphics[width=0.65\linewidth]{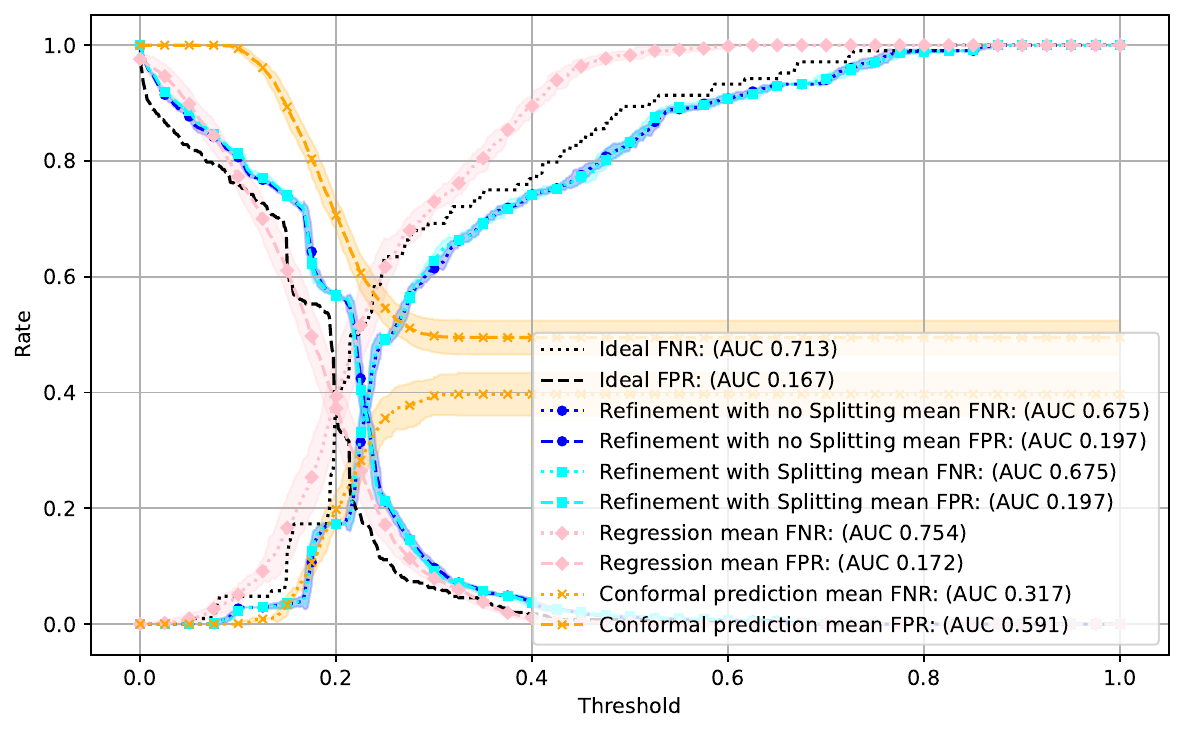}
    \caption{Coarse airportA-7-10-10, SC = 0.01 - FNR and FPR comparison between model-based and model-free methods}
\end{figure}

\begin{figure}[H]
    \centering
    \includegraphics[width=0.65\linewidth]{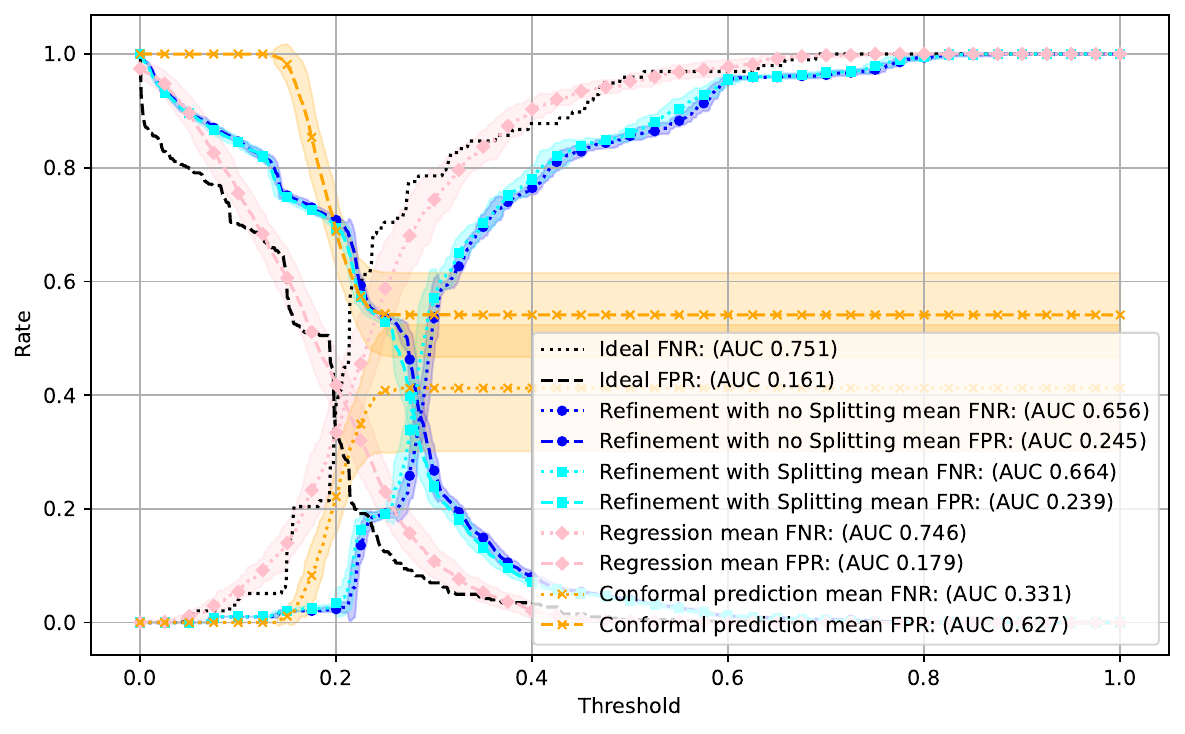}
    \caption{Coarse airportA-7-10-10, SC = 0.1 - FNR and FPR comparison between model-based and model-free methods}
\end{figure}

\subsection{iHMM vs HMM based monitoring}
\label{apndx:ihmm_hmm}

For each of the benchmarks not included in \Cref{sec:ihmm_vs_hmm_comp}, we add corresponding analysis in terms of comparison of FNR and FPR across different thresholds between iHMM and HMM monitors and classification of risks as under and over approximating. The experimental evaluation of monitors for benchmarks airportA-7-40-20 and coarse airportB-7-40-20 is included in \Cref{apx:big_benchmarks}.

\begin{figure}[H]
    \centering
    \includegraphics[width=0.65\linewidth]{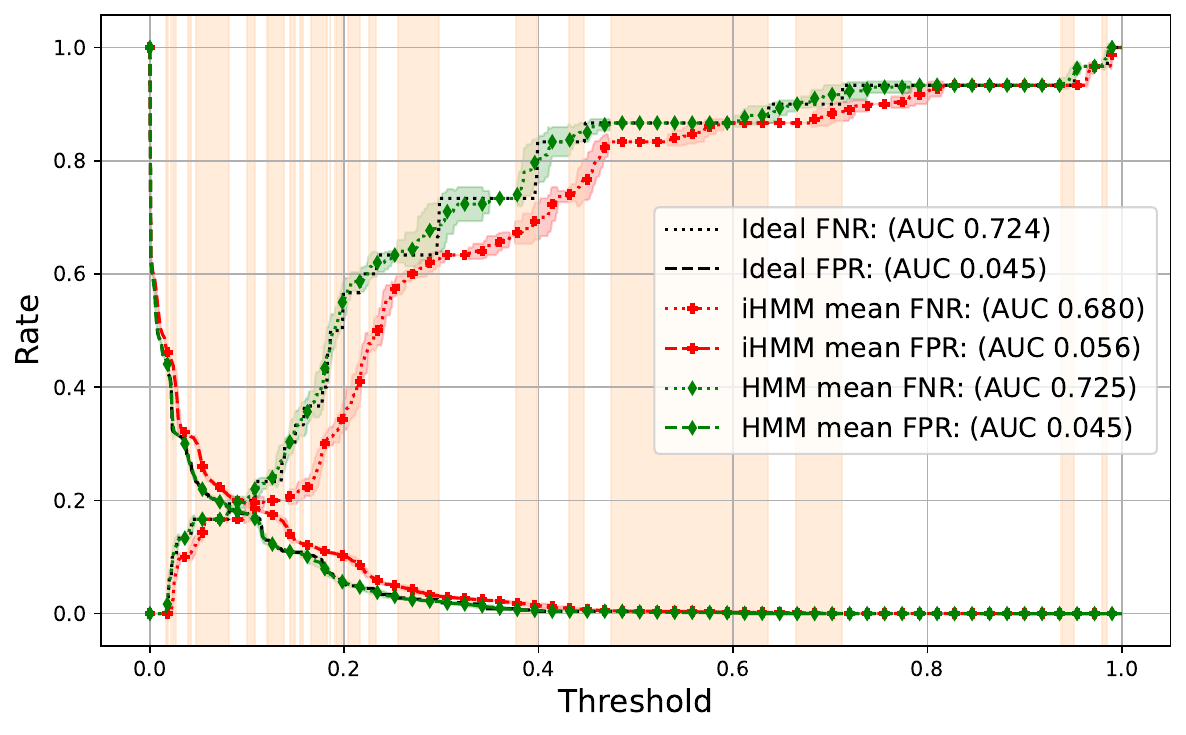}
    \caption{SnL-10x10, SC = 0.001 - FNR and FPR comparison between iHMM and HMM}
    \label{ihmm_hmm_SnL-10x10 }
\end{figure}

\begin{figure}[H]
    \centering
    \includegraphics[width=0.65\linewidth]{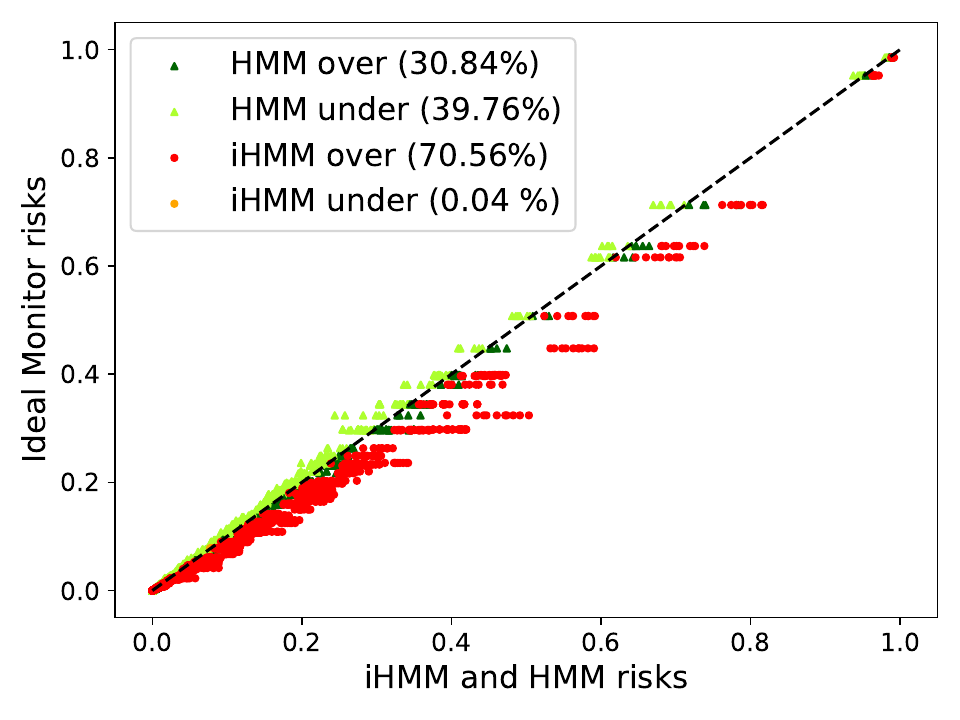}
    \caption{SnL-10x10, SC = 0.001 - risk estimation subject to target, comparison between iHMM and HMM}
    \label{ihmm_hmm_overestimation_SnL-10x10 }
\end{figure}

\begin{figure}[H]
    \centering
    \includegraphics[width=0.65\linewidth]{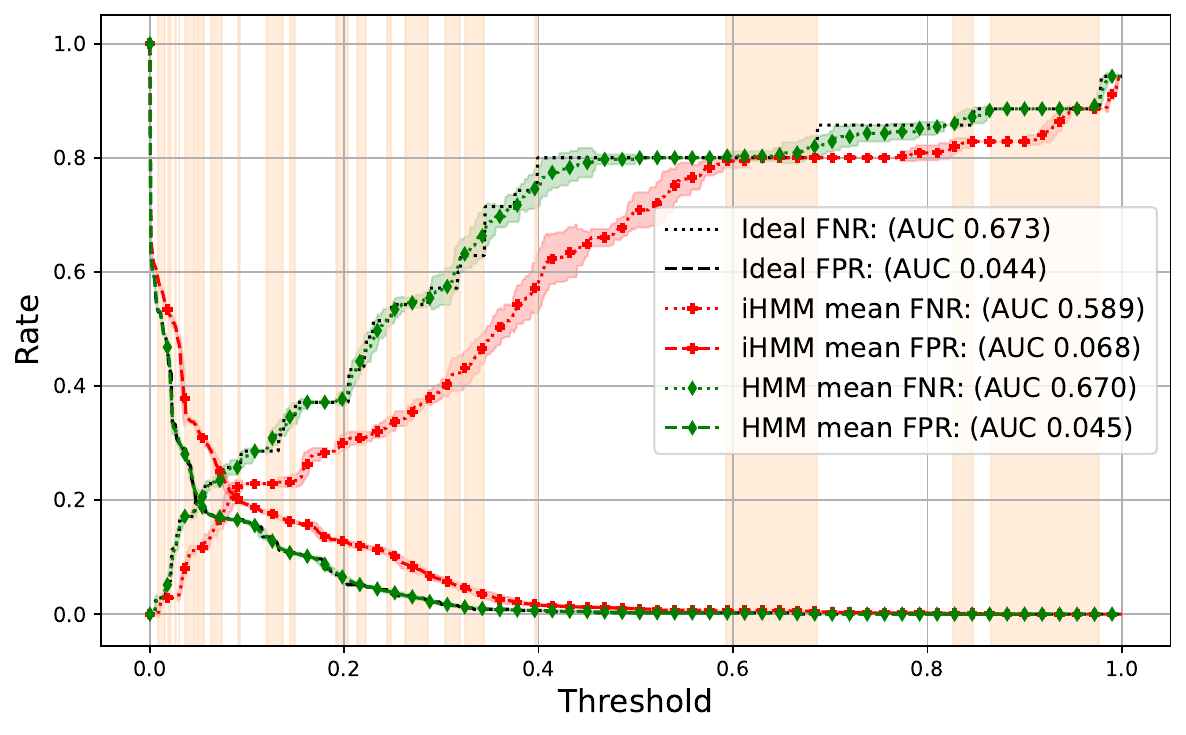}
    \caption{SnL-10x10, SC = 0.01 - FNR and FPR comparison between iHMM and HMM}
    \label{ihmm_hmm_SnL-10x10 }
\end{figure}

\begin{figure}[H]
    \centering
    \includegraphics[width=0.65\linewidth]{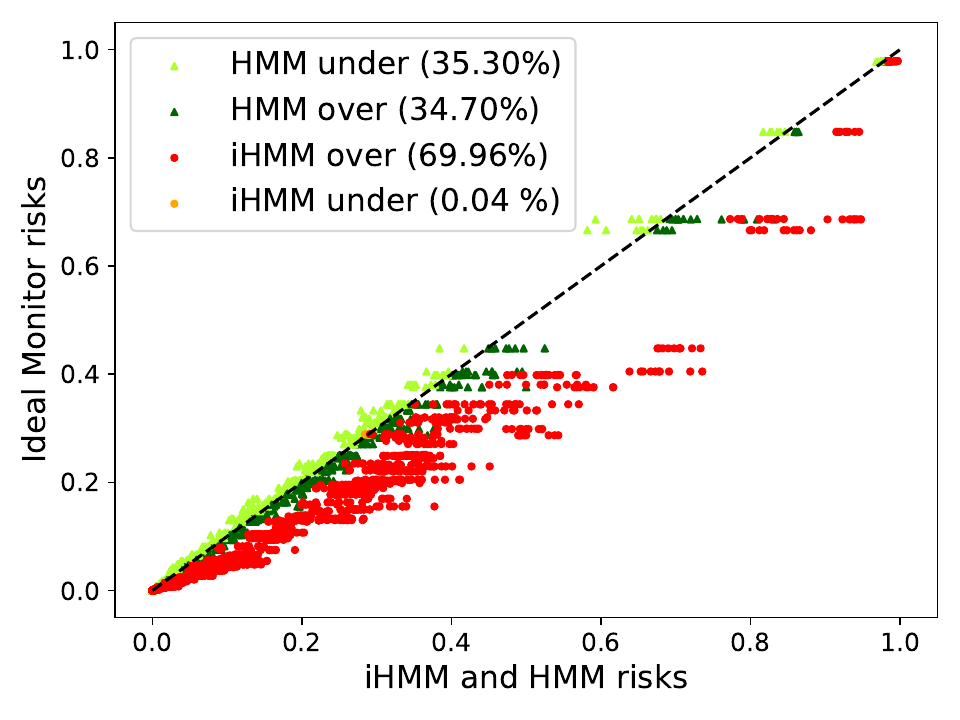}
    \caption{SnL-10x10, SC = 0.01 - risk estimation subject to target, comparison between iHMM and HMM}
    \label{ihmm_hmm_overestimation_SnL-10x10 }
\end{figure}

\begin{figure}[H]
    \centering
    \includegraphics[width=0.65\linewidth]{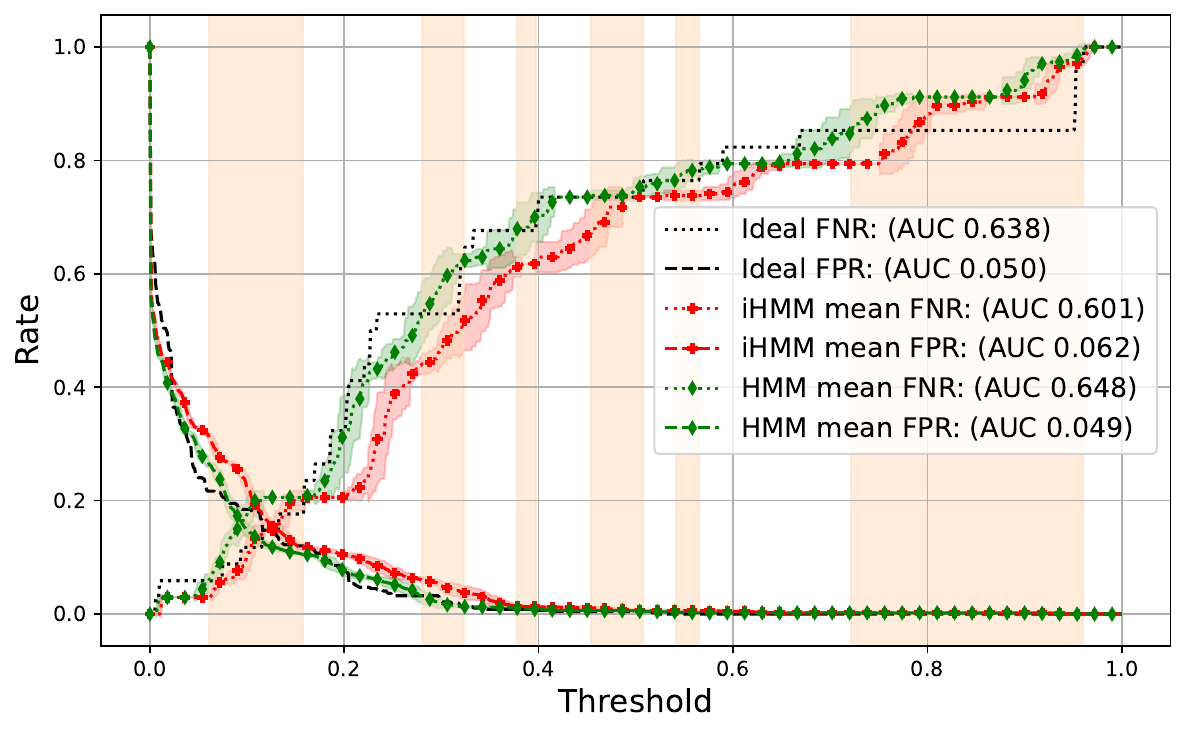}
    \caption{Coarse SnL-10x10, SC = 0.01 - FNR and FPR comparison between iHMM and HMM}
    \label{ihmm_hmm_SnL-10x10 }
\end{figure}

\begin{figure}[H]
    \centering
    \includegraphics[width=0.65\linewidth]{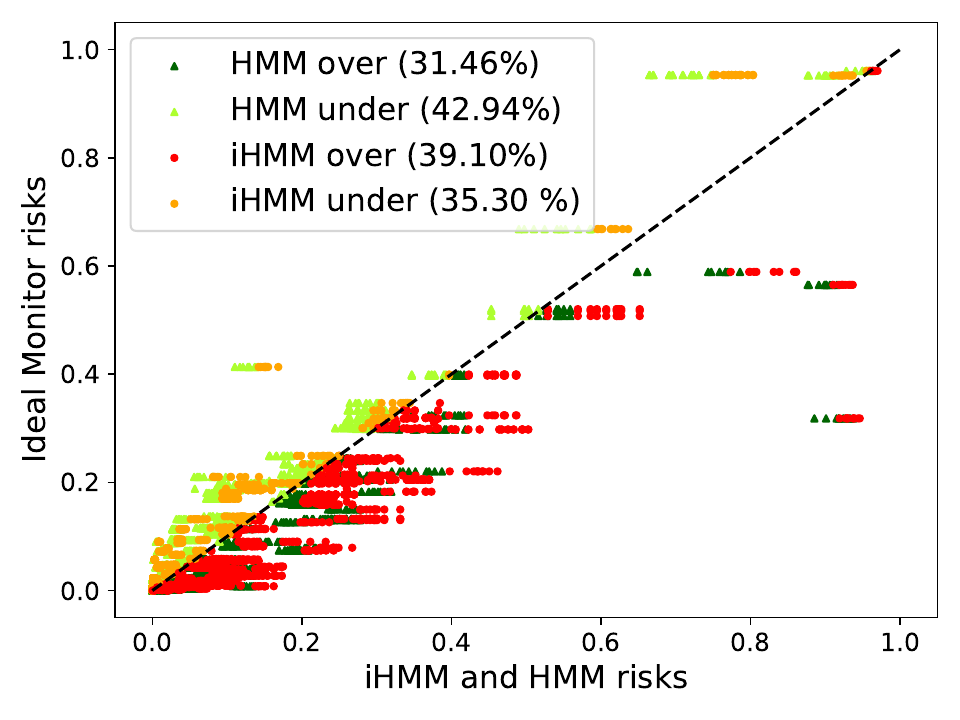}
    \caption{Coarse SnL-10x10, SC = 0.01 - risk estimation subject to target, comparison between iHMM and HMM}
    \label{ihmm_hmm_overestimation_SnL-10x10 }
\end{figure}

\begin{figure}[H]
    \centering
    \includegraphics[width=0.65\linewidth]{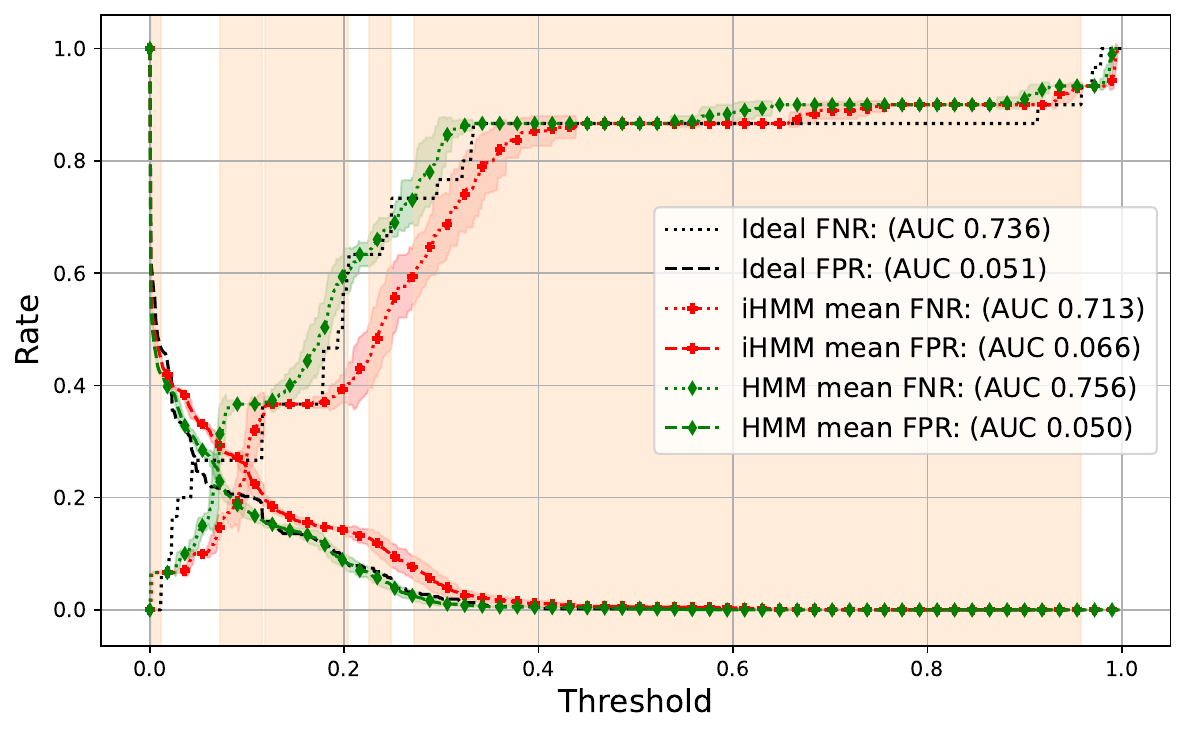}
    \caption{Coarse SnL-10x10, SC = 0.1 - FNR and FPR comparison between iHMM and HMM}
    \label{ihmm_hmm_SnL-10x10 }
\end{figure}

\begin{figure}[H]
    \centering
    \includegraphics[width=0.65\linewidth]{figures/iHMM_vs_HMM/rq_1_SnL-10x10_coarse_overestimation.pdf}
    \caption{Coarse SnL-10x10, SC = 0.1 - risk estimation subject to target, comparison between iHMM and HMM}
    \label{ihmm_hmm_SnL-10x10_corase}
\end{figure}

\begin{figure}[H]
    \centering
    \includegraphics[width=0.65\linewidth]{figures/iHMM_vs_HMM/rq_1_evadeV-5-3_FN_FP_multi_new.pdf}
    \caption{evadeV-5-3, SC = 0.01 - FNR and FPR comparison between iHMM and HMM}
    \label{ihmm_hmm_evadeV-5-3_high_st}
\end{figure}

\begin{figure}[H]
    \centering
    \includegraphics[width=0.64\linewidth]{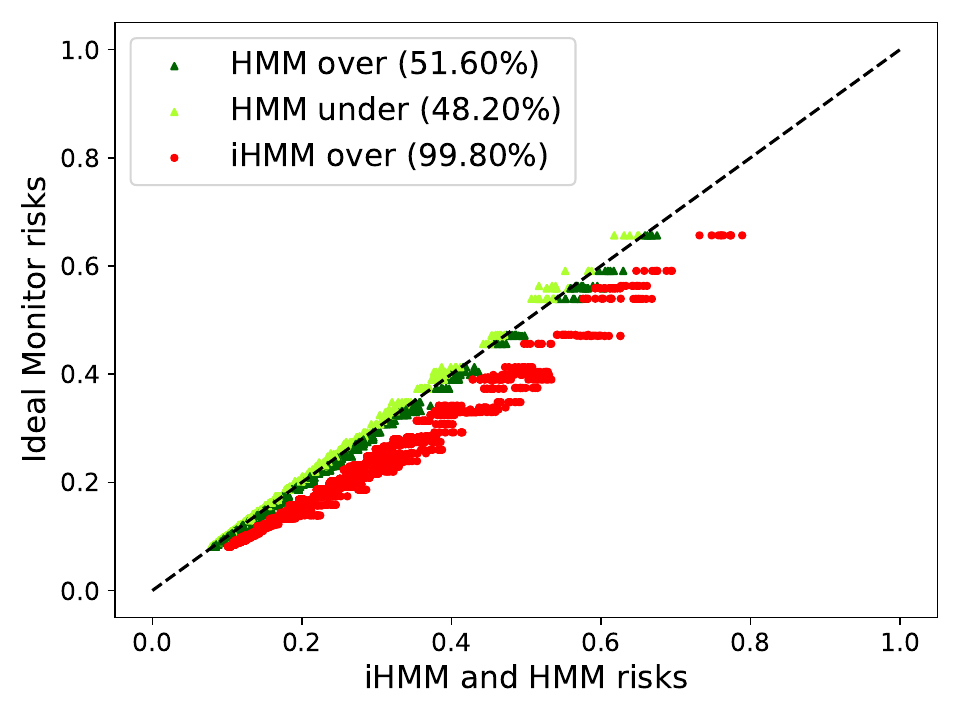}
    \caption{evadeV-5-3, SC = 0.01 - risk estimation subject to target, comparison between iHMM and HMM}
    \label{ihmm_hmm_overestimation_evadeV-5-3_high_st}
\end{figure}

\begin{figure}[H]
    \centering
    \includegraphics[width=0.65\linewidth]{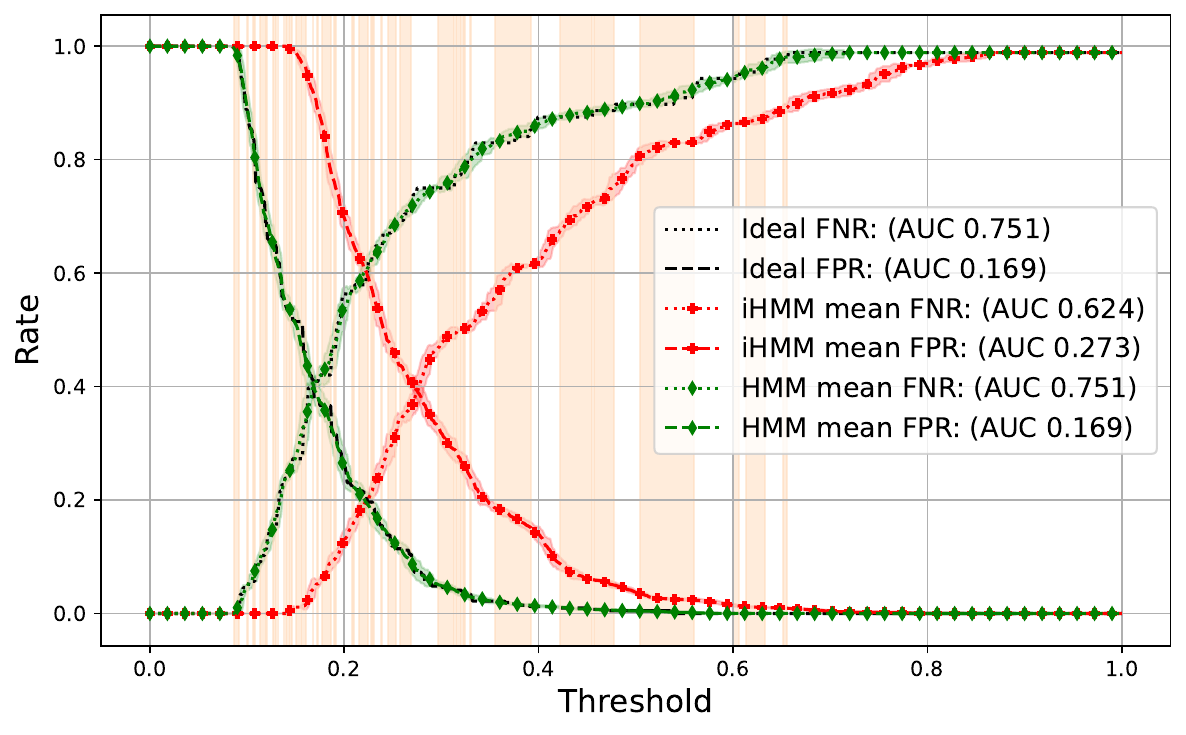}
    \caption{evadeV-5-3, SC = 0.1 - FNR and FPR comparison between iHMM and HMM}
    \label{ihmm_hmm_evadeV-5-3_high_st}
\end{figure}

\begin{figure}[H]
    \centering
    \includegraphics[width=0.64\linewidth]{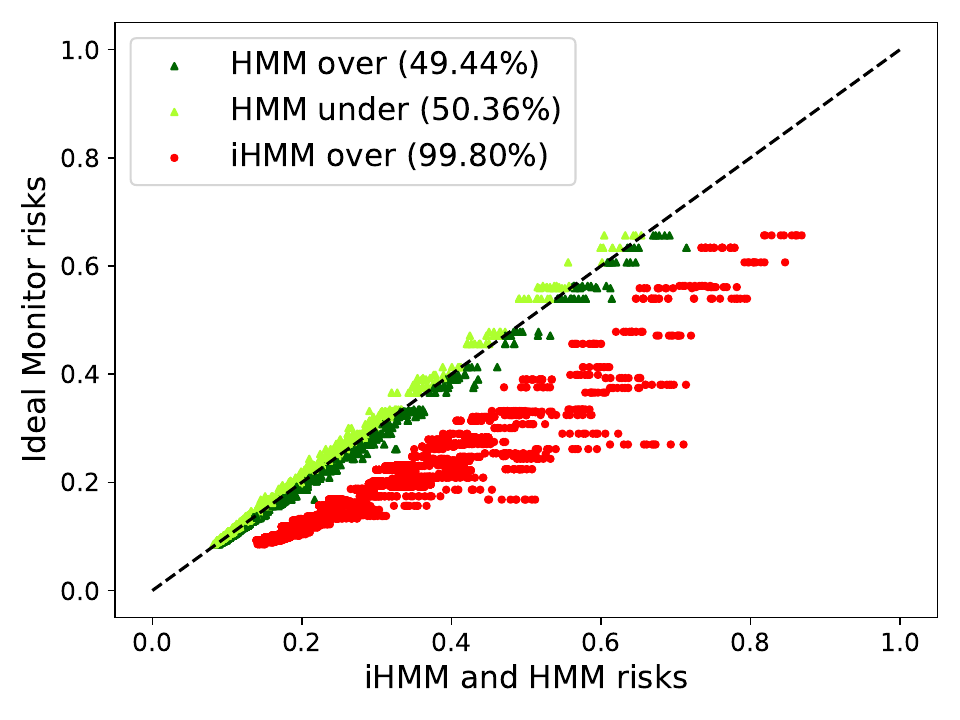}
    \caption{evadeV-5-3, SC = 0.1 - risk estimation subject to target, comparison between iHMM and HMM}
    \label{ihmm_hmm_overestimation_evadeV-5-3_high_st}
\end{figure}

\begin{figure}[H]
    \centering
    \includegraphics[width=0.64\linewidth]{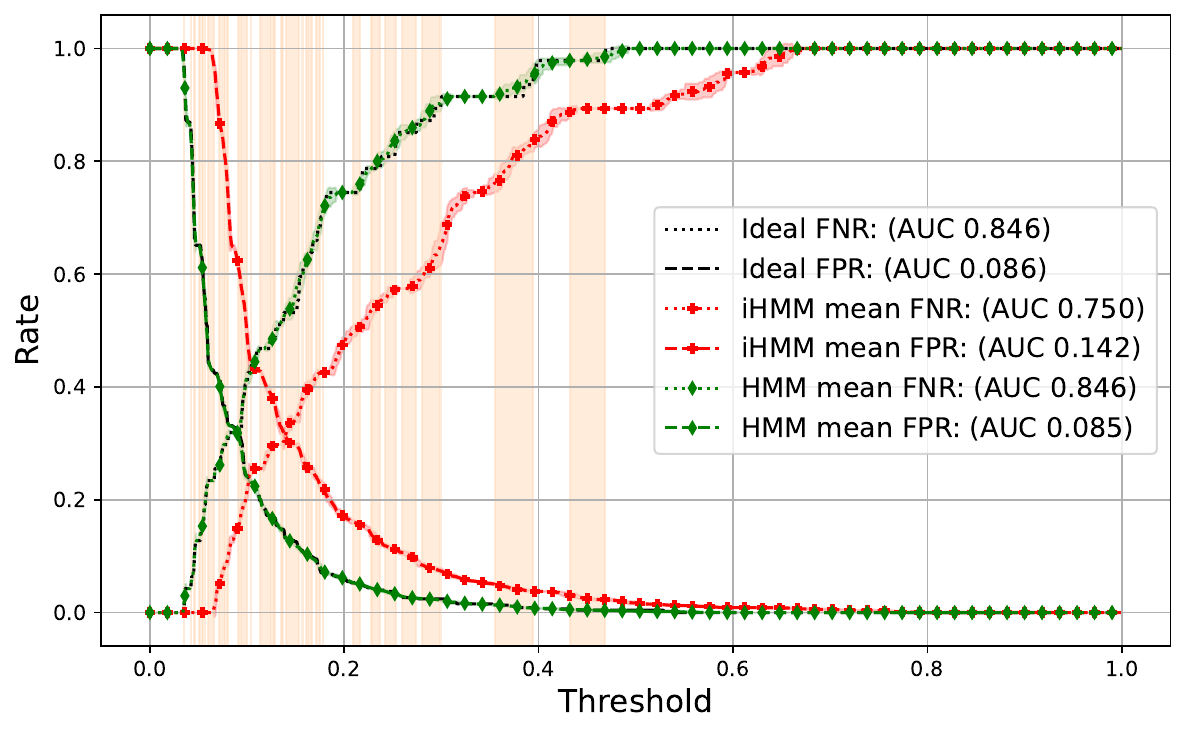}
    \caption{evadeV-6-3 , SC = 0.01 - FNR and FPR comparison between iHMM and HMM}
    \label{ihmm_hmm_evadeV-6-3}
\end{figure}

\begin{figure}[H]
    \centering
    \includegraphics[width=0.64\linewidth]{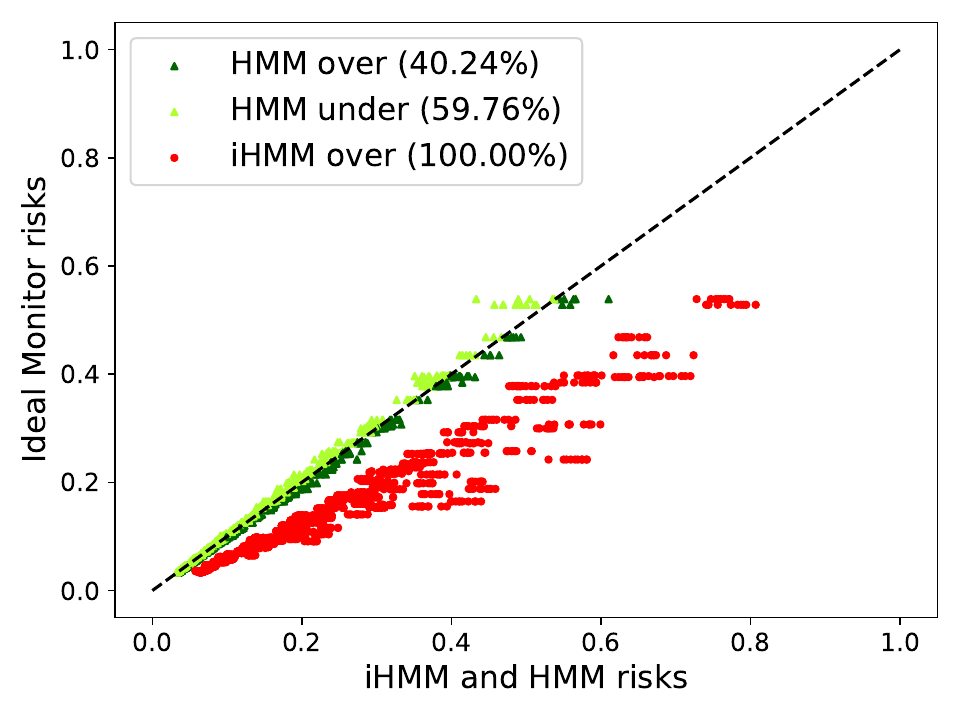}
    \caption{evadeV-6-3, SC =  0.01 - risk estimation subject to target, comparison between iHMM and HMM}
    \label{ihmm_hmm_overestimation_evadeV-6-3}
\end{figure}

\begin{figure}[H]
    \centering
    \includegraphics[width=0.64\linewidth]{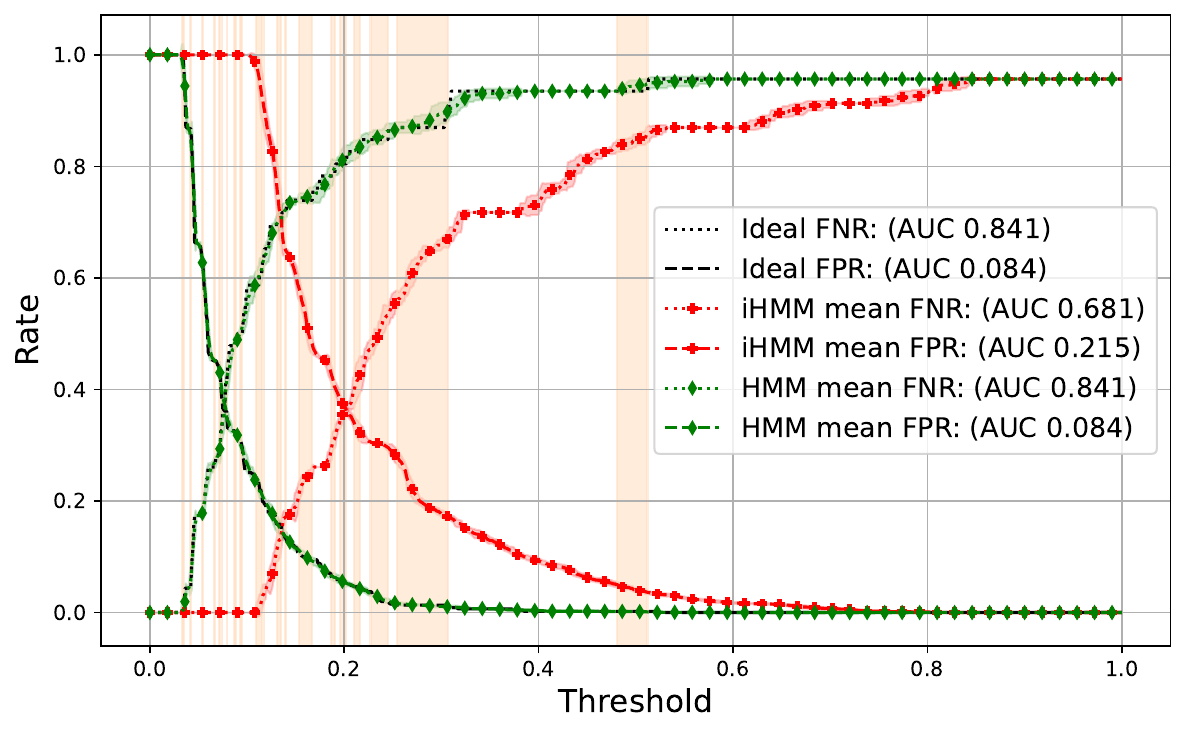}
    \caption{evadeV-6-3, SC = 0.1 - FNR and FPR comparison between iHMM and HMM}
    \label{ihmm_hmm_evadeV-6-3_high_st}
\end{figure}

\begin{figure}[H]
    \centering
    \includegraphics[width=0.64\linewidth]{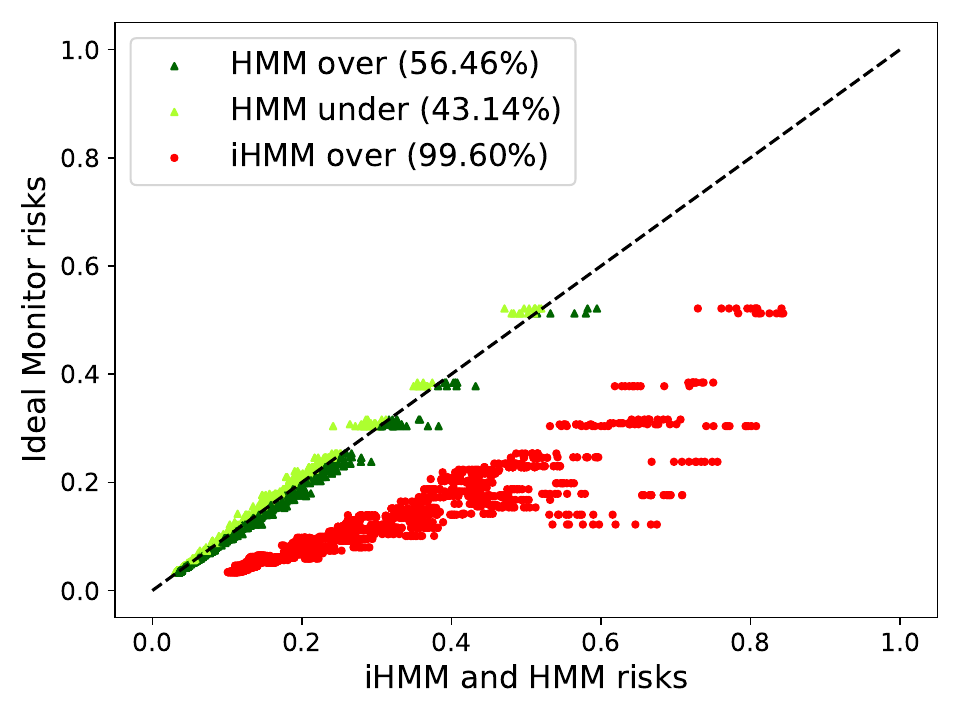}
    \caption{evadeV-6-3, SC = 0.1 - risk estimation subject to target, comparison between iHMM and HMM}
    \label{ihmm_hmm_overestimation_evadeV-6-3_high_st}
\end{figure}

\begin{figure}[H]
    \centering
    \includegraphics[width=0.64\linewidth]{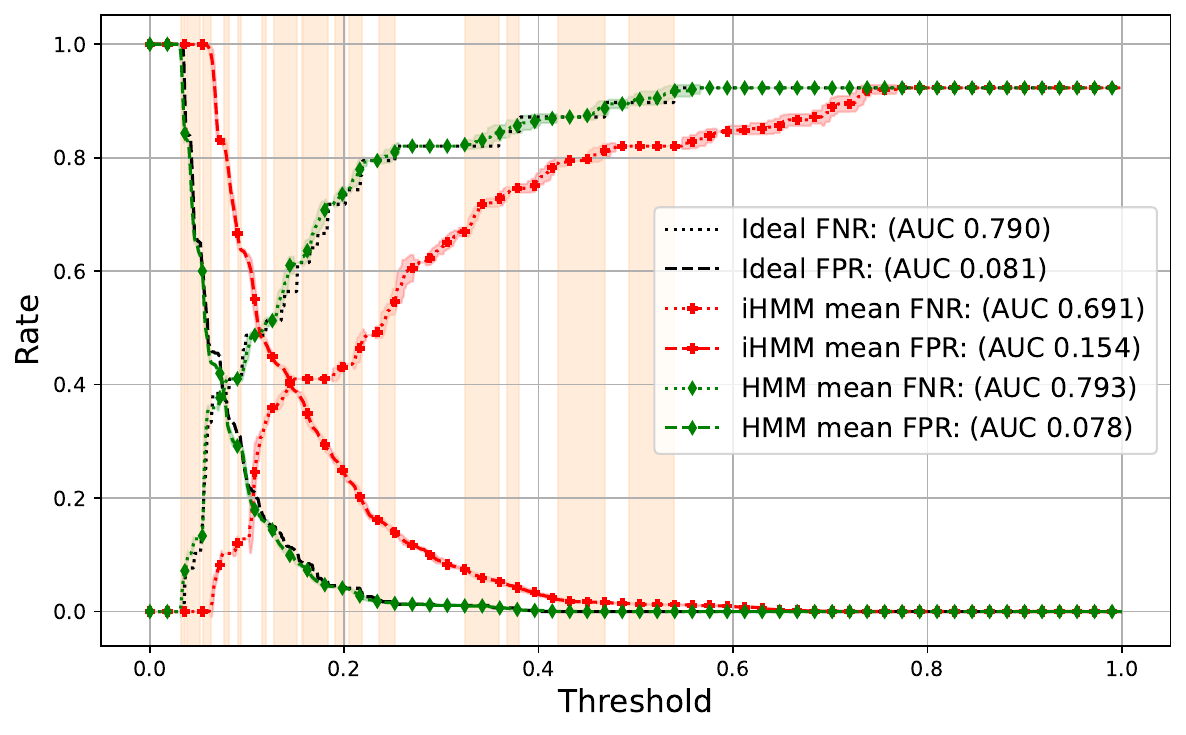}
    \caption{Coarse evadeV-6-3, SC = 0.01 - FNR and FPR comparison between iHMM and HMM}
    \label{ihmm_hmm_evadeV-6-3_coarse}
\end{figure}

\begin{figure}[H]
    \centering
    \includegraphics[width=0.64\linewidth]{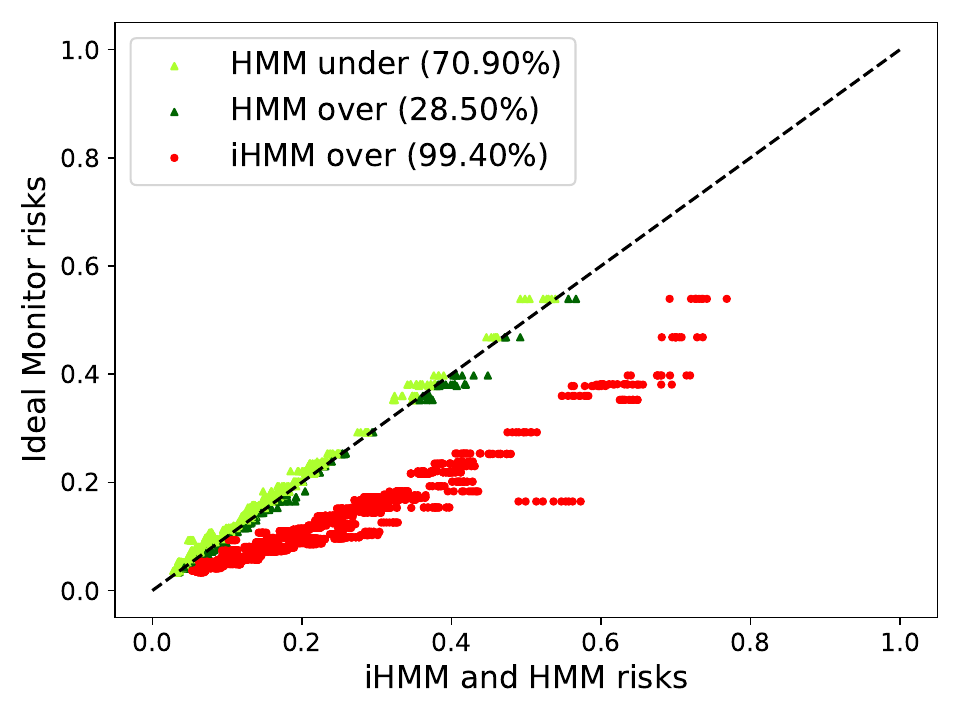}
    \caption{Coarse evadeV-6-3, SC = 0.01 - risk estimation subject to target, comparison between iHMM and HMM}
    \label{ihmm_hmm_overestimation_evadeV-6-3_coarse}
\end{figure}

\begin{figure}[H]
    \centering
    \includegraphics[width=0.64\linewidth]{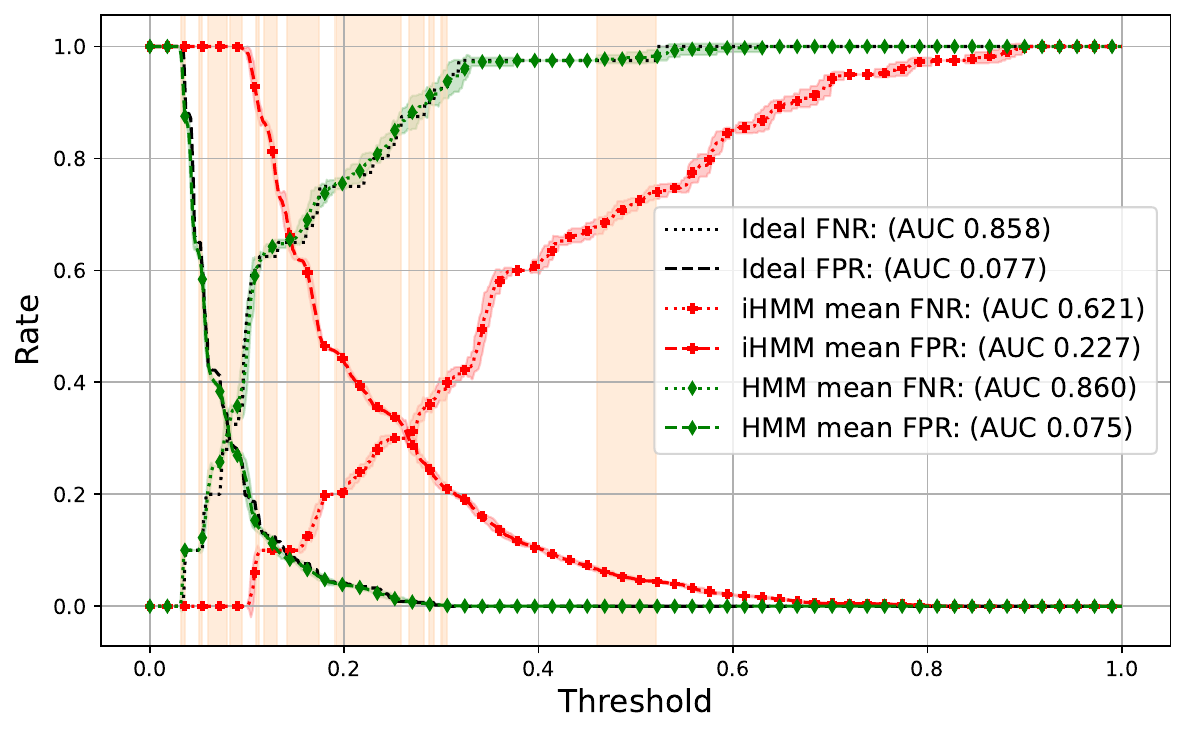}
    \caption{Coarse evadeV-6-3, SC = 0.1 - FNR and FPR comparison between iHMM and HMM}
    \label{ihmm_hmm_evadeV-6-3_high_st_coarse}
\end{figure}

\begin{figure}[H]
    \centering
    \includegraphics[width=0.64\linewidth]{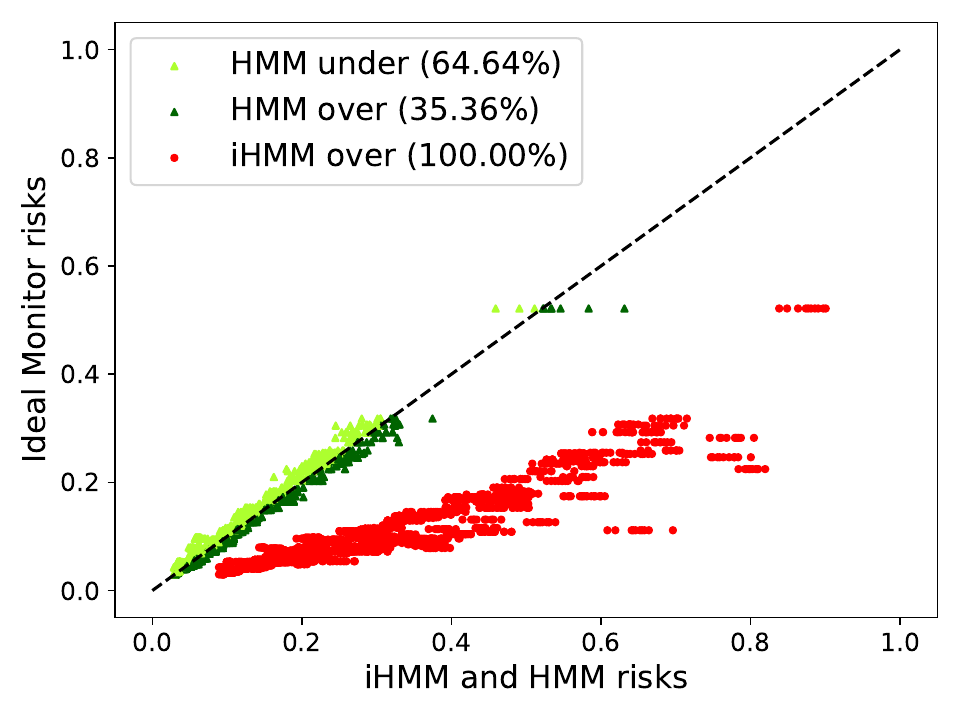}
    \caption{Coarse evadeV-6-3, SC = 0.1 - risk estimation subject to target, comparison between iHMM and HMM}
    \label{ihmm_hmm_overestimation_evadeV-6-3_high_st_coarse}
\end{figure}

\begin{figure}[H]
    \centering
    \includegraphics[width=0.64\linewidth]{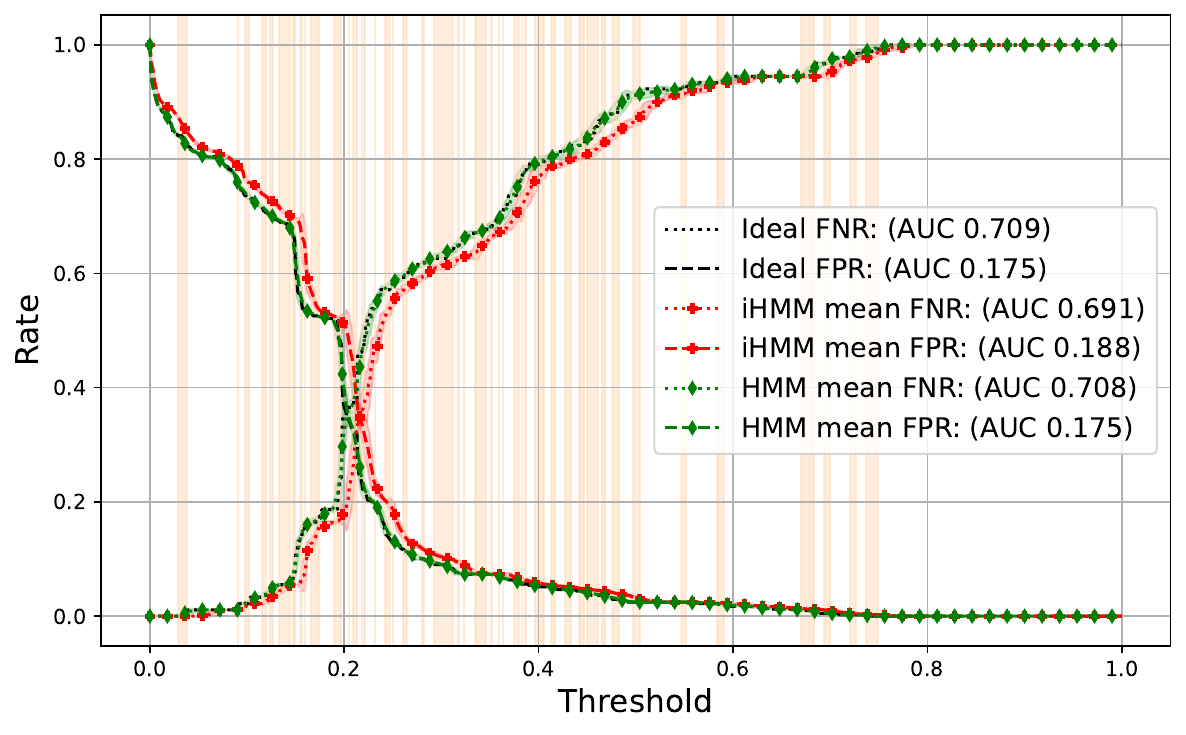}
    \caption{airportA-7-10-10, SC = 0.001  - FNR and FPR comparison between iHMM and HMM}
    \label{ihmm_hmm_airportA-7-10-10}
\end{figure}

\begin{figure}[H]
    \centering
    \includegraphics[width=0.64\linewidth]{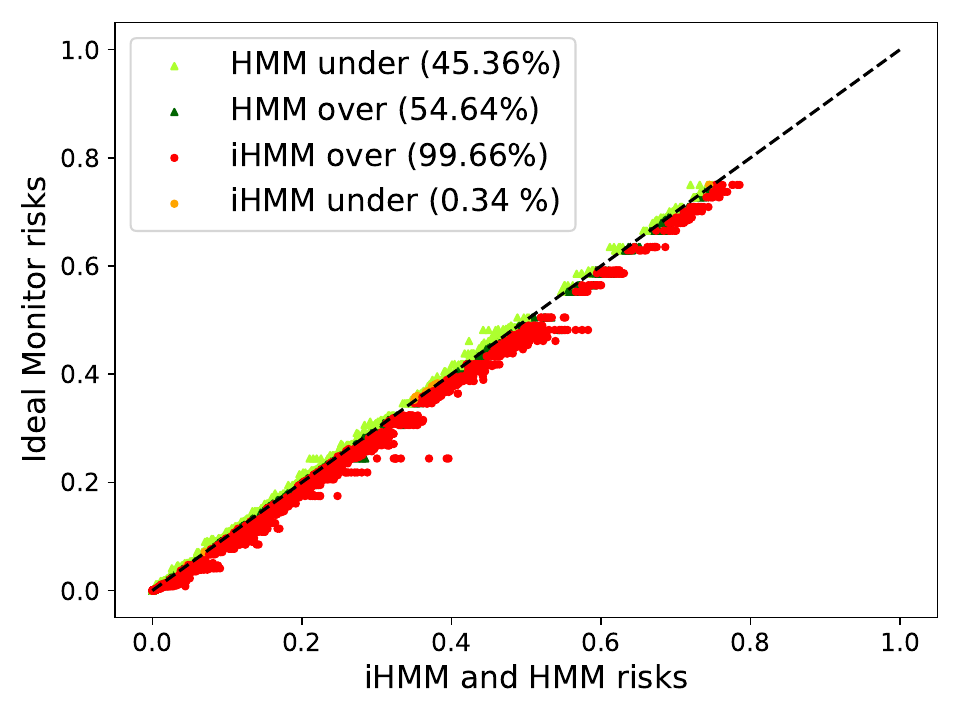}
    \caption{airportA-7-10-10, SC = 0.001 - risk estimation subject to target, comparison between iHMM and HMM}
    \label{ihmm_hmm_overestimation_airportA-7-10-10}
\end{figure}

\begin{figure}[H]
    \centering
    \includegraphics[width=0.64\linewidth]{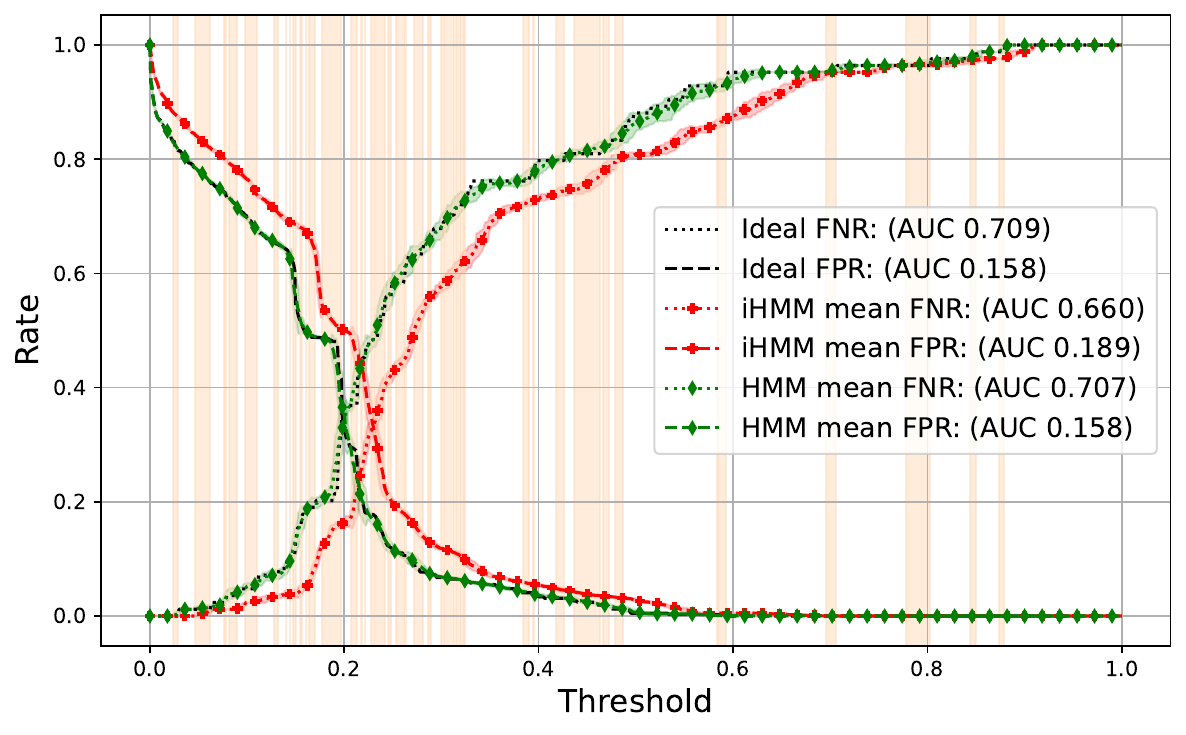}
    \caption{airportA-7-10-10, SC = 0.01 - FNR and FPR comparison between iHMM and HMM}
    \label{ihmm_hmm_airportA-7-10-10_high_st}
\end{figure}

\begin{figure}[H]
    \centering
    \includegraphics[width=0.64\linewidth]{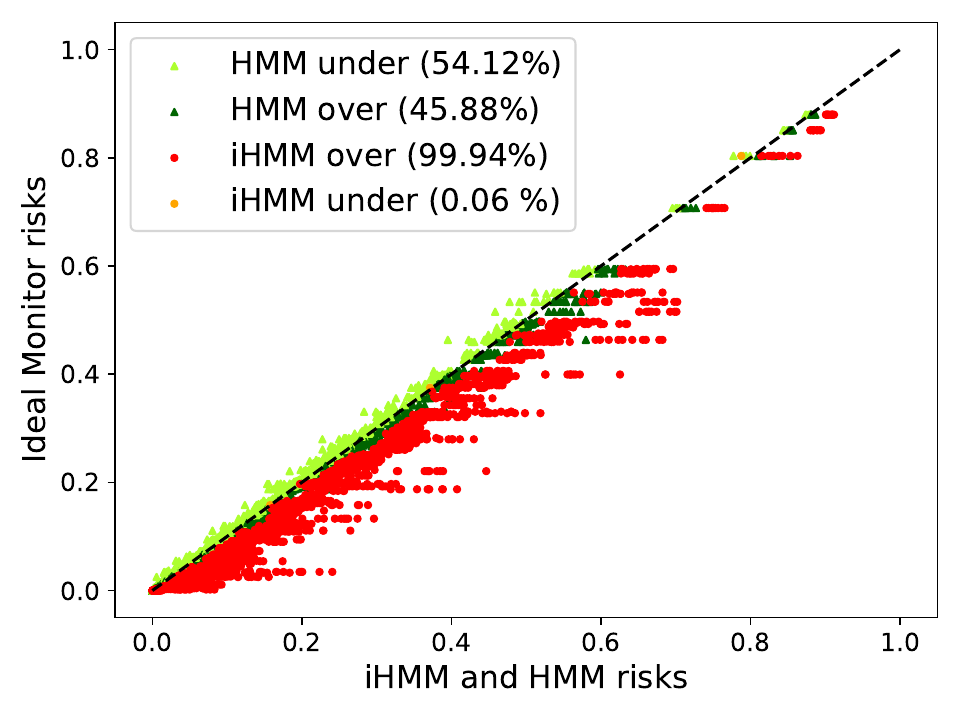}
    \caption{airportA-7-10-10, SC = 0.01 - risk estimation subject to target, comparison between iHMM and HMM}
    \label{ihmm_hmm_overestimation_airportA-7-10-10_high_st}
\end{figure}

\begin{figure}[H]
    \centering
    \includegraphics[width=0.64\linewidth]{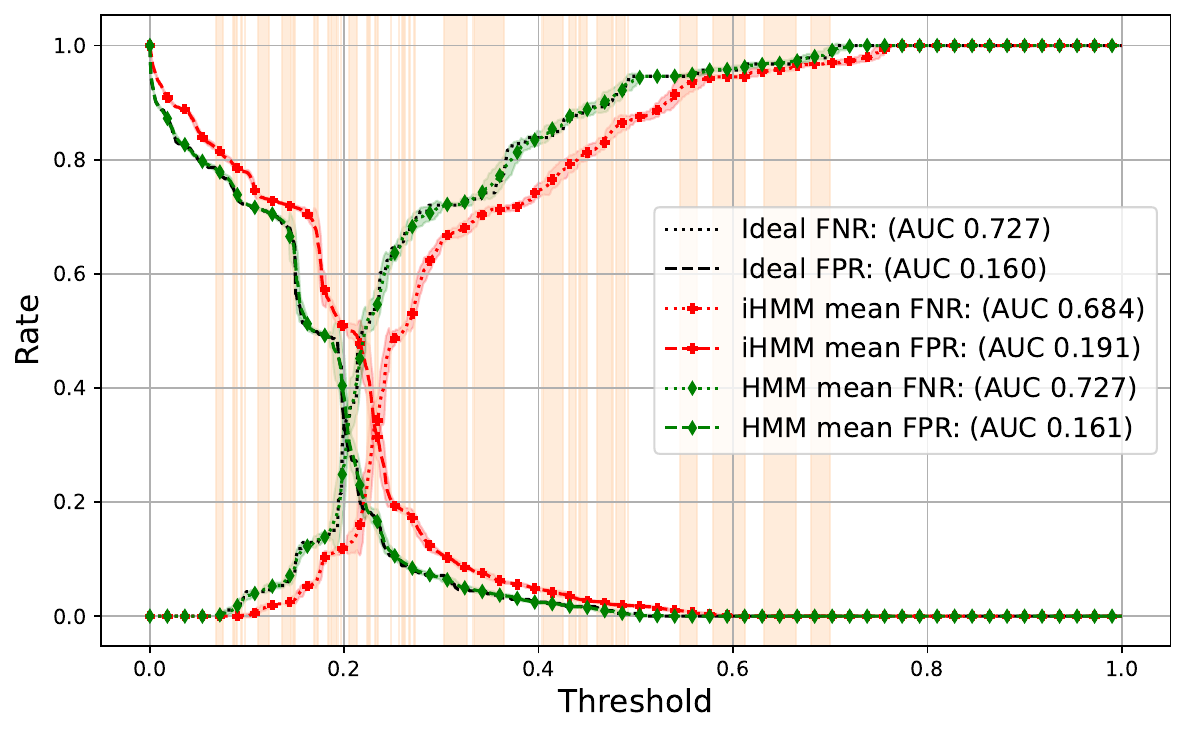}
    \caption{Coarse airportA-7-10-10, SC = 0.01 - FNR and FPR comparison between iHMM and HMM}
    \label{ihmm_hmm_airportA-7-10-10_coarse}
\end{figure}

\begin{figure}[H]
    \centering
    \includegraphics[width=0.64\linewidth]{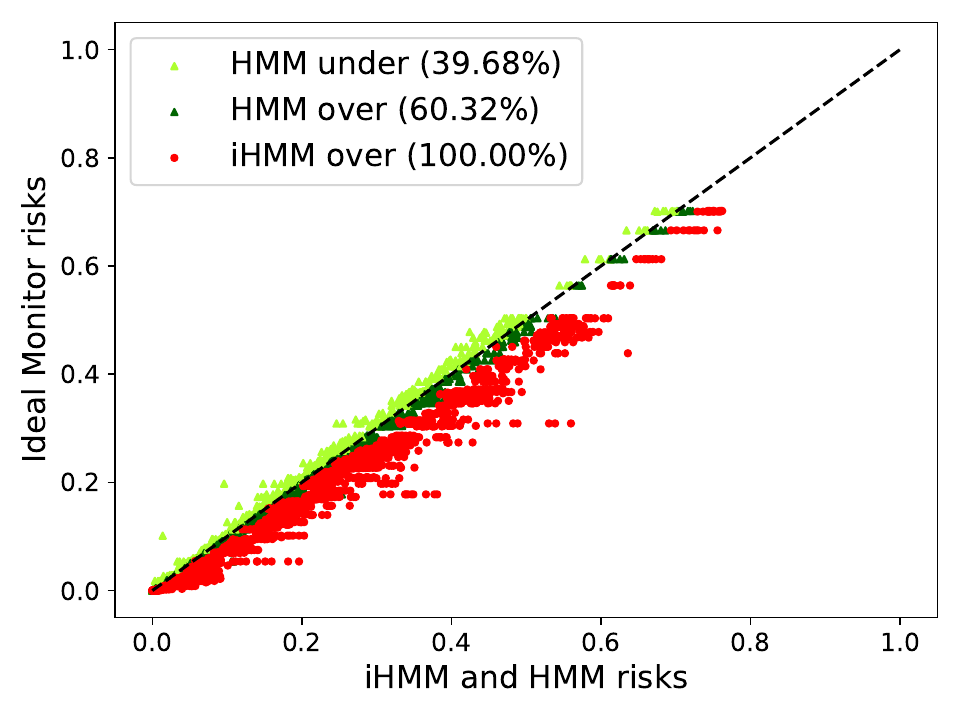}
    \caption{Coarse airportA-7-10-10, SC =  0.01 - risk estimation subject to target, comparison between iHMM and HMM}
    \label{ihmm_hmm_overestimation_airportA-7-10-10_coarse}
\end{figure}

\begin{figure}[H]
    \centering
    \includegraphics[width=0.64\linewidth]{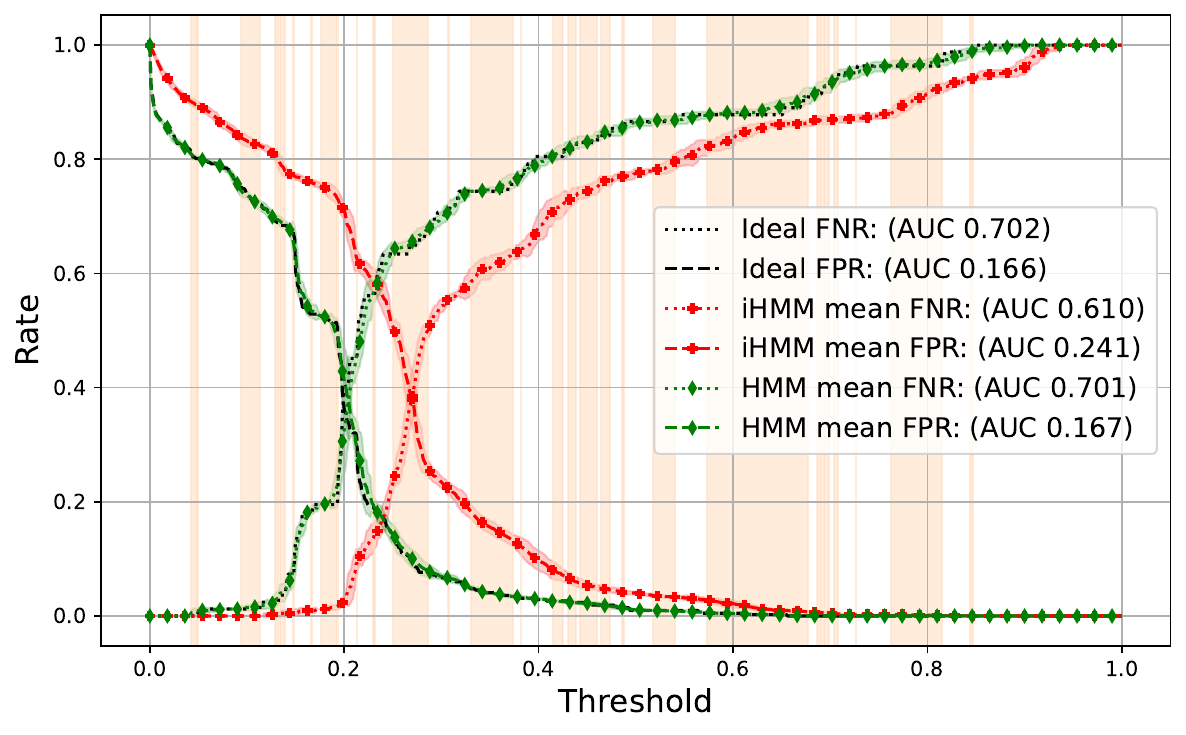}
    \caption{Coarse airportA-7-10-10, SC = 0.1 - FNR and FPR comparison between iHMM and HMM}
    \label{ihmm_hmm_airportA-7-10-10_coarse_high_st}
\end{figure}

\begin{figure}[H]
    \centering
    \includegraphics[width=0.64\linewidth]{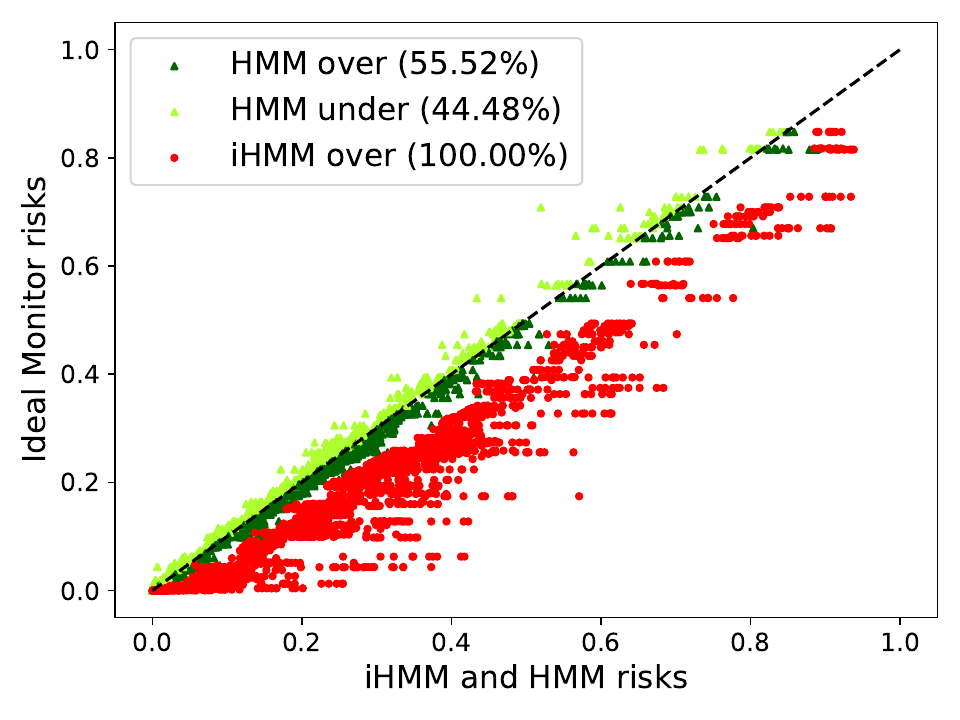}
    \caption{Coarse airportA-7-10-10, SC =  0.1 - risk estimation subject to target, comparison between iHMM and HMM}
    \label{ihmm_hmm_overestimation_airportA-7-10-10_coarse_high_st}
\end{figure}

\subsection{Unlikely-15 Benchmark}

We expand on the structure of the unlikely-15 benchmark introduced in \Cref{sec:ihmm_vs_hmm_comp}. 

Unlikely-15 contains increasingly harder to sample sections, where each section requires multiple samples to learn its transitions such that we can accurately predict the risk of the section. Each section has a unique trace leading into it, thus giving many traces whose risk is hard to predict. 

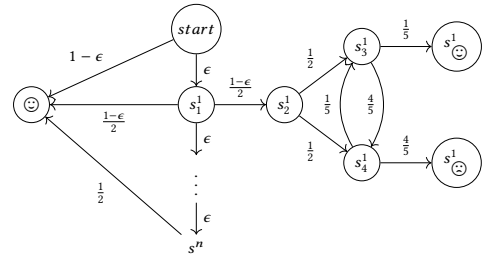
\begin{figure}[H]
    \centering
    \scalebox{0.85}{
        \begin{tikzpicture}[node distance=5mm and 8mm,font=\footnotesize]
            \tikzset{state/.append style={minimum size=5mm, inner sep=2pt}}
            
            \node[state] (start) {$start$};
            \node[state] (s11) [below = of start] {$s^1_1$};
            \node[state] (s12) [right = of s11] {$s^1_2$};
            \node[state] (s13) [above right = of s12] {$s^1_3$};
            \node[state] (s14) [below right = of s12] {$s^1_4$};
            \node[state] (s15) [right = of s13] {$s^1_{\large\smiley}$};
            \node[state] (s16) [right = of s14] {$s^1_{\large\frownie}$};
            \node (dots) [below= of s11] {$\vdots$};
            \node[state] (happy) [left = 2cm of s11] {\large \smiley};
            \node (sn) [below= of dots] {$s^n$};

            \path[->] (start) edge node[right] {$\epsilon$} (s11);
            \path[->] (s11) edge node[right] {$\epsilon$} (dots);
            \path[->] (s11) edge node[above] {$\frac{1-\epsilon}{2}$} (s12);
            \path[->] (s12) edge node[above left] {$\frac{1}{2}$} (s13);
            \path[->] (s12) edge node[below left] {$\frac{1}{2}$} (s14);
            \path[->] (s13) edge[bend left] node [left] {$\frac{4}{5}$} (s14);
            \path[->] (s14) edge[bend left] node [left] {$\frac{1}{5}$} (s13);
            \path[->] (s13) edge node[above] {$\frac{1}{5}$} (s15);
            \path[->] (s14) edge node[above] {$\frac{4}{5}$} (s16);
            \path[->] (start) edge node[above left] {$1-\epsilon$} (happy);
            \path[->] (s11) edge node[below] {$\frac{1-\epsilon}{2}$} (happy);
            \path[->] (dots) edge node[right] {$\epsilon$} (sn);
            \path[->] (sn) edge node[below left] {$\frac{1}{2}$} (happy);
        \end{tikzpicture}
    }
    \caption{Diagram of the \textsc{unlikely} benchmark. Good states are denoted by $\smiley$ and bad states by $\frownie$. The observations for states $s_i^j$ are, the value of $j$, and whether $i\geq 2$ or $i\in\{\smiley,\frownie\}$. Other states are fully observable. Our specification is to never be in a bad state, and the horizon is 2. We copy the $s^1$ part of the diagram $n$ times, where, for the experiments we have chosen $n=15$. }
    \label{fig:unlikely}
\end{figure}

\subsection{Refinement vs No Refinement}
\label{apx:ref_no_ref}

For each of the benchmarks not included in \Cref{sec:No_ref_vs_ref}, we add corresponding analysis in terms of comparison in terms of distance to Ideal Monitor and comparison in terms of FNR and FPR across different thresholds between learning with and without refinement. The experimental evaluation of monitors for benchmarks airportA-7-40-20 and coarse airportB-7-40-20 is included in \Cref{apx:big_benchmarks}.

\begin{figure}[H]
    \centering
    \includegraphics[width=0.8\linewidth]{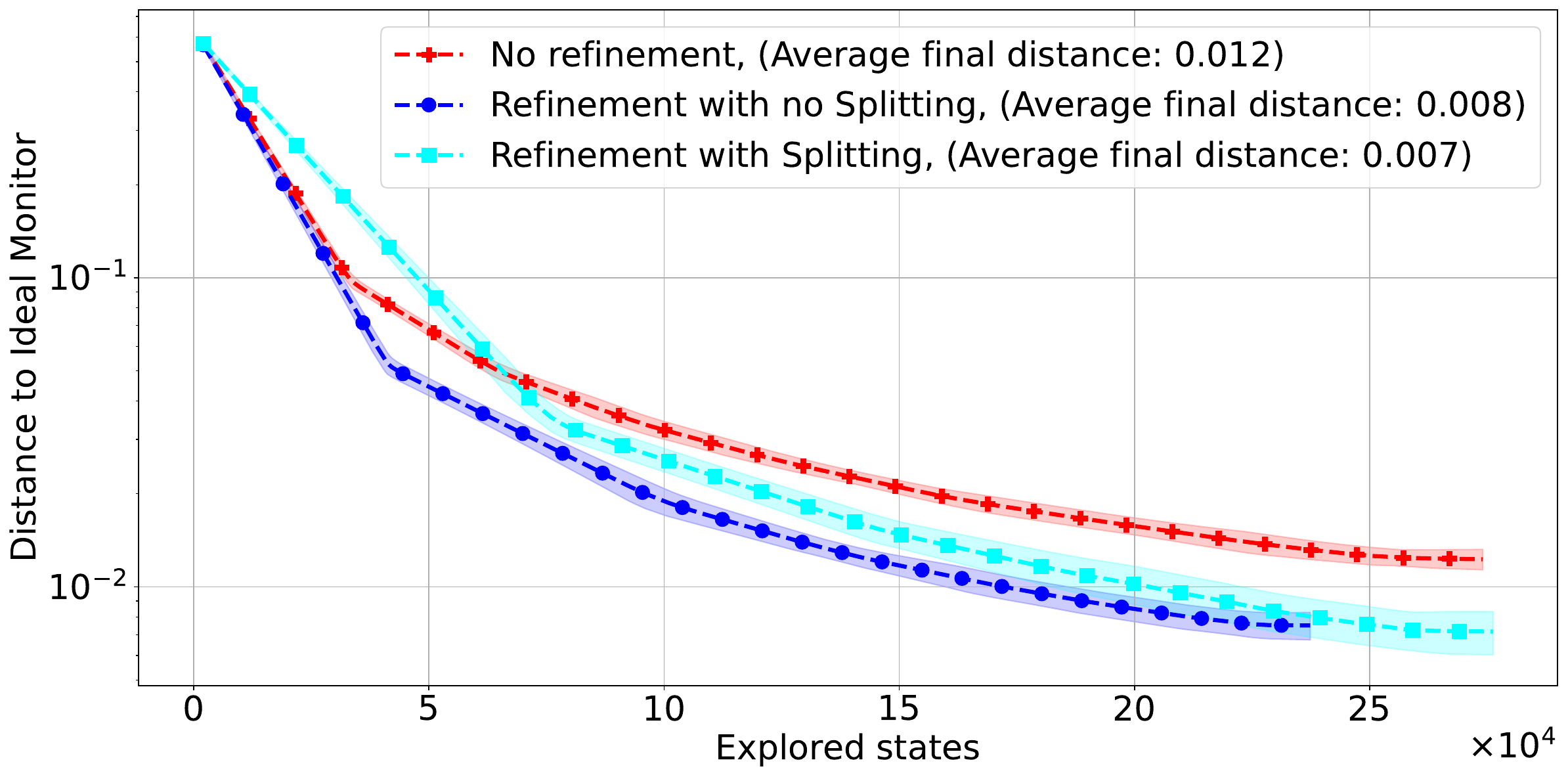}
    \caption{SnL-10x10, SC = 0.001 - Comparison between learning with and without refinement, in terms of distance to Ideal Monitor}
    \label{}
\end{figure}

\begin{figure}[H]
    \centering
    \includegraphics[width=0.8\linewidth]{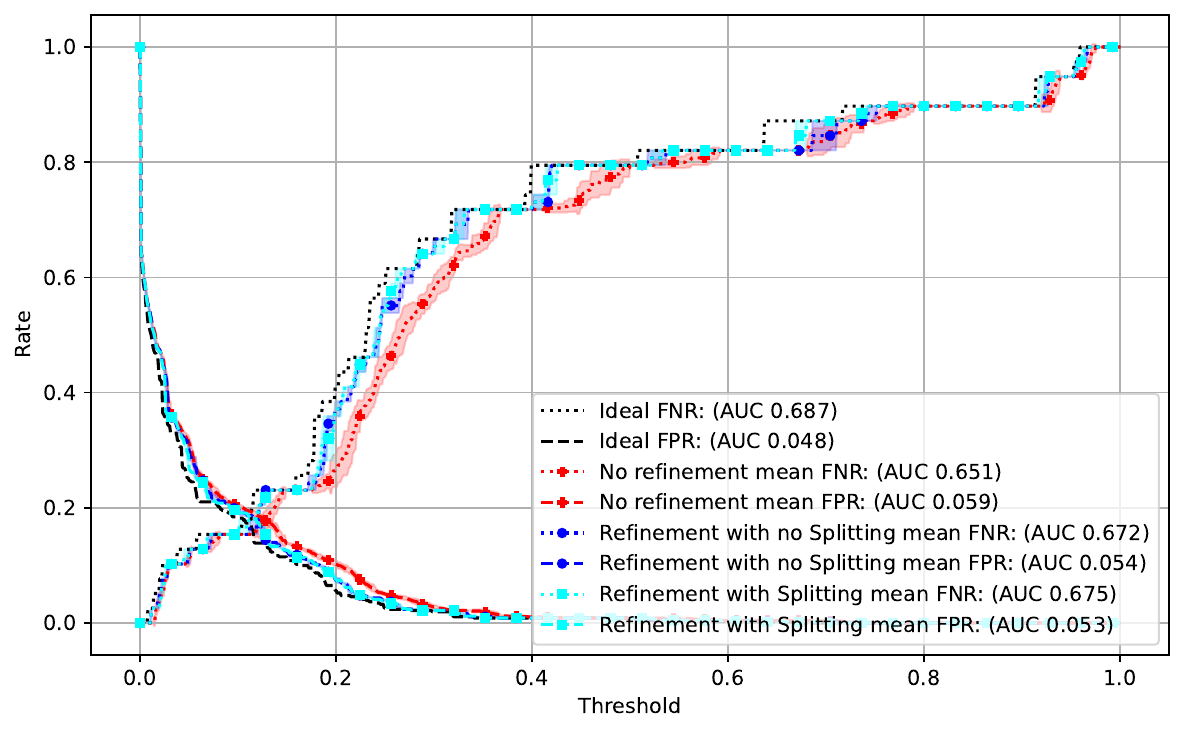}
    \caption{SnL-10x10, SC = 0.001 - Comparison between learning with and without refinement, in terms of FNR and FPR}
    \label{}
\end{figure}

\begin{figure}[H]
    \centering
    \includegraphics[width=0.8\linewidth]{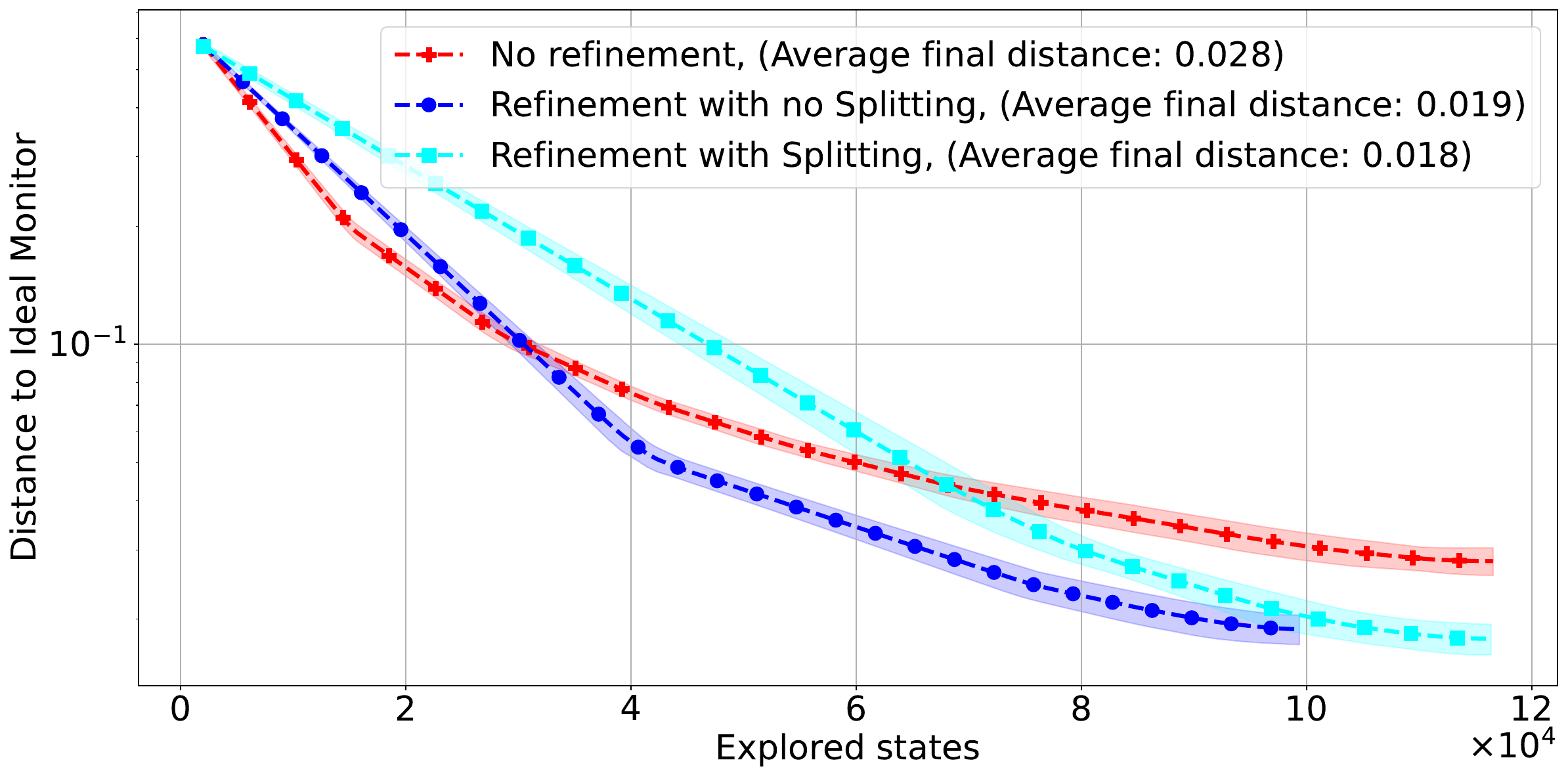}
    \caption{SnL-10x10, SC = 0.01 - Comparison between learning with and without refinement, in terms of distance to Ideal Monitor}
    \label{}
\end{figure}

\begin{figure}[H]
    \centering
    \includegraphics[width=0.8\linewidth]{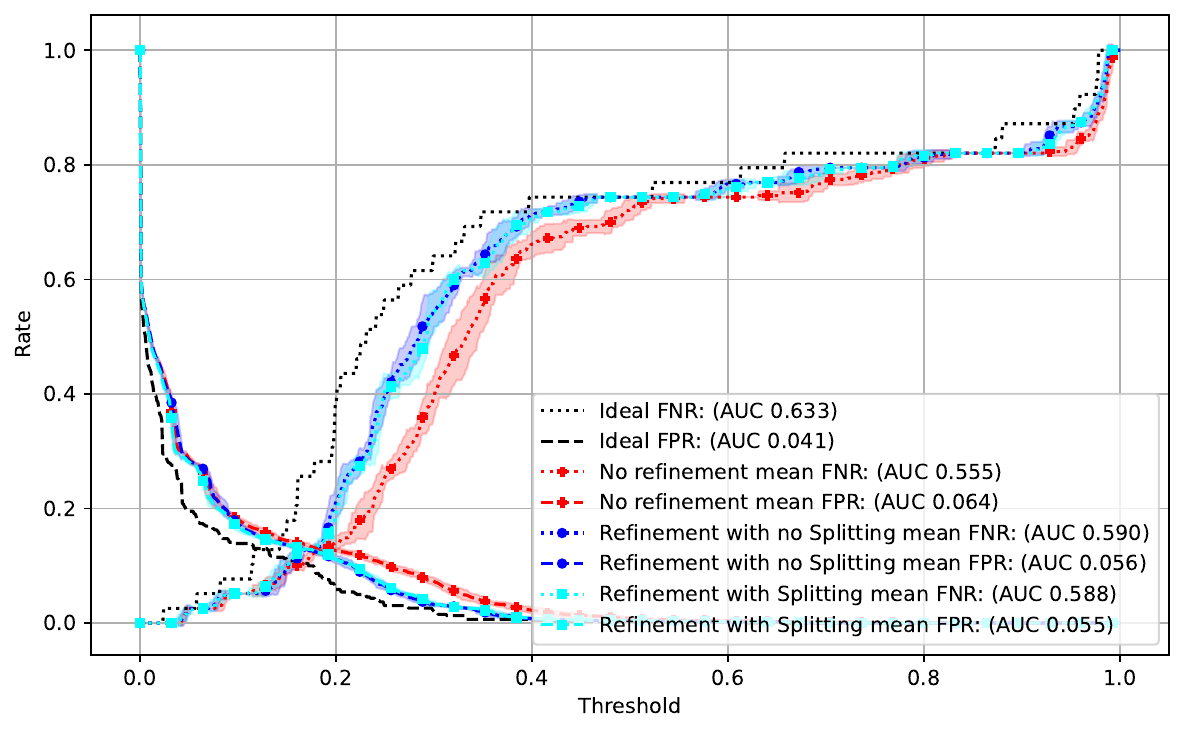}
    \caption{SnL-10x10, SC = 0.01 - Comparison between learning with and without refinement, in terms of FNR and FPR}
    \label{}
\end{figure}

\begin{figure}[H]
    \centering
    \includegraphics[width=0.8\linewidth]{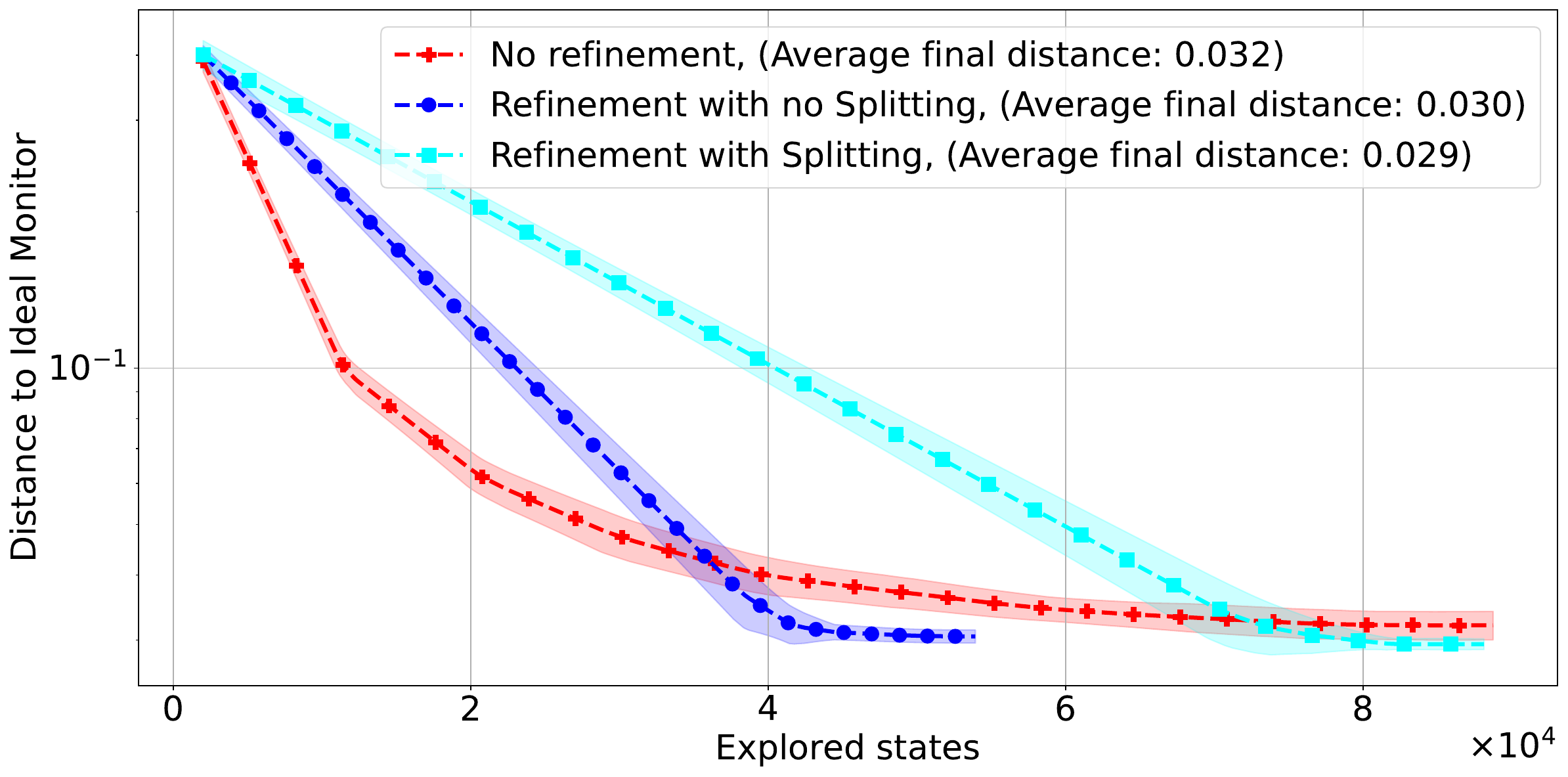}
    \caption{Coarse SnL-10x10, SC = 0.01, Comparison between learning with and without refinement, in terms of distance to Ideal Monitor}
    \label{}
\end{figure}

\begin{figure}[H]
    \centering
    \includegraphics[width=0.8\linewidth]{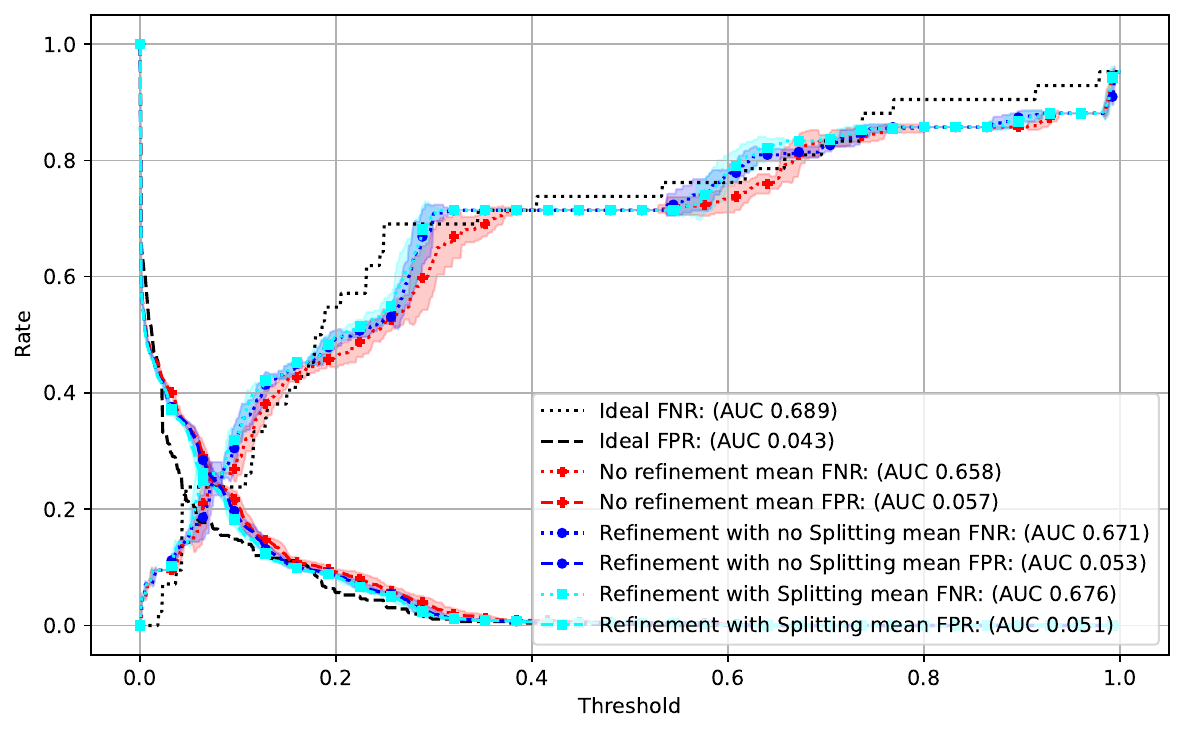}
    \caption{Coarse SnL-10x10, SC = 0.01, Comparison between learning with and without refinement, in terms of FNR and FPR}
    \label{}
\end{figure}

\begin{figure}[H]
    \centering
    \includegraphics[width=0.8\linewidth]{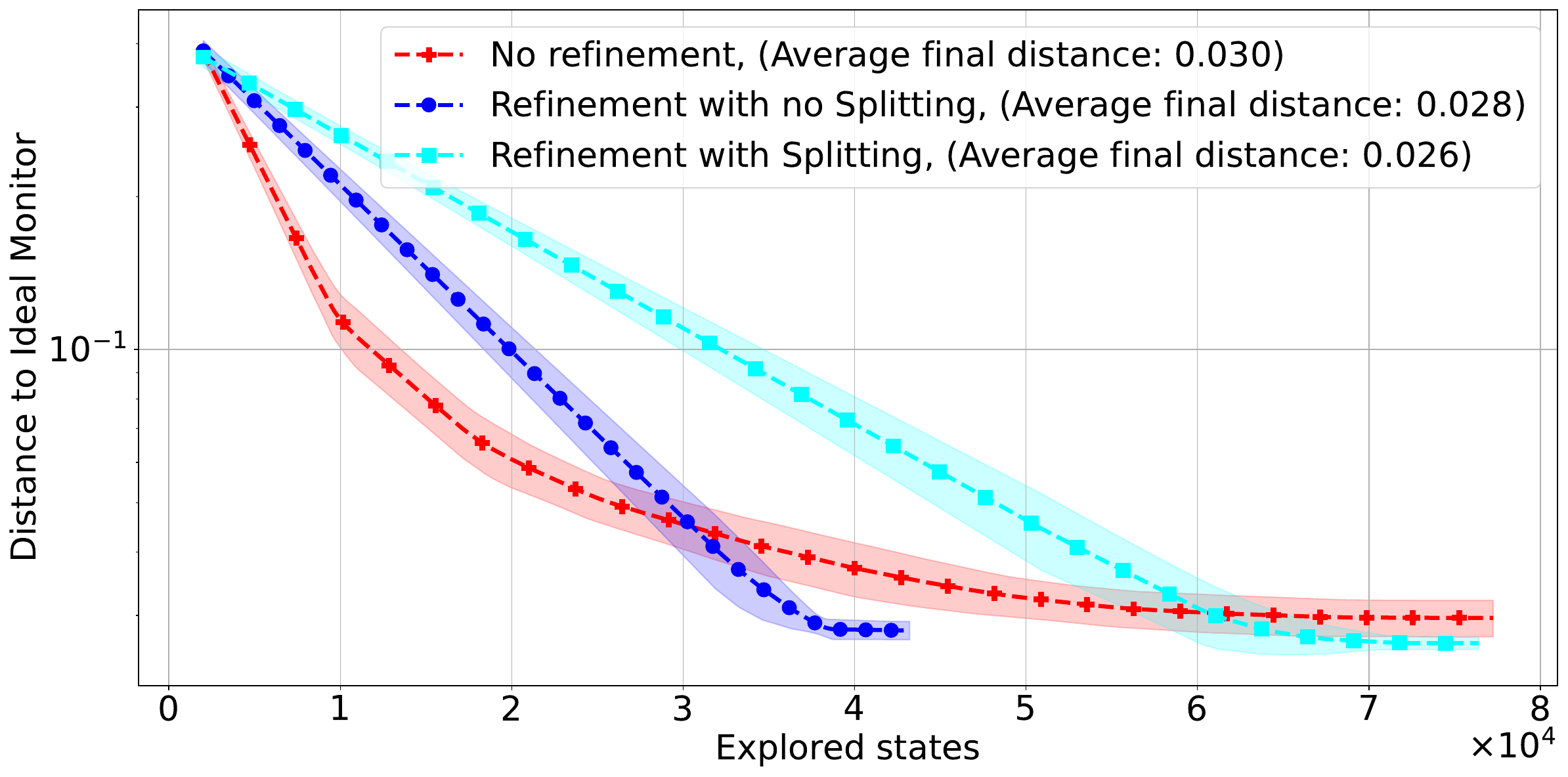}
    \caption{Coarse SnL-10x10, SC = 0.1 - Comparison between learning with and without refinement, in terms of distance to Ideal Monitor}
    \label{}
\end{figure}

\begin{figure}[H]
    \centering
    \includegraphics[width=0.8\linewidth]{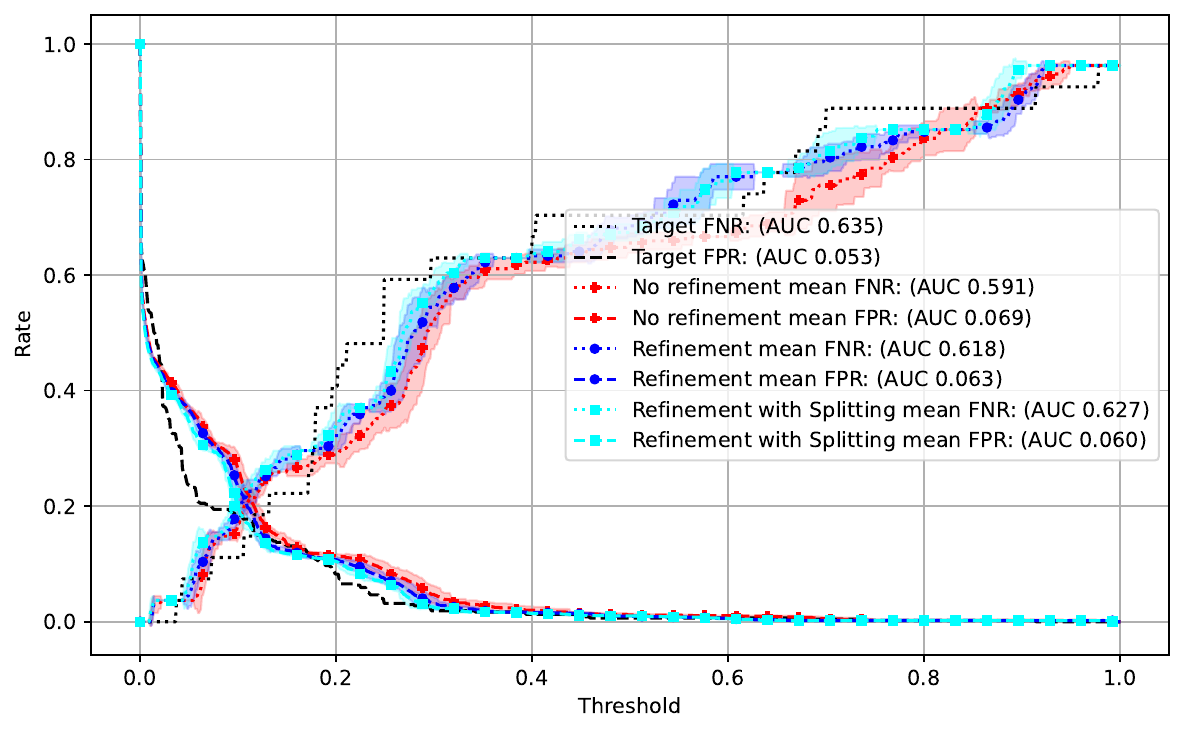}
    \caption{Coarse SnL-10x10, SC = 0.1 - Comparison between learning with and without refinement, in terms of FNR and FPR}
    \label{}
\end{figure}

\begin{figure}[H]
    \centering
    \includegraphics[width=0.8\linewidth]{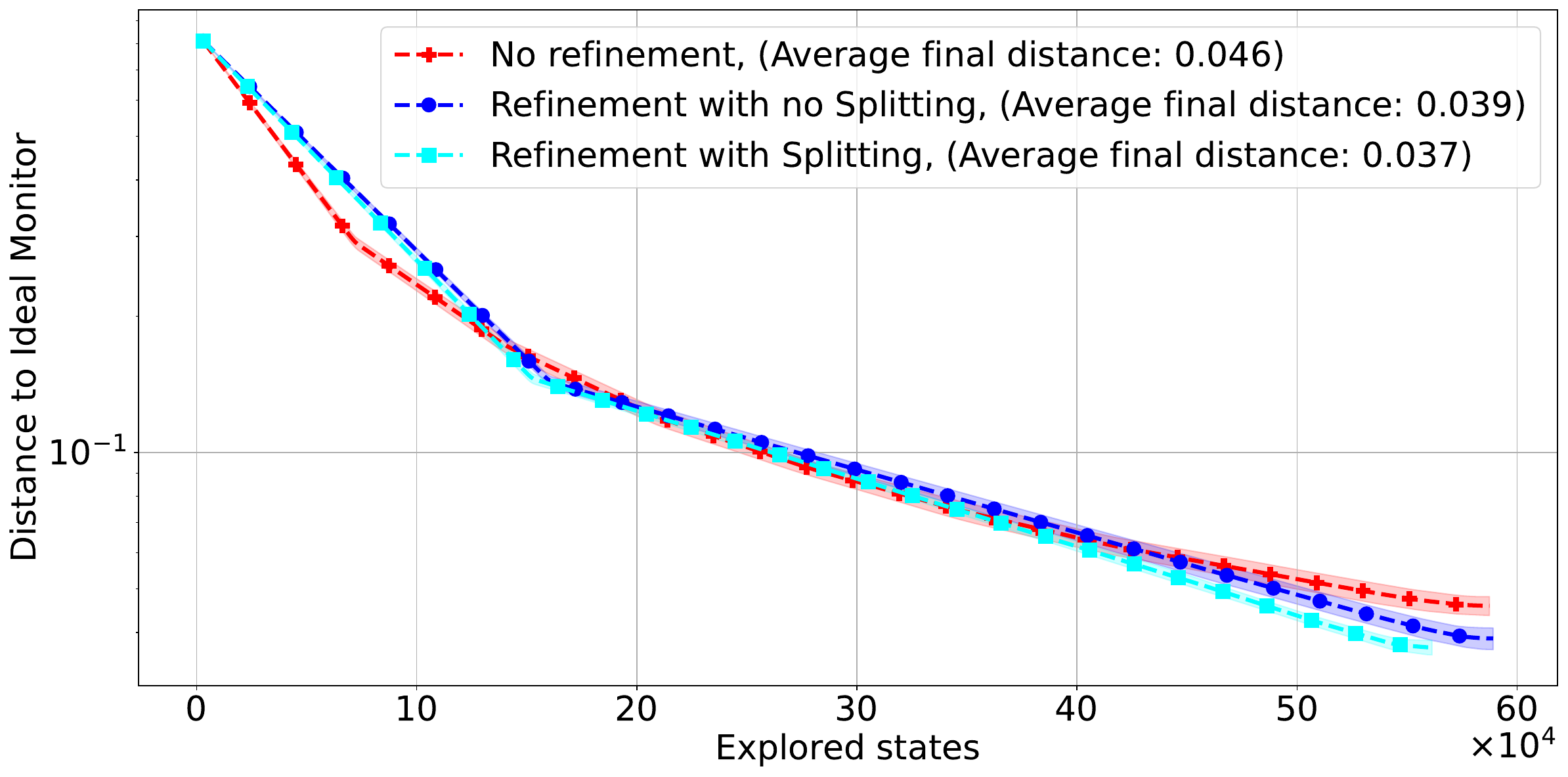}
    \caption{evadeV-5-3, SC = 0.01 - Comparison between learning with and without refinement, in terms of distance to Ideal Monitor}
    \label{}
\end{figure}

\begin{figure}[H]
    \centering
    \includegraphics[width=0.8\linewidth]{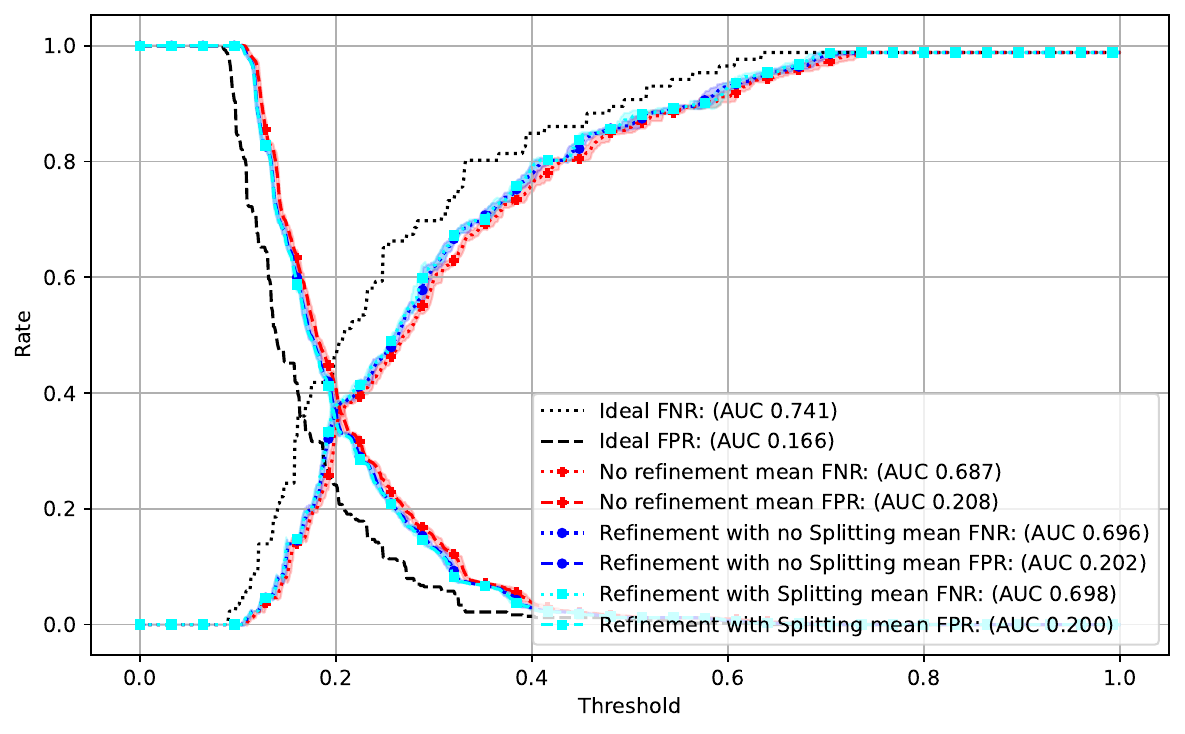}
    \caption{evadeV-5-3, SC = 0.01 - Comparison between learning with and without refinement, in terms of distance to FNR and FPR}
    \label{}
\end{figure}

\begin{figure}[H]
    \centering
    \includegraphics[width=0.8\linewidth]{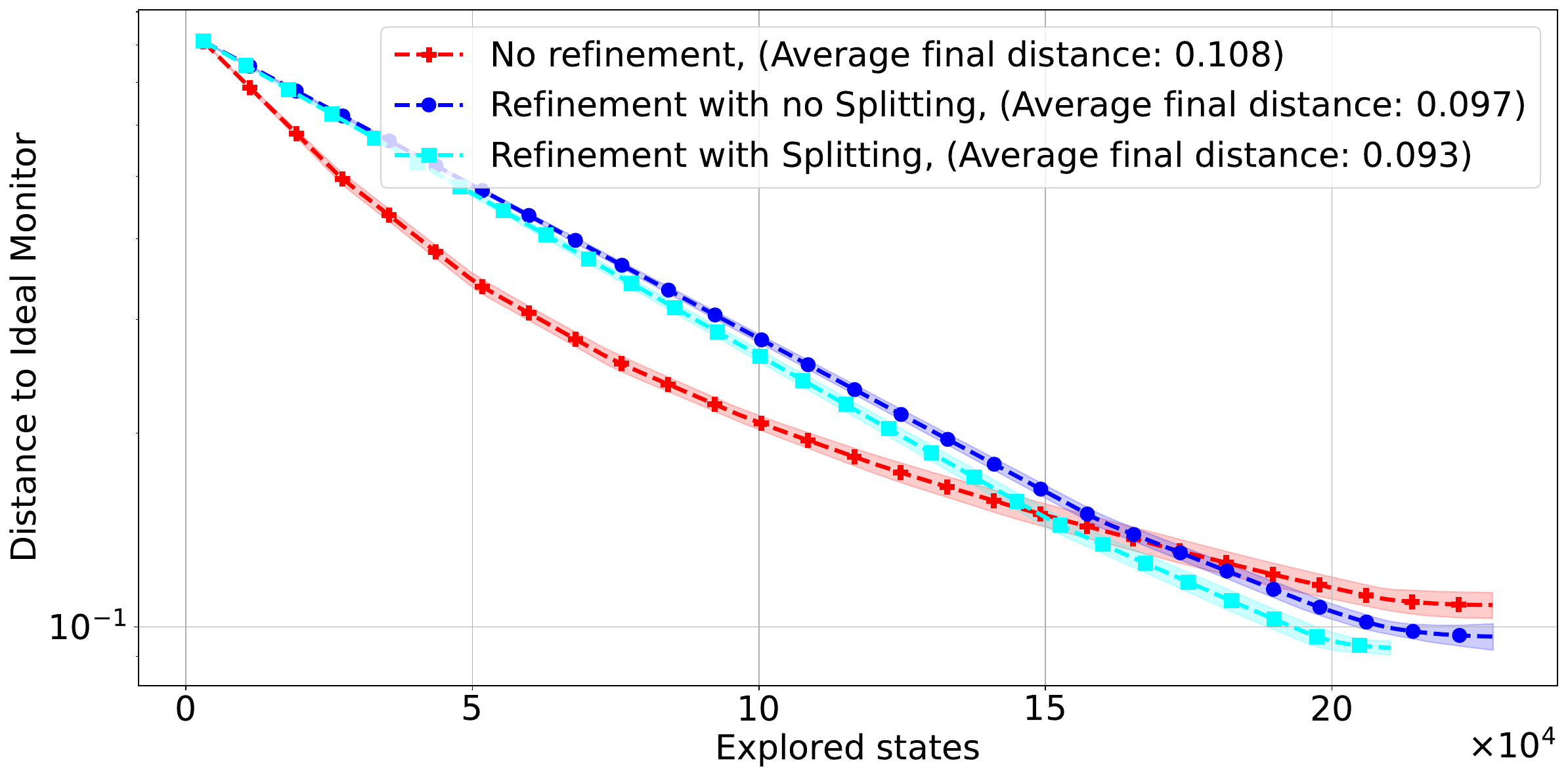}
    \caption{evadeV-5-3, SC = 0.1 - Comparison between learning with and without refinement, in terms of distance to Ideal Monitor}
    \label{}
\end{figure}

\begin{figure}[H]
    \centering
    \includegraphics[width=0.8\linewidth]{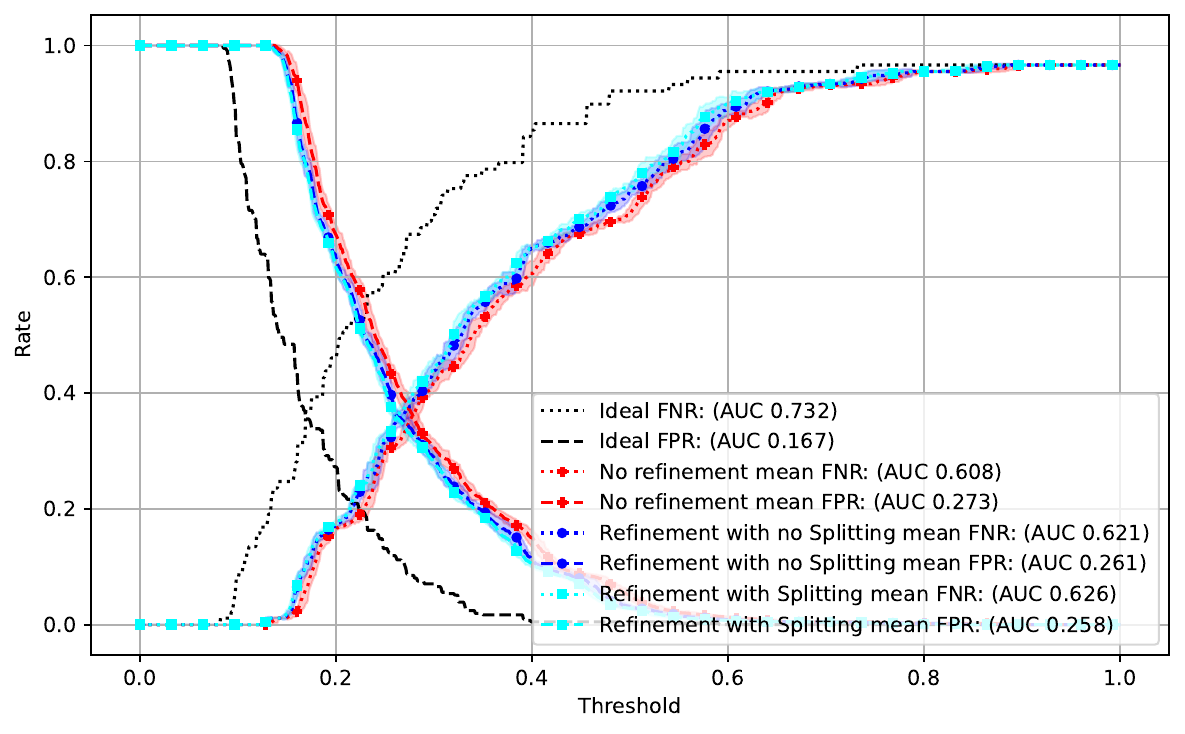}
    \caption{evadeV-5-3, SC = 0.1 - Comparison between learning with and without refinement, in terms of distance to FNR and FPR}
    \label{}
\end{figure}

\begin{figure}[H]
    \centering
    \includegraphics[width=0.8\linewidth]{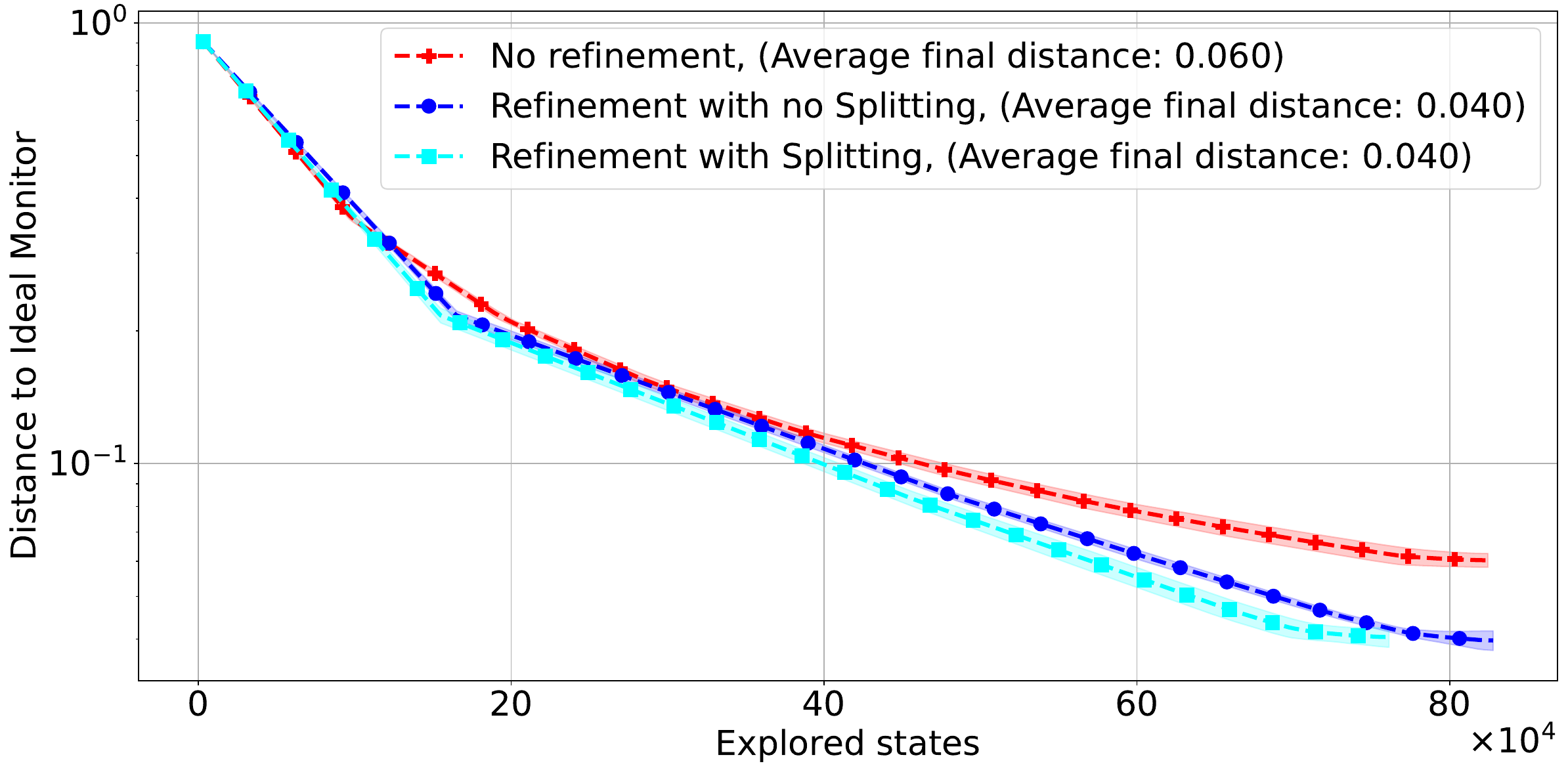}
    \caption{evadeV-6-3, SC = 0.01 - Comparison between learning with and without refinement, in terms of distance to Ideal Monitor}
    \label{}
\end{figure}

\begin{figure}[H]
    \centering
    \includegraphics[width=0.8\linewidth]{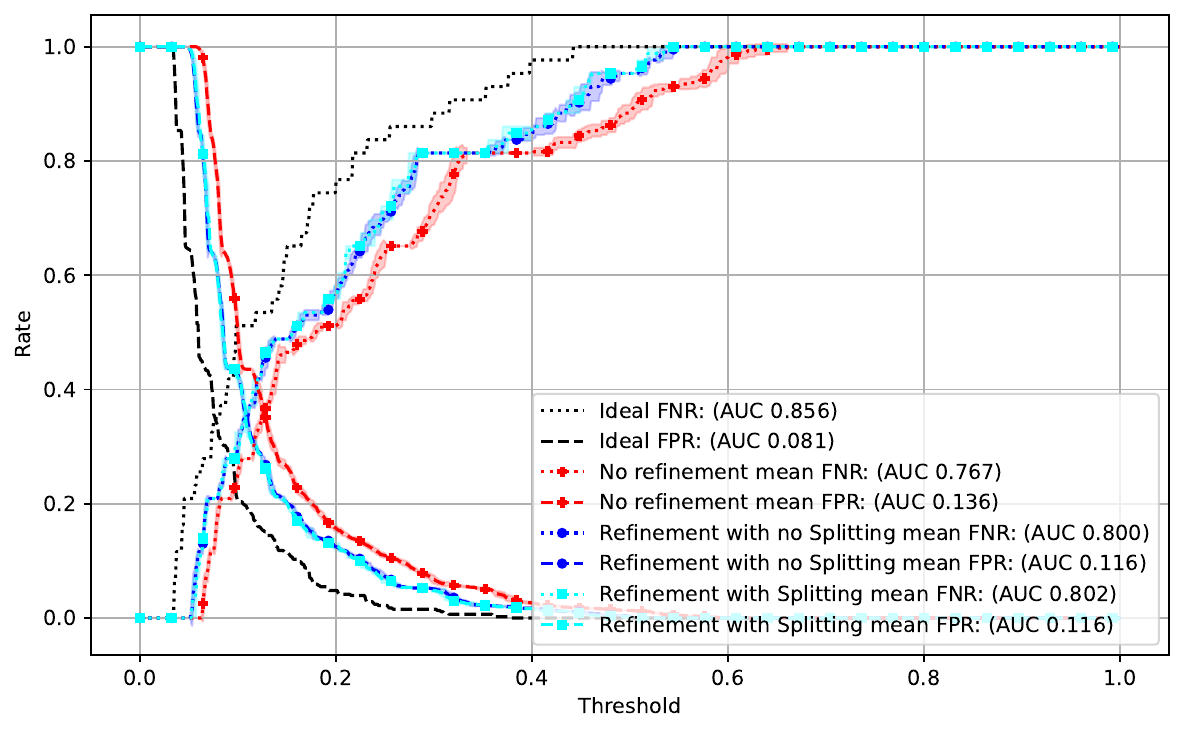}
    \caption{evadeV-6-3, SC = 0.01 - Comparison between learning with and without refinement, in terms of distance to FNR and FPR}
    \label{}
\end{figure}

\begin{figure}[H]
    \centering
    \includegraphics[width=0.8\linewidth]{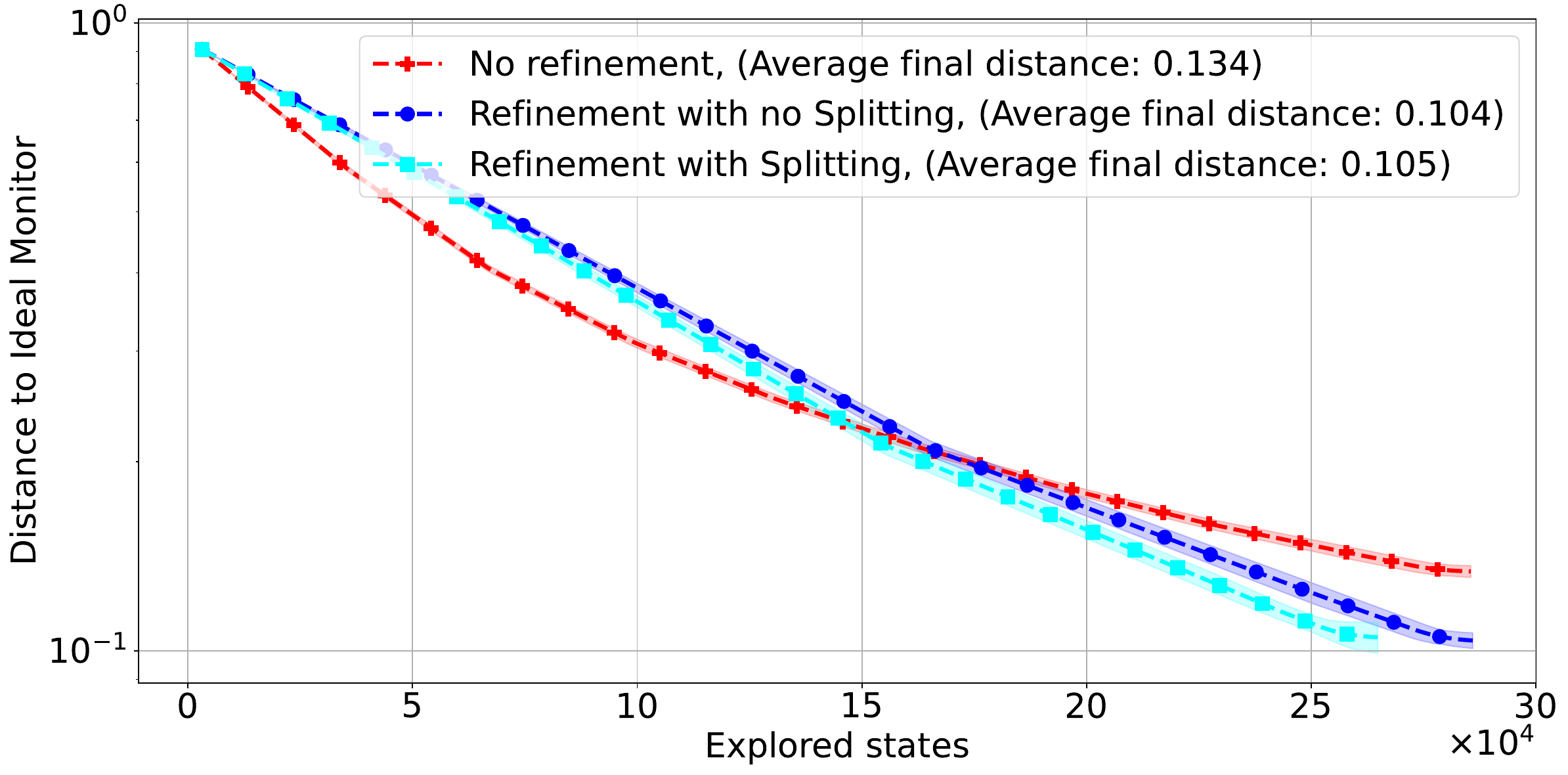}
    \caption{evadeV-6-3, SC = 0.1 - Comparison between learning with and without refinement, in terms of distance to Ideal Monitor}
    \label{}
\end{figure}

\begin{figure}[H]
    \centering
    \includegraphics[width=0.8\linewidth]{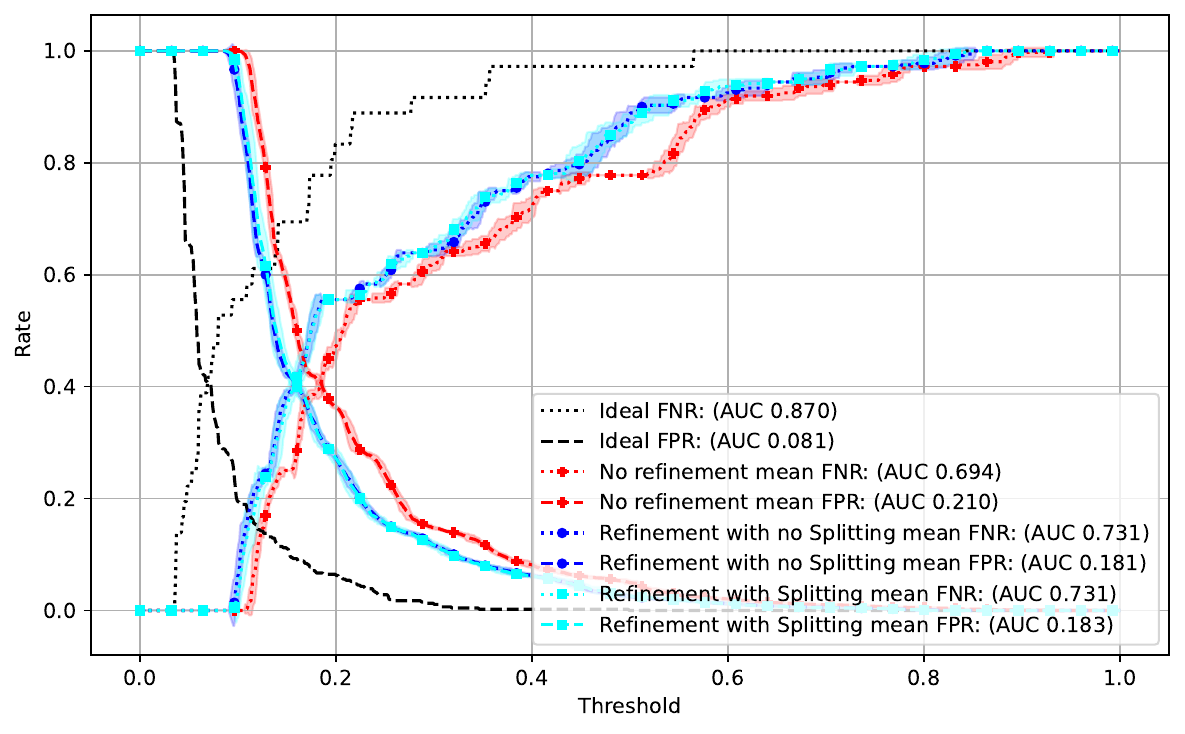}
    \caption{evadeV-6-3, SC = 0.1 - Comparison between learning with and without refinement, in terms of FNR and FPR}
    \label{}
\end{figure}

\begin{figure}[H]
    \centering
    \includegraphics[width=0.8\linewidth]{figures/Refinement_vs_NoRefinement/rq_1_evadeV-6-3_coarse_distance_to_RRF.pdf}
    \caption{Coarse evadeV-6-3, SC = 0.01 - Comparison between learning with and without refinement, in terms of distance to Ideal Monitor}
    \label{}
\end{figure}

\begin{figure}[H]
    \centering
    \includegraphics[width=0.8\linewidth]{figures/Refinement_vs_NoRefinement/rq_2_evadeV-6-3_coarse_FN_FP_model_based.pdf}
    \caption{Coarse evadeV-6-3, SC = 0.01 - Comparison between learning with and without refinement, in terms of FNR and FPR}
    \label{}
\end{figure}

\begin{figure}[H]
    \centering
    \includegraphics[width=0.8\linewidth]{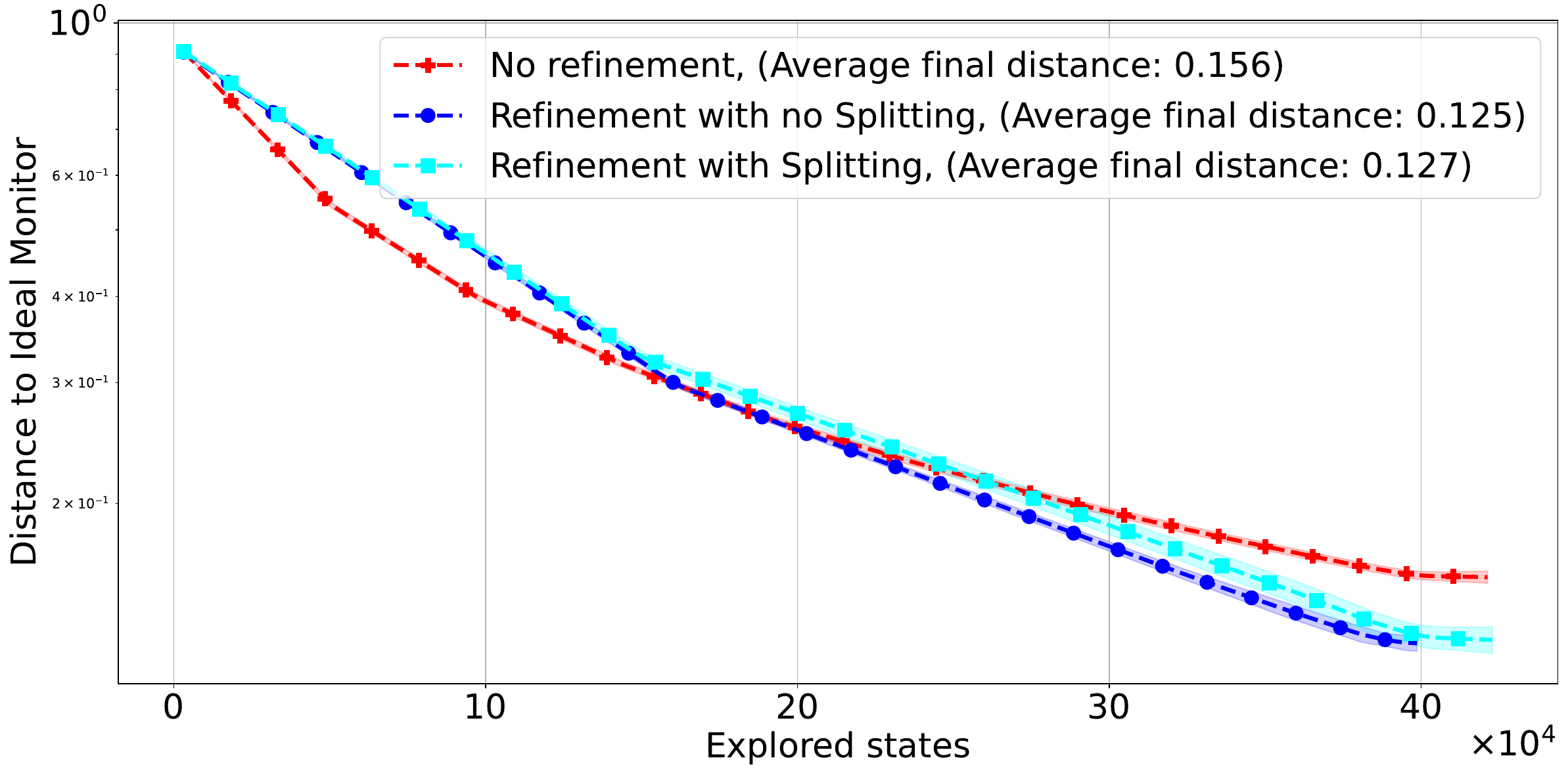}
    \caption{Coarse evadeV-6-3, SC = 0.1 - Comparison between learning with and without refinement, in terms of distance to Ideal Monitor}
    \label{}
\end{figure}

\begin{figure}[H]
    \centering
    \includegraphics[width=0.8\linewidth]{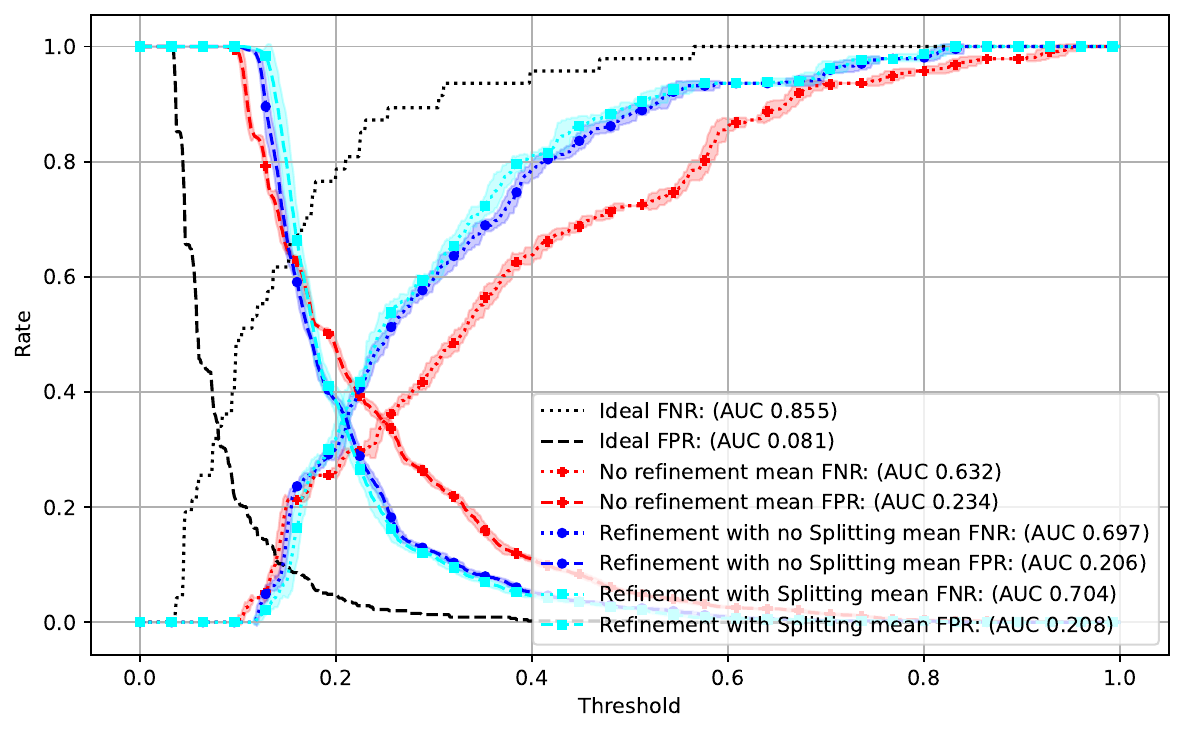}
    \caption{Coarse evadeV-6-3, SC = 0.1 - Comparison between learning with and without refinement, in terms of FNR and FPR}
    \label{}
\end{figure}

\begin{figure}[H]
    \centering
    \includegraphics[width=0.8\linewidth]{figures/Refinement_vs_NoRefinement/rq_1_airportA-7-10-10_distance_to_RRF.pdf}
    \caption{airportA-7-10-10, SC = 0.001 - Comparison between learning with and without refinement, in terms of distance to Ideal Monitor}
    \label{}
\end{figure}

\begin{figure}[H]
    \centering
    \includegraphics[width=0.8\linewidth]{figures/Refinement_vs_NoRefinement/rq_2_airportA-7-10-10_FN_FP_model_based.pdf}
    \caption{airportA-7-10-10, SC = 0.001 - Comparison between learning with and without refinement, in terms of FNR and FPR}
    \label{}
\end{figure}

\begin{figure}[H]
    \centering
    \includegraphics[width=0.8\linewidth]{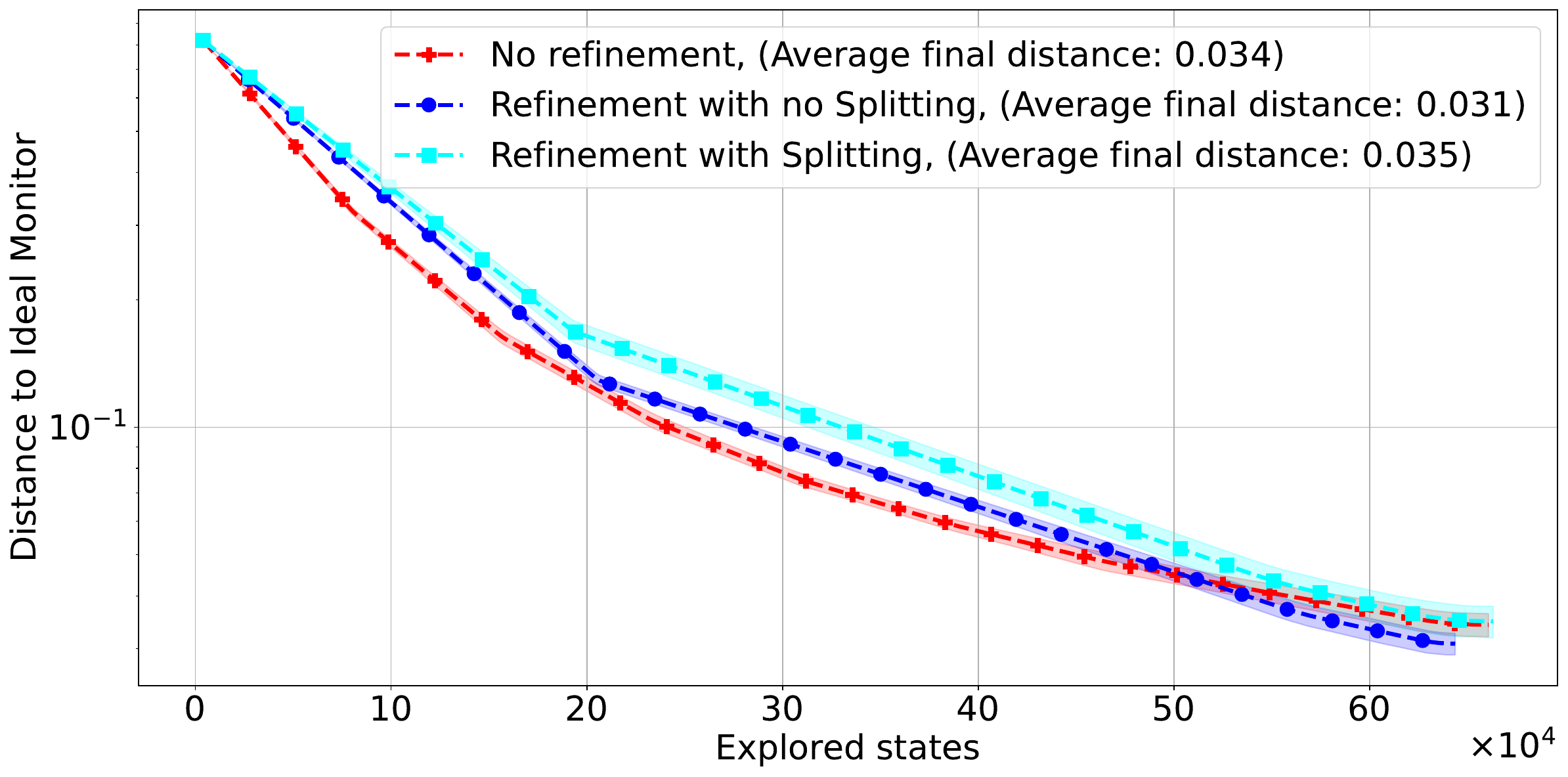}
    \caption{airportA-7-10-10, SC = 0.01 - Comparison between learning with and without refinement, in terms of distance to Ideal Monitor}
    \label{}
\end{figure}

\begin{figure}[H]
    \centering
    \includegraphics[width=0.8\linewidth]{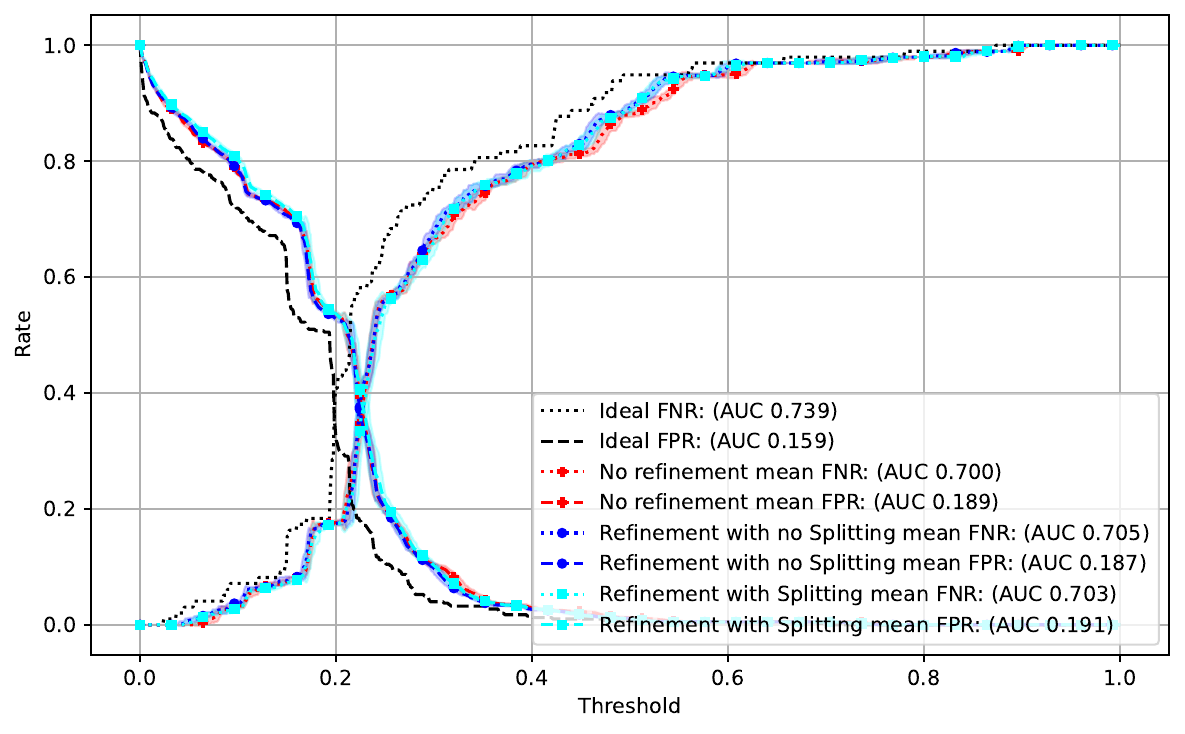}
    \caption{airportA-7-10-10, SC = 0.01 - Comparison between learning with and without refinement, in terms of FNR and FPR}
    \label{}
\end{figure}

\begin{figure}[H]
    \centering
    \includegraphics[width=0.8\linewidth]{figures/Refinement_vs_NoRefinement/rq_1_airportA-7-10-10_distance_to_RRF.pdf}
    \caption{Coarse airportA-7-10-10, SC = 0.01 - Comparison between learning with and without refinement, in terms of distance to Ideal Monitor}
    \label{}
\end{figure}

\begin{figure}[H]
    \centering
    \includegraphics[width=0.8\linewidth]{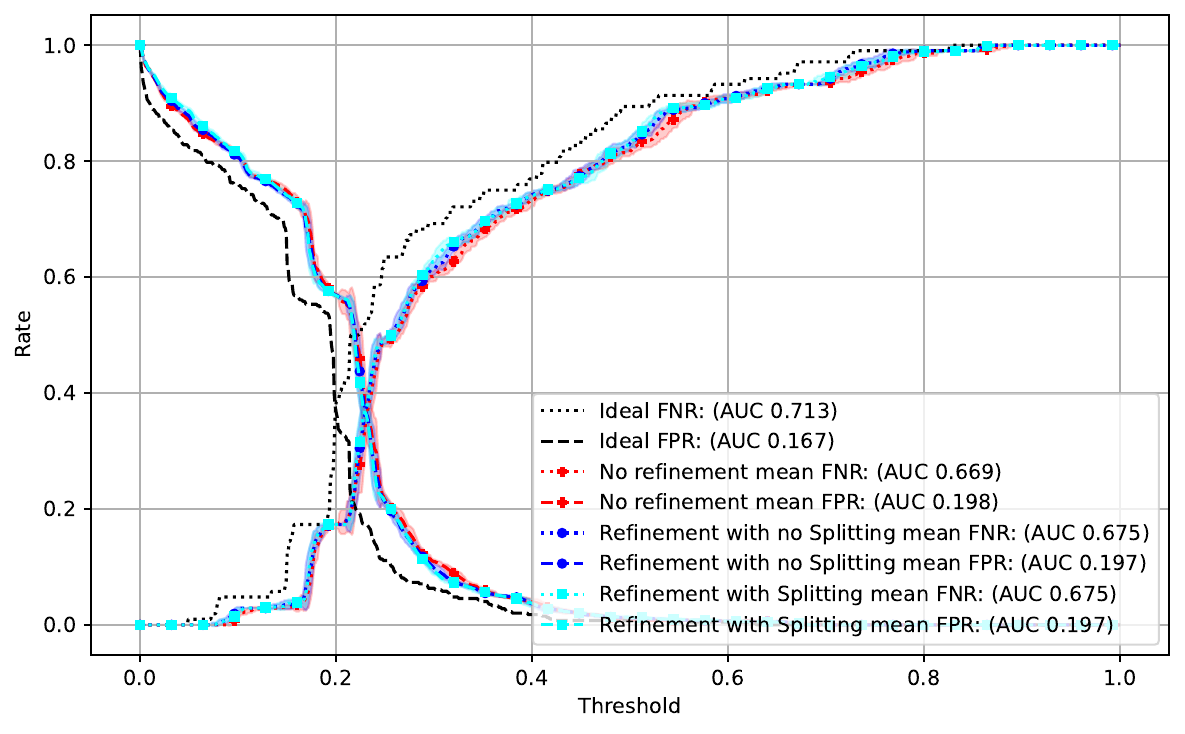}
    \caption{Coarse airportA-7-10-10, SC = 0.01 - Comparison between learning with and without refinement, in terms of FNR and FPR}
    \label{}
\end{figure}

\begin{figure}[H]
    \centering
    \includegraphics[width=0.8\linewidth]{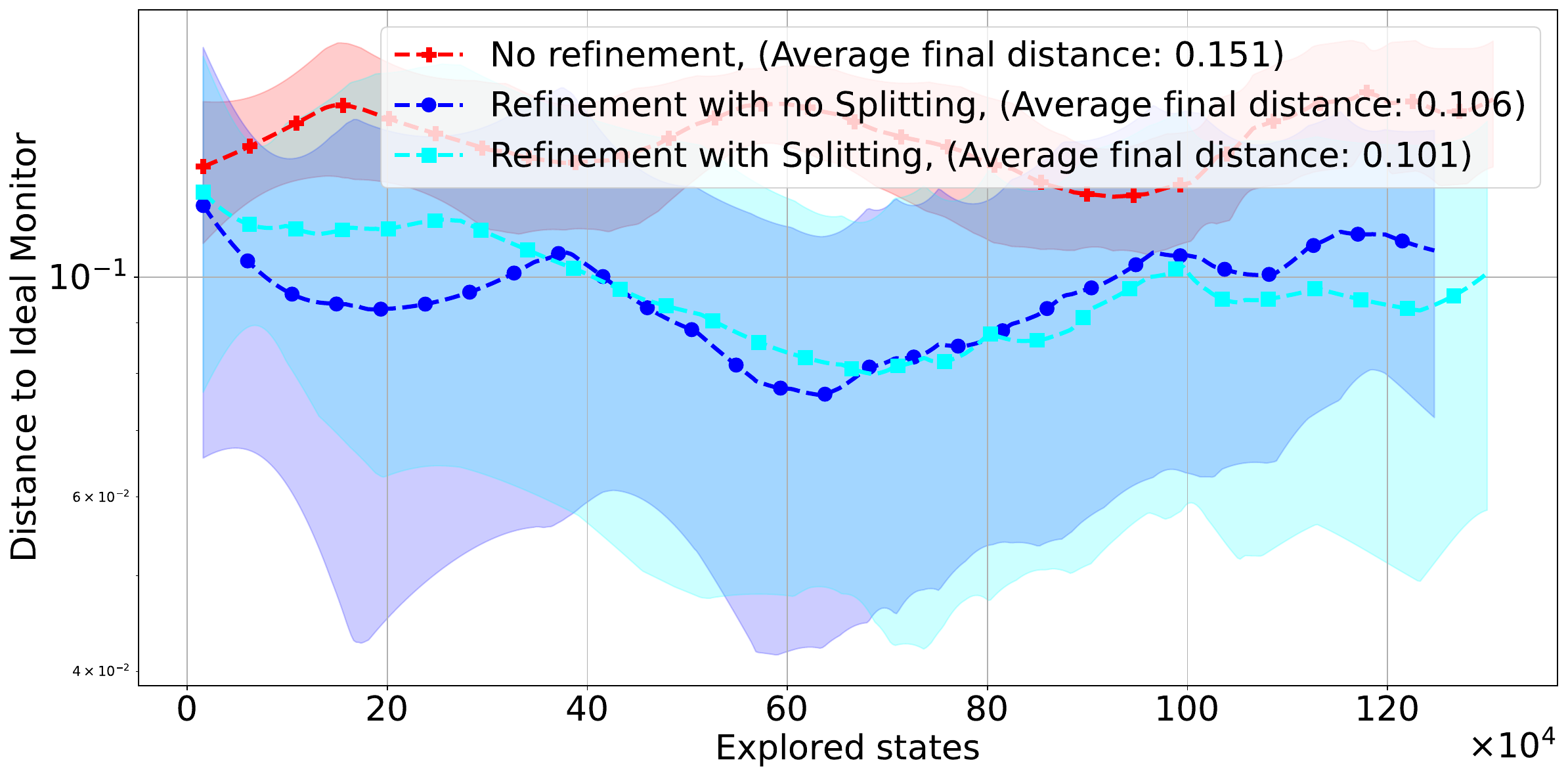}
    \caption{Coarse airportA-7-10-10, SC = 0.1 - Comparison between learning with and without refinement, in terms of distance to Ideal Monitor}
    \label{}
\end{figure}

\begin{figure}[H]
    \centering
    \includegraphics[width=0.8\linewidth]{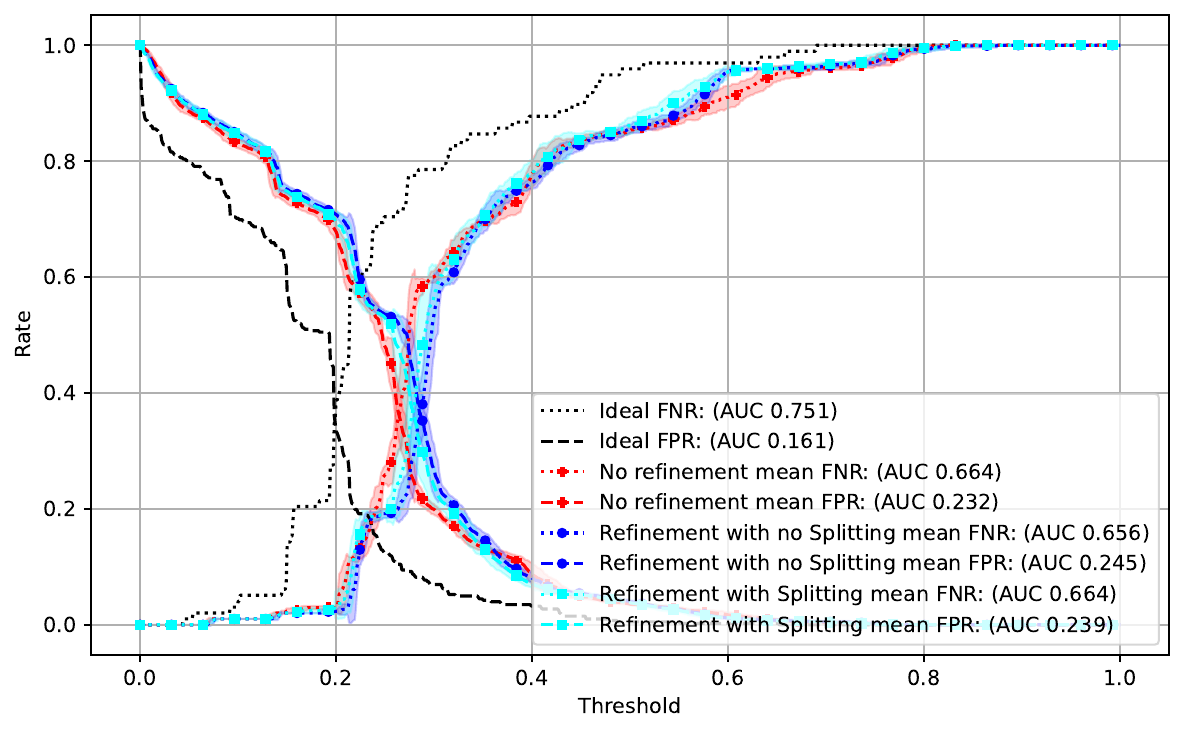}
    \caption{Coarse airportA-7-10-10, SC = 0.1 - Comparison between learning with and without refinement, in terms of FNR and FPR}
    \label{}
\end{figure}

\section{Experimental evaluation of monitors for aiportA-7-40-20 and coarse aiportB-7-40-20 benchmarks}

The experimental evaluation of the benchmarks aiportA-7-40-20 and coarse aiportB-7-40-20 is discussed separately. This is motivated by the fact that none of the considered methods learned reasonable monitors for these benchmarks. 

\label{apx:big_benchmarks}
\subsection{Model-based vs Model-free monitoring}
We include the results of the experimental comparison of model-based and model-free monitors for for aiportA-7-40-20 and coarse aiportB-7-40-20 benchmarks. 

\Cref{ariB-corase_model_based_free} due to memory issues reason is based od 100 test samples.

\begin{figure}[H]
    \centering
    \includegraphics[width=0.85\linewidth]{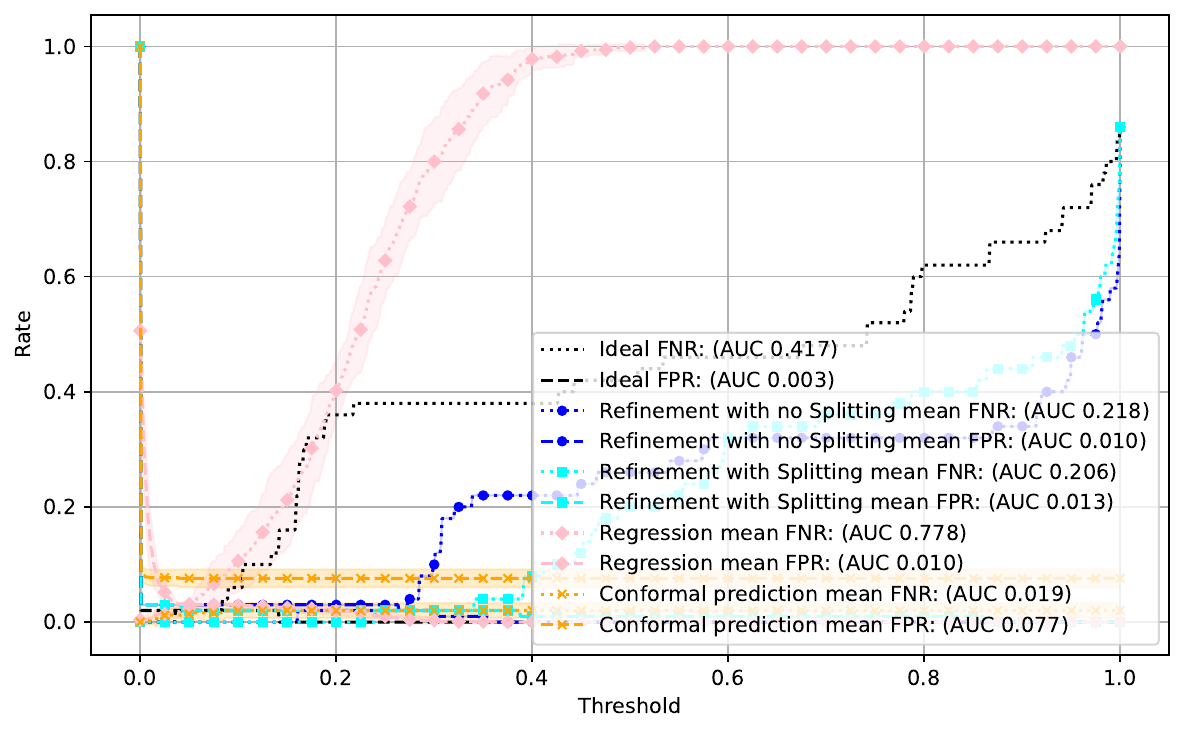}
    \caption{airportA-7-40-20, SC =0.01 - FNR and FPR com-
parison between model-based and model-free methods}
    \label{}
\end{figure}

\begin{figure}[H]
    \centering
    \includegraphics[width=0.85\linewidth]{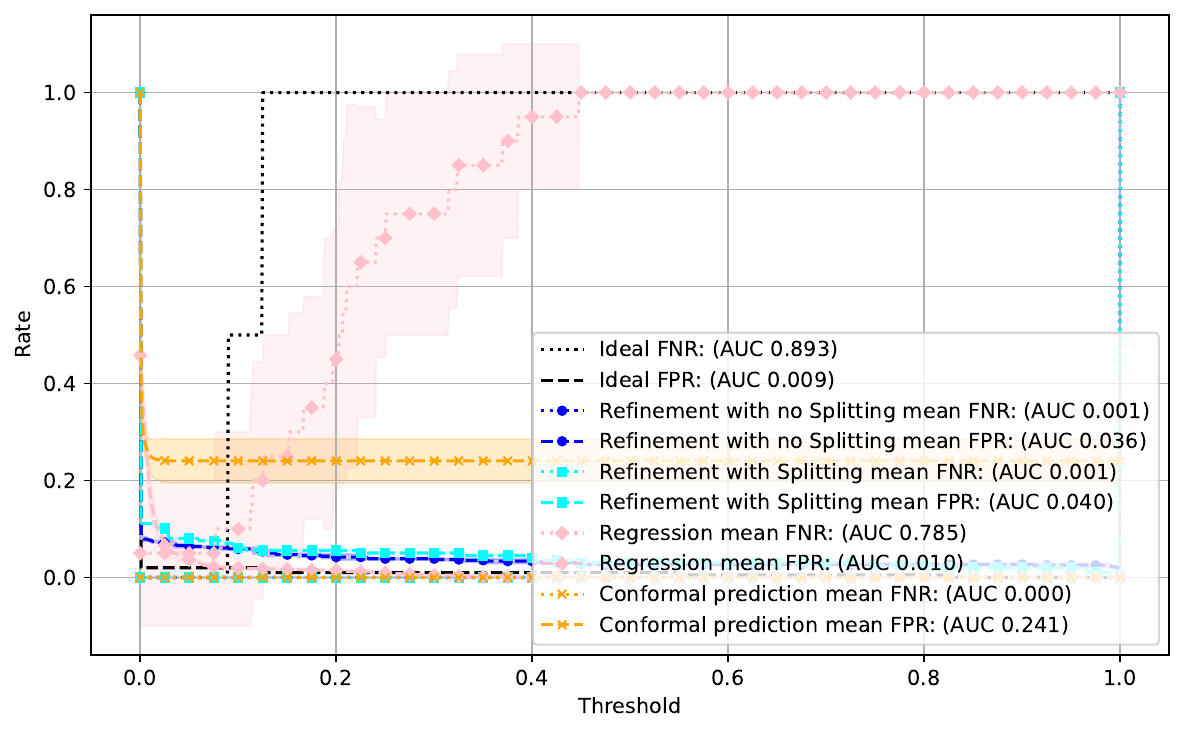}
    \caption{airportA-7-40-20, SC = 0.1 - FNR and FPR com-
parison between model-based and model-free methods}
    \label{}
\end{figure}

\begin{figure}[H]
    \centering
    \includegraphics[width=0.85\linewidth]{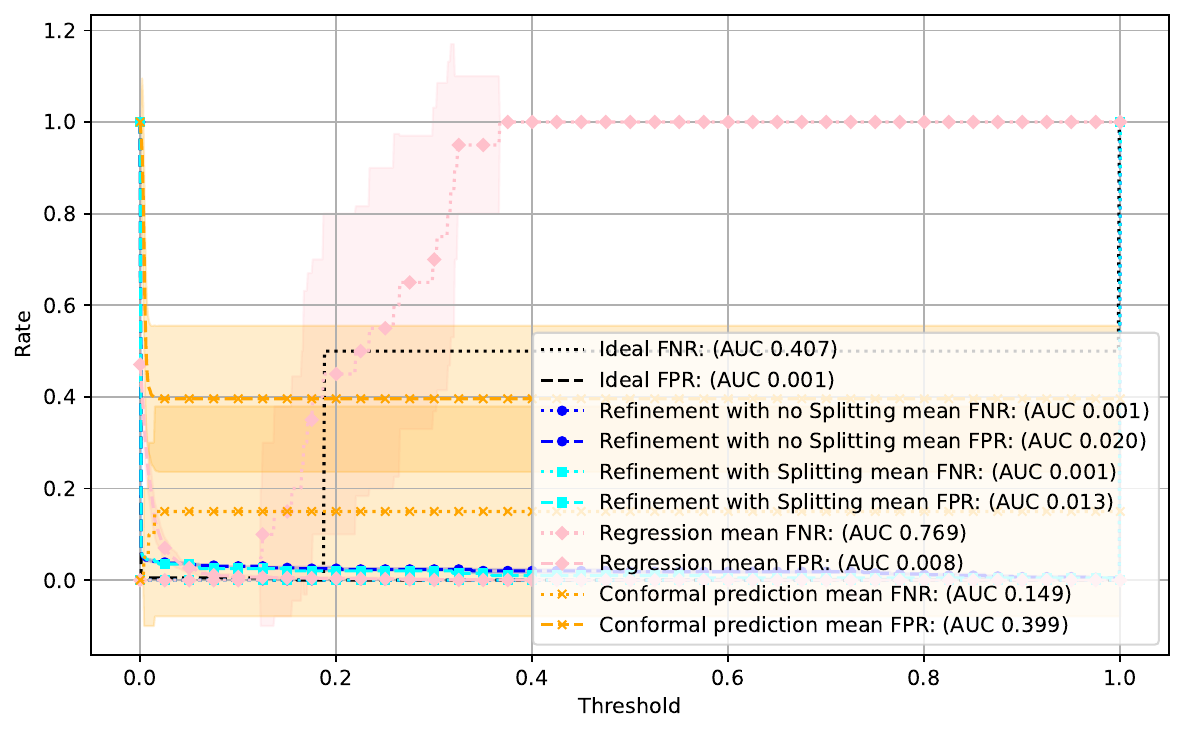}
    \caption{Coarse airportB-7-40-20, SC = 0.01 - FNR and FPR com-
parison between model-based and model-free methods}
    \label{ariB-corase_model_based_free}
\end{figure}

\begin{figure}[H]
    \centering
    \includegraphics[width=0.85\linewidth]{figures/model_based_vs_model_free/rq_3_airportB-7-40-20_coarse_high_st_FN_FP_model_based_vs_model_free.pdf}
    \caption{Coarse airportB-7-40-20, SC = 0.1 - FNR and FPR com-
parison between model-based and model-free methods}
    \label{}
\end{figure}

\subsection{iHMM vs HMM monitoring}

We include the results of the comparison of iHMM and HMM monitoring for airportA-7-40-20 and coarse airportB-7-40-20. 

\begin{figure}[H]
    \centering
    \includegraphics[width=0.85\linewidth]{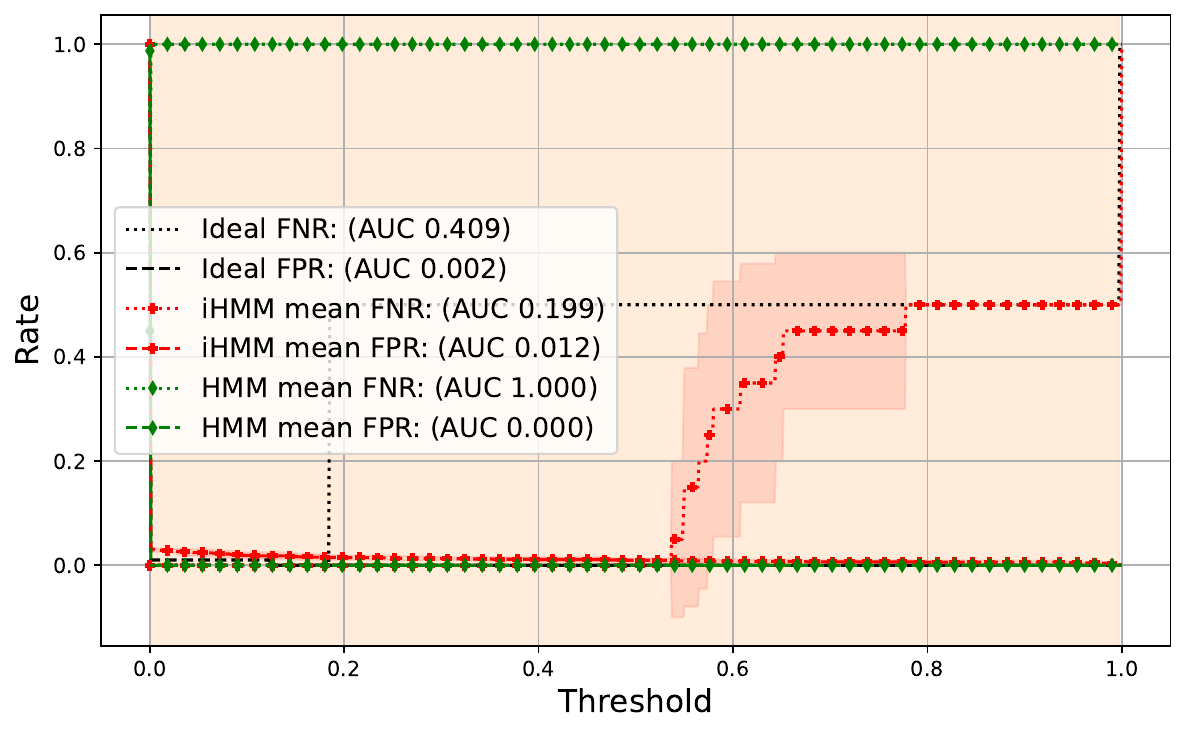}
    \caption{airportA-7-40-20, SC =0.01 - FNR and FPR comparison between iHMM and HMM}
    \label{}
\end{figure}

\begin{figure}[H]
    \centering
    \includegraphics[width=0.85\linewidth]{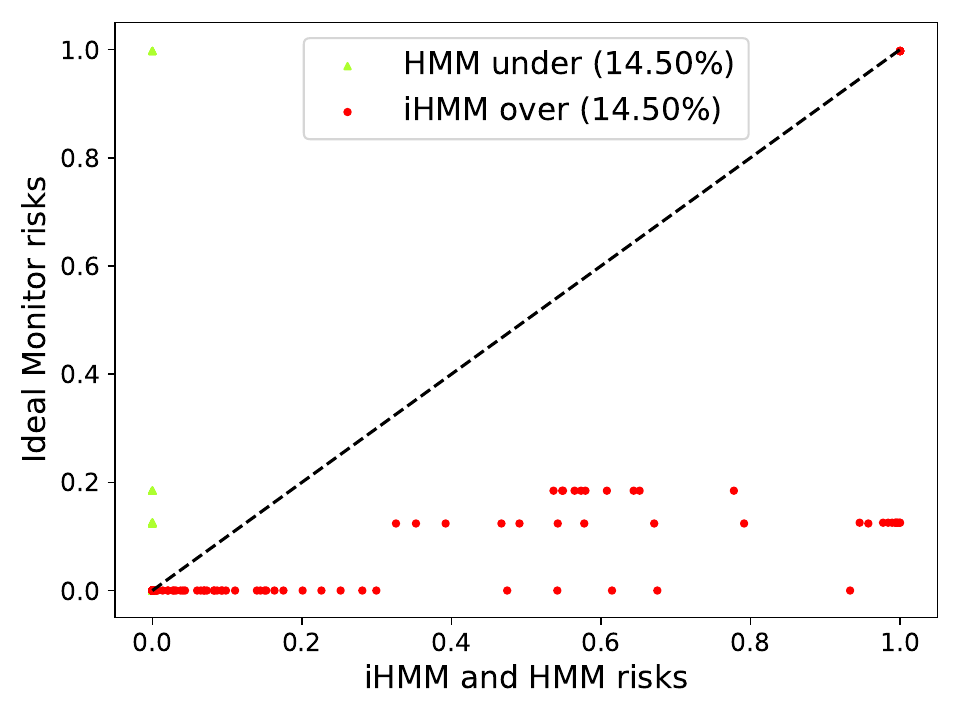}
    \caption{airportA-7-40-20, SC = 0.01 - risk estimation subject to target, comparison between iHMM and HMM0.0 0.2 0.4}
    \label{}
\end{figure}

\begin{figure}[H]
    \centering
    \includegraphics[width=0.85\linewidth]{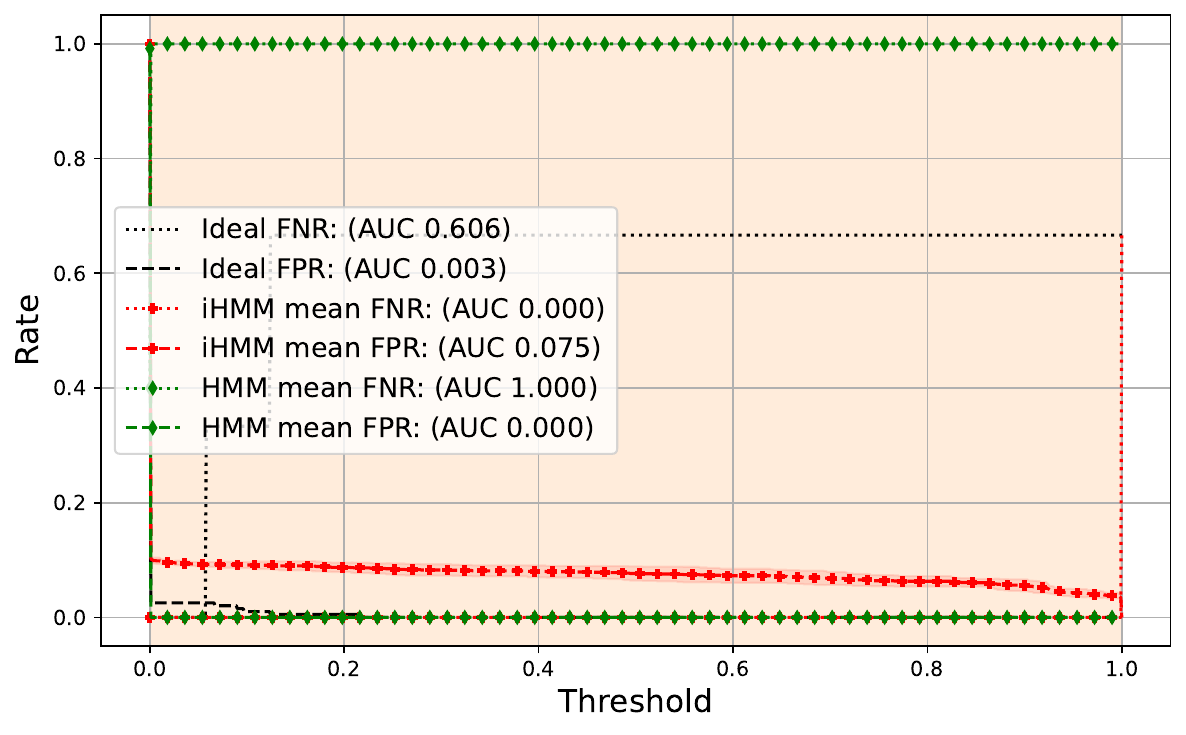}
    \caption{airportA-7-40-20, SC = 0.1 - FNR and FPR comparison between iHMM and HMM}
    \label{}
\end{figure}

\begin{figure}[H]
    \centering
    \includegraphics[width=0.85\linewidth]{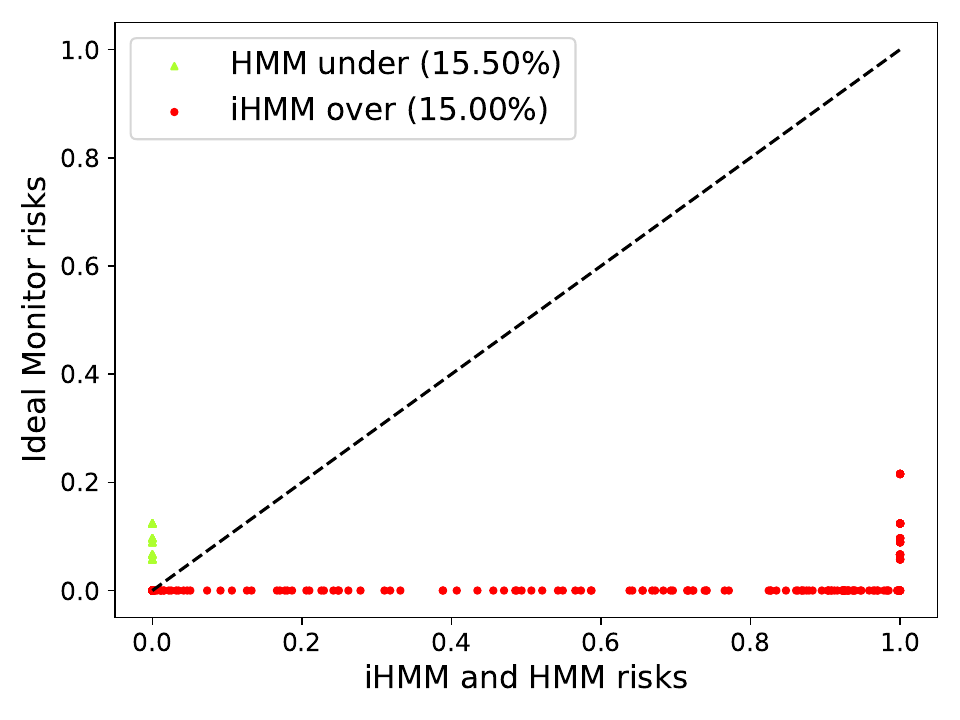}
    \caption{airportA-7-40-20, SC = 0.1 - risk estimation subject to target, comparison between iHMM and HMM0.0 0.2 0.4}
    \label{}
\end{figure}

\begin{figure}[H]
    \centering
    \includegraphics[width=0.85\linewidth]{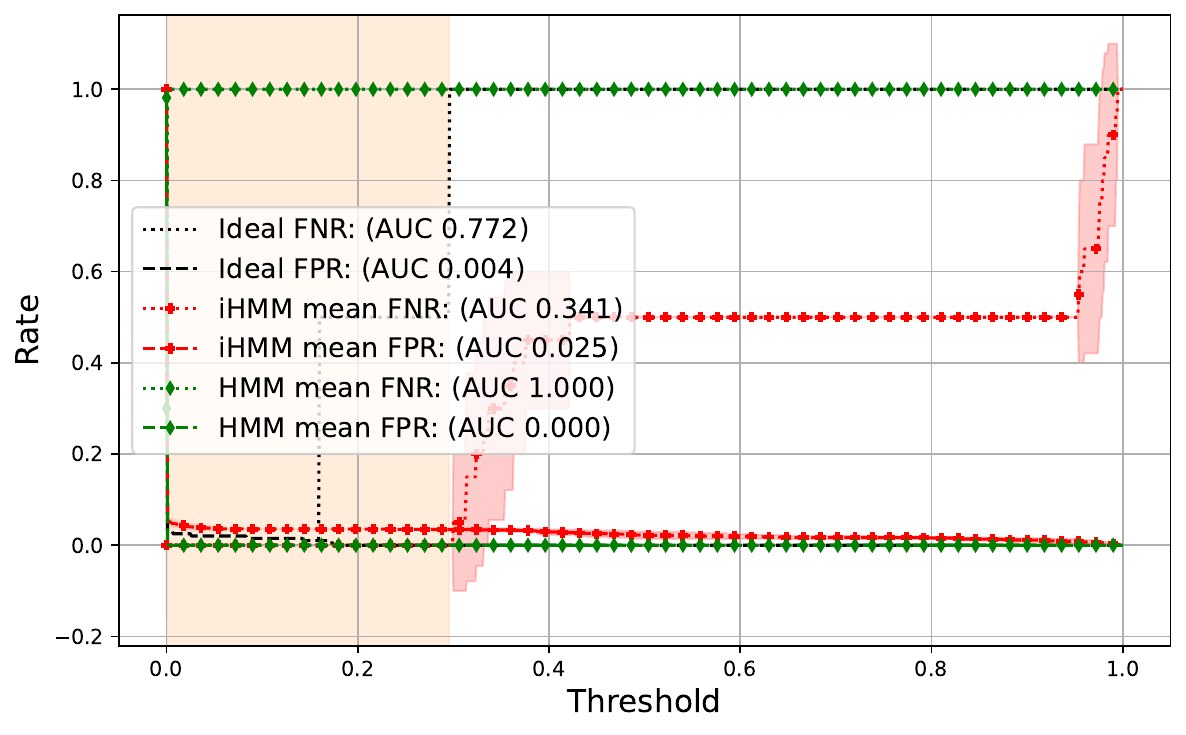}
    \caption{Coarse airportB-7-40-20, SC = 0.01 - FNR and FPR comparison between iHMM and HMM}
    \label{}
\end{figure}

\begin{figure}[H]
    \centering
    \includegraphics[width=0.85\linewidth]{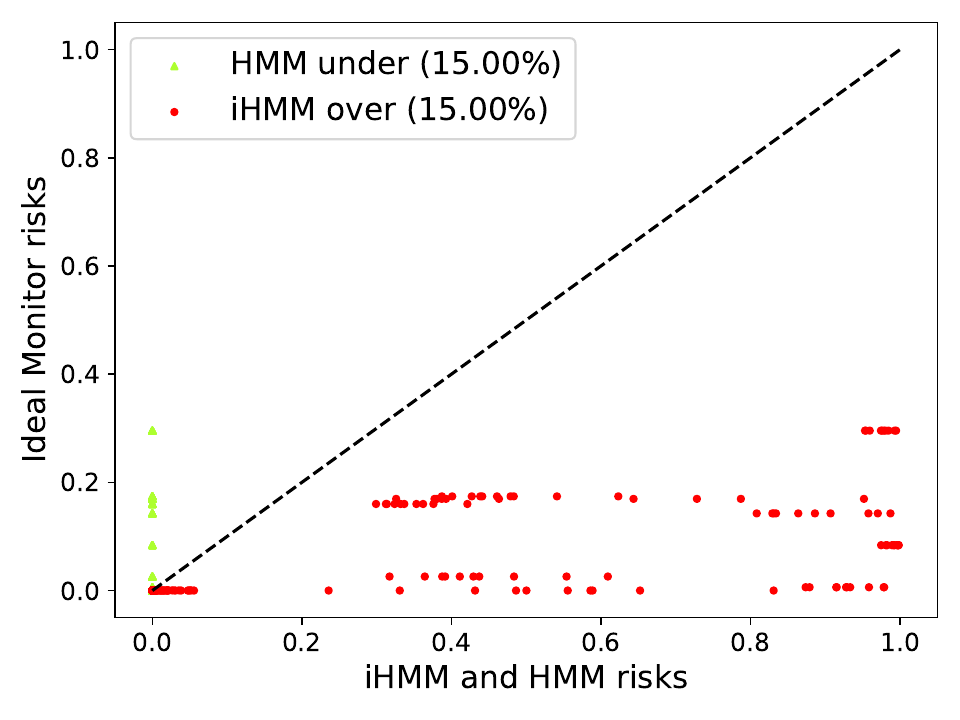}
    \caption{Coarse airportB-7-40-20, SC = 0.01 - risk estimation subject to target, comparison between iHMM and HMM0.0 0.2 0.4}
    \label{}
\end{figure}

\begin{figure}[H]
    \centering
    \includegraphics[width=0.85\linewidth]{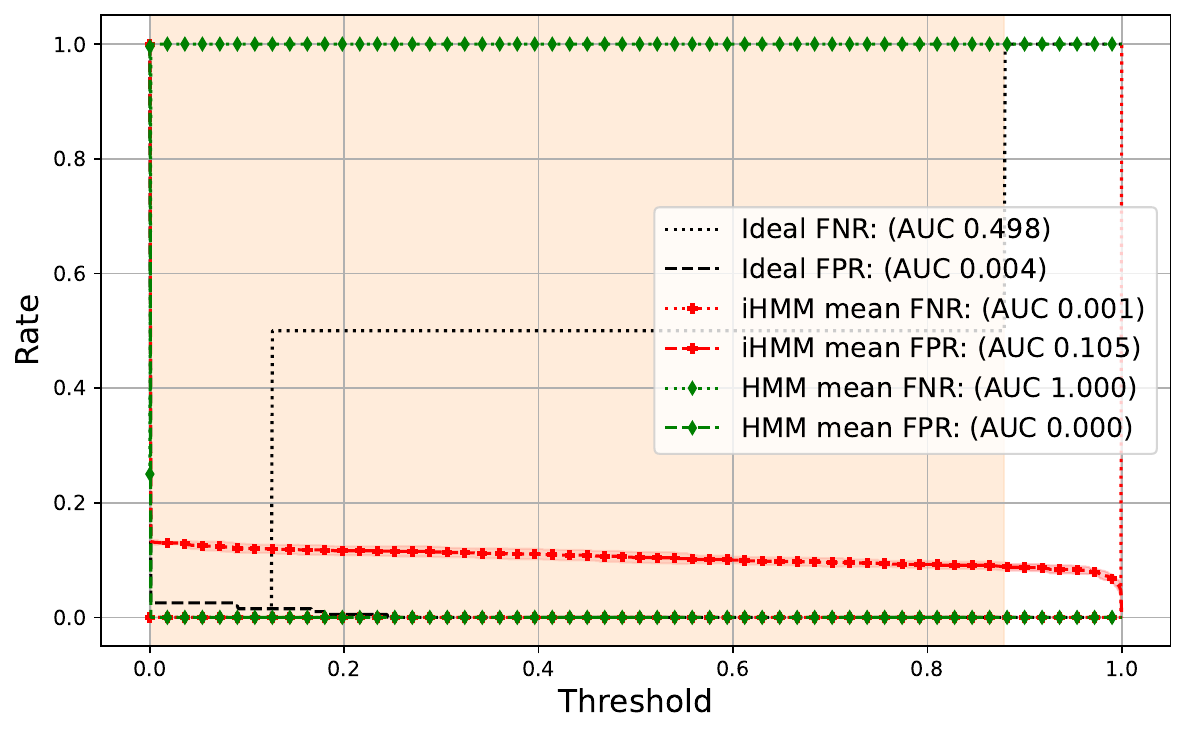}
    \caption{Coarse airportB-7-40-20, SC = 0.1 - FNR and FPR comparison between iHMM and HMM}
    \label{}
\end{figure}

\begin{figure}[H]
    \centering
    \includegraphics[width=0.85\linewidth]{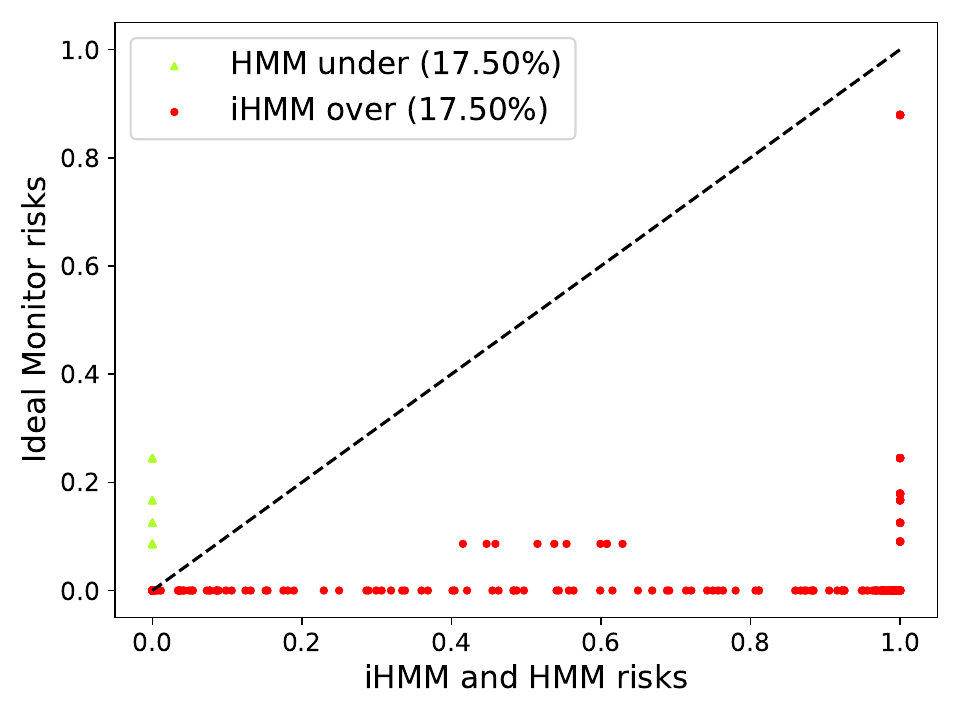}
    \caption{Coarse airportB-7-40-20, SC = 0,1 - risk estimation subject to target, comparison between iHMM and HMM0.0 0.2 0.4}
    \label{}
\end{figure}

\subsection{No refinement vs refinement learning}
We include the results of the comparison of refinement and no refinement learning for airportA-7-40-20 and coarse airportB-7-40-20. 

\begin{figure}[H]
    \centering
    \includegraphics[width=0.85\linewidth]{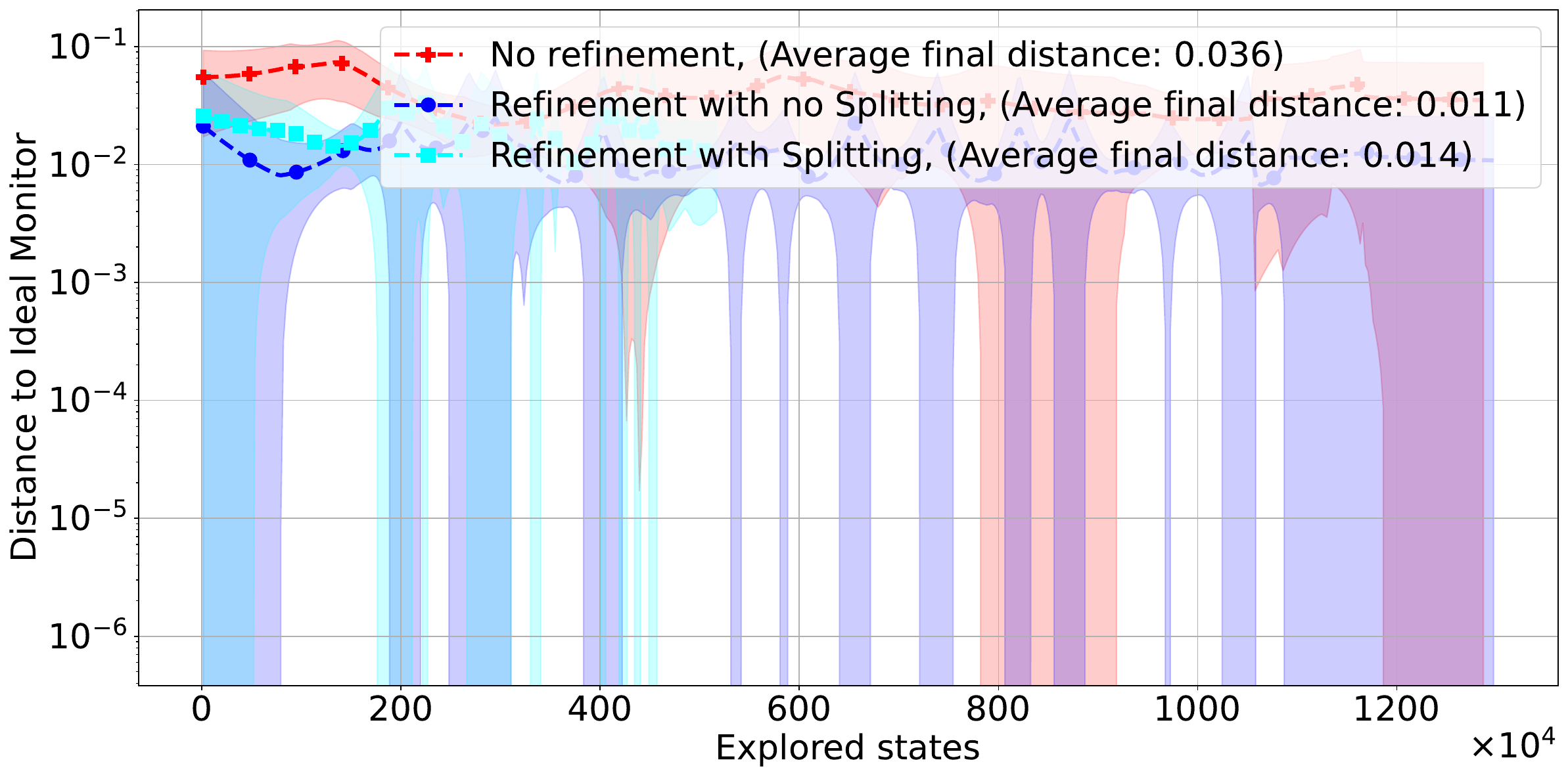}
    \caption{airportA-7-40-20, SC =0.01 - Comparison between learning with and without refinement, in terms of distance to Ideal Monitor}
    \label{}
\end{figure}

\begin{figure}[H]
    \centering
    \includegraphics[width=0.85\linewidth]{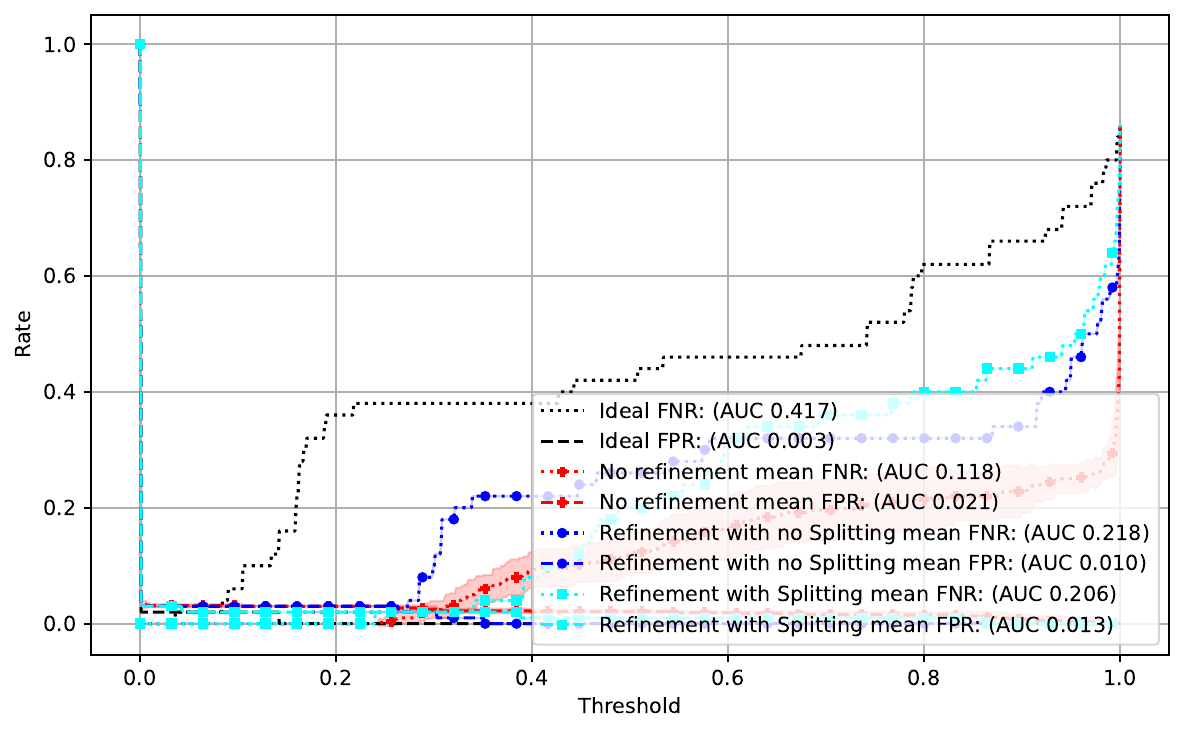}
    \caption{airportA-7-40-20, SC = 0.01 - Comparison between learning with and without refinement, in terms of FNR and FPR}
    \label{}
\end{figure}

\begin{figure}[H]
    \centering
    \includegraphics[width=0.85\linewidth]{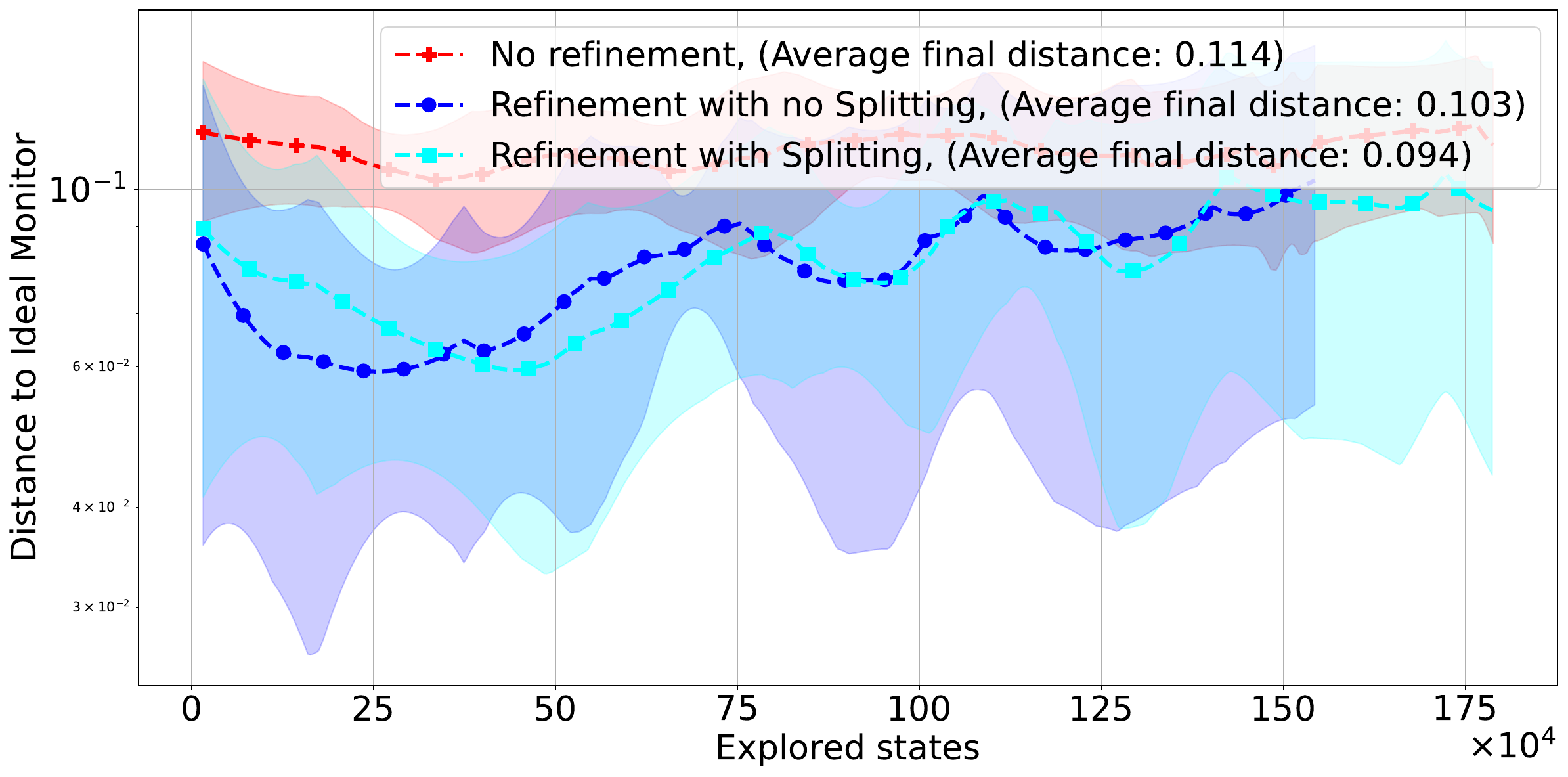}
    \caption{airportA-7-40-20, SC = 0.1 - Comparison between learning with and without refinement, in terms of distance to Ideal Monitor}
    \label{}
\end{figure}

\begin{figure}[H]
    \centering
    \includegraphics[width=0.85\linewidth]{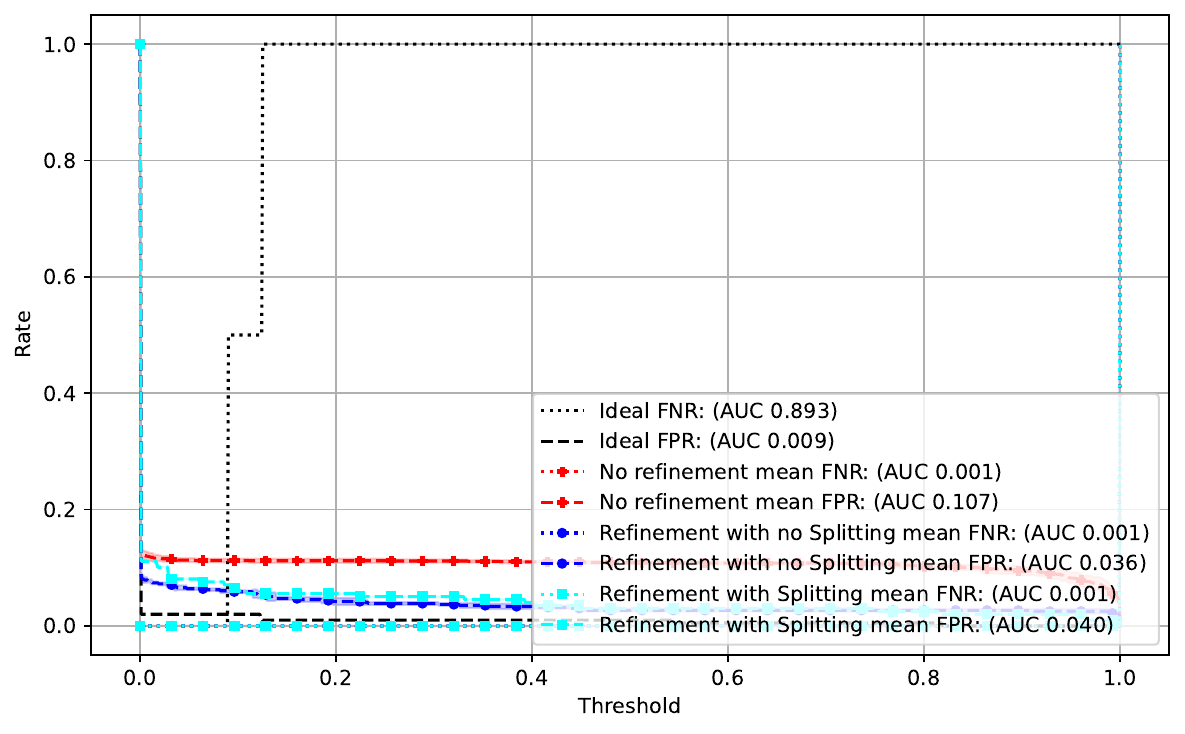}
    \caption{airportA-7-40-20, SC = 0.1 - Comparison between learning with and without refinement, in terms of FNR and FPR}
    \label{}
\end{figure}

\begin{figure}[H]
    \centering
    \includegraphics[width=0.85\linewidth]{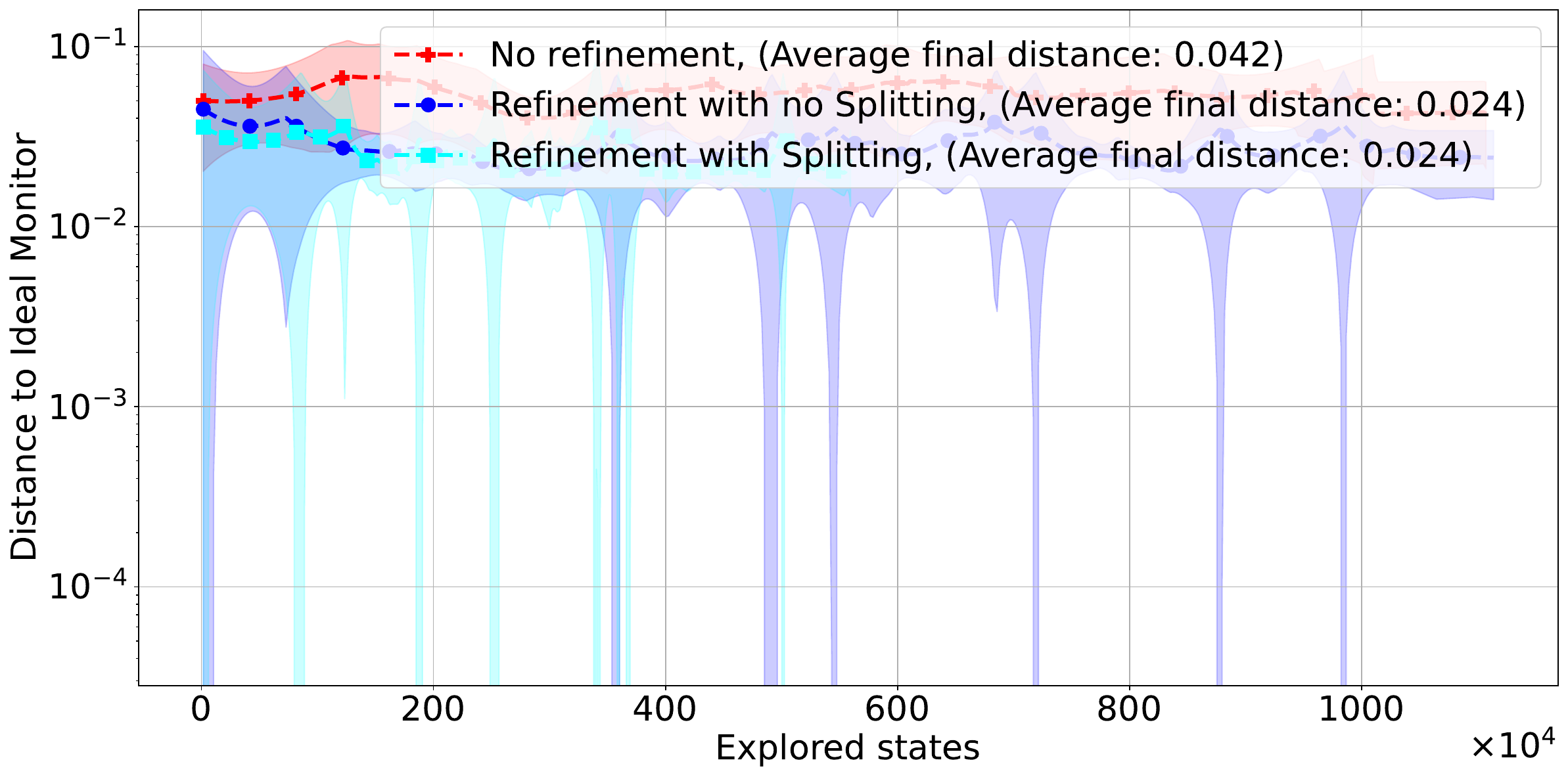}
    \caption{Coarse airportB-7-40-20, SC = 0.01 - Comparison between learning with and without refinement, in terms of distance to Ideal Monitor}
    \label{}
\end{figure}

\begin{figure}[H]
    \centering
    \includegraphics[width=0.85\linewidth]{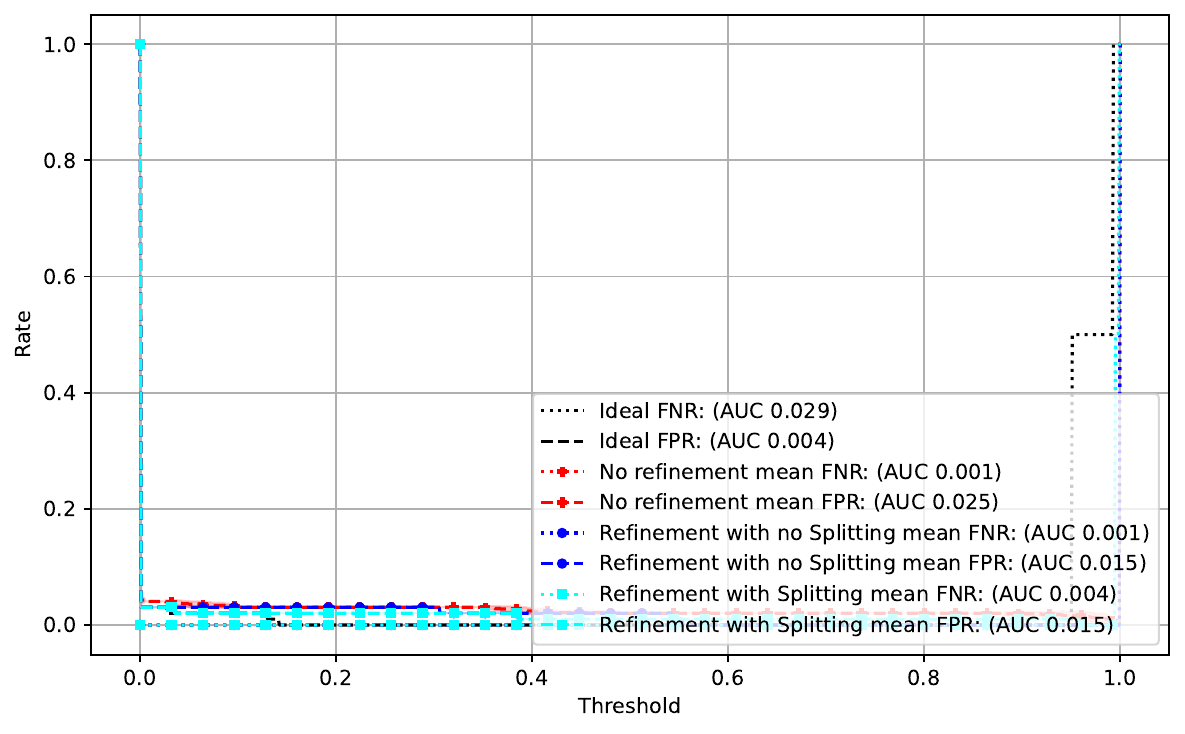}
    \caption{Coarse airportB-7-40-20, SC = 0.01 - Comparison between learning with and without refinement, in terms of FNR and FPR}
    \label{}
\end{figure}

\begin{figure}[H]
    \centering
    \includegraphics[width=0.85\linewidth]{figures/Refinement_vs_NoRefinement/rq_1_airportB-7-40-20_high_st_coarse_distance_to_RRF.pdf}
    \caption{Coarse airportB-7-40-20, SC = 0.1 - Comparison between learning with and without refinement, in terms of distance to Ideal Monitor}
    \label{}
\end{figure}

\begin{figure}[H]
    \centering
    \includegraphics[width=0.85\linewidth]{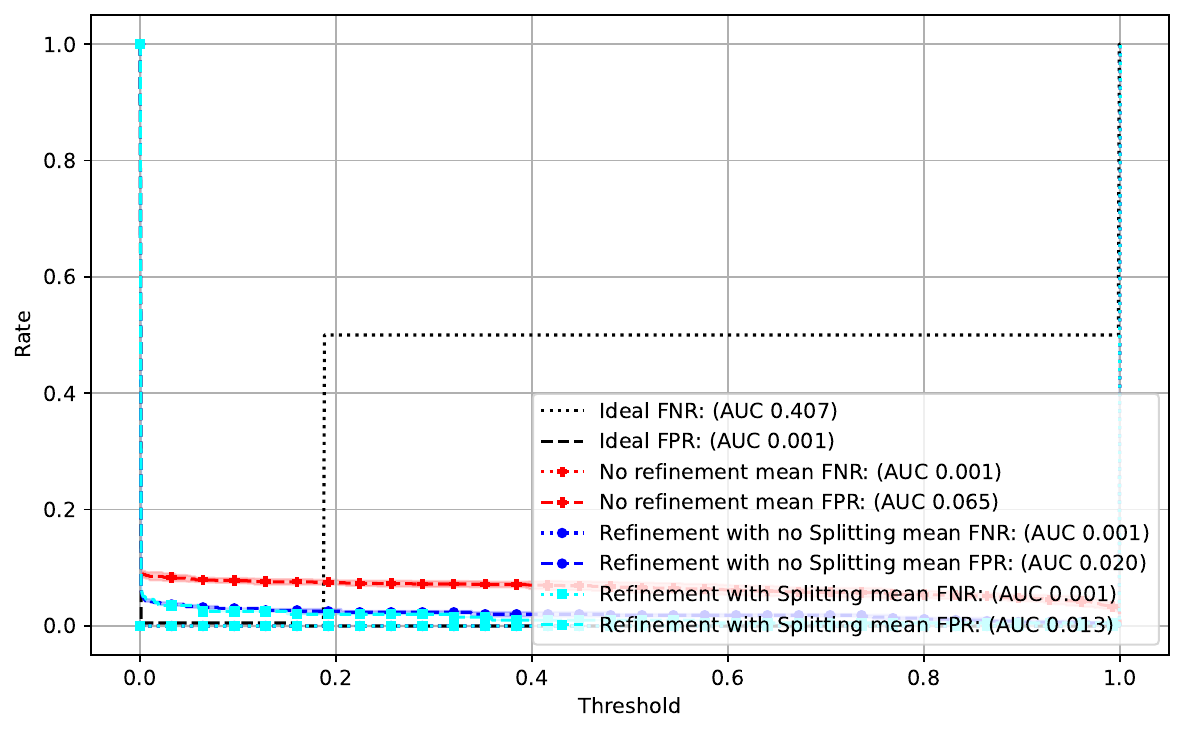}
    \caption{Coarse airportB-7-40-20, SC = 0.1 - Comparison between learning with and without refinement, in terms of FNR and FPR}
    \label{}
\end{figure}